\documentclass[aps,pra,reprint,groupedaddress]{revtex4-2}

\usepackage{amsmath,amssymb}
\usepackage{dsfont}
\newcommand{\Id}{\mathds{1}}
\usepackage{graphicx}
\usepackage{booktabs}
\usepackage{xcolor}
\usepackage{bm}
\usepackage{tikz}
\usetikzlibrary{arrows.meta,positioning}
\usepackage{hyperref}
\usepackage{amsthm}
\newtheorem{lemma}{Lemma}
\usepackage{braket}
\usepackage[normalem]{ulem}
\usepackage[most]{tcolorbox}

\begin{document}

\title{A Hardware-Native Trapped-Ion Hybrid Qudit--Oscillator Instruction Set}

\author{Masoud Hakimi Heris}
\affiliation{Department of Electrical and Computer Engineering, North Carolina State University, Raleigh, NC, USA}
\author{Yuan Liu}
\affiliation{Department of Electrical and Computer Engineering, North Carolina State University, Raleigh, NC, USA}
\affiliation{Department of Computer Science, North Carolina State University, Raleigh, NC, USA}
\affiliation{Department of Physics, North Carolina State University, Raleigh, NC, USA}
\author{Frank Mueller}
\affiliation{Department of Computer Science, North Carolina State University, Raleigh, NC, USA}

\date{\today}

\begin{abstract}
Scaling trapped-ion processors by adding ions remains challenging while
maintaining sufficient fidelity of long and wide circuits.
A complementary route exploits the
multilevel internal structure of each ion to increase the computational
Hilbert space per ion. Hybrid qubit--oscillator architectures already
retain bosonic modes as native oscillators, and we extend this paradigm by
compressing the discrete subsystem into a multilevel register as well.
We introduce a hardware-native hybrid qudit--oscillator instruction set
for a general $d$-level internal register coupled to bosonic modes and
prove that three of its native operation families suffice for universality
of one qudit coupled to several modes, with no synthesized mode-coupling
operation required. Adding an entangling operation between ions extends
universality to a connected register of several qudits. The architecture
natively supports both phase-space and sideband instruction sets with
cross-compilation between them and is implemented in the open-source
framework HybridLane.

As a representative application, we reproduce the dissipative vibronic
dynamics of a three-chromophore complex previously simulated on a hybrid
qubit--oscillator superconducting platform. Moving from a
one-qubit-per-state register to a single qudit stores the entire
electronic manifold in one ion whenever it fits within the usable levels,
so the computational register shrinks from one ion per chromophore to a
single ion, every electronic coupling becomes a local intra-ion
transition, and the lab-frame Hamiltonian compiles directly into the ISA
without a frame transformation.
Multitone drives and higher-rank conditional operations
further reduce the number of control blocks. For the three-chromophore
benchmark, one Trotter step uses $34$ ISA operations against $32$ for the
qubit register. Multitone compilation reduces the qudit implementation to
$27$ control blocks. Under the stated sequential timing model informed by
measured control parameters from a $^{40}$Ca$^+$ qudit processor, one
Trotter step is projected to run about ten times faster on the qudit. The
resulting dynamics are validated against the published dissipative
reference trajectories, with an average RMSE of approximately $1.1\%$.
When the manifold exceeds one ion, it is spread over several ions with a
two-ion exchange between them, and for long chains under this timing
model the projected gain settles at a constant set by the number of
couplings that stay inside one ion.
\end{abstract}

\maketitle


\section{Introduction}
\label{sec:introduction}
Reaching computational regimes beyond classical capabilities requires
increasing processor scale without sacrificing operation fidelity. For
trapped-ion systems, maintaining their exceptional control performance as
the computational register grows remains an important scaling
challenge~\cite{ransford2026, pino2021}. A complementary route enlarges the
computational space within each ion instead, by exploiting its rich internal
level structure~\cite{low2025qudit, ringbauer2022}.

Molecular vibronic dynamics~\cite{macdonell2021, valahu2023}, lattice gauge theories~\cite{davoudi2021, pagano2025}, and spin-boson transport models~\cite{so2024} all pair a small number of electronic or spin states with a large number of bosonic modes. On qubit-only hardware, each bosonic mode must instead be truncated and digitally encoded into a register, and both qubit count and depth grow with the cutoff~\cite{macdonell2021, vu2025}. Hybrid processors avoid this by keeping the modes native and putting the discrete states in a register of their own~\cite{liu2026isa, crane2024, kemper2025}.

Trapped ions are a natural fit. Internal electronic levels are long-lived qudits~\cite{ringbauer2022, low2020, low2025qudit, aksenov2023}, motional modes supply the oscillators~\cite{wineland1998, leibfried2003, sutherland2021}, and the shared motional bus connects any pair of registers without SWAP chains~\cite{molmer1999, sorensen2000}. A molecule has a manifold of electronic states dressed by vibrational modes, while an ion has a manifold of internal states dressed by motional modes. Distributing the electronic manifold across qubits discards that correspondence.

This correspondence has a direct resource consequence. Let
$d_{\rm eff}$ denote the number of internal levels of a chosen ion that are
experimentally usable for coherent computation under the required control
operations. A single-excitation manifold containing $N_c$ excited states and
one reference state can be encoded entirely within one ion whenever
\begin{equation}
N_c+2\le d_{\rm eff},
\end{equation}
where the extra level is an auxiliary sublevel reserved for the
level-selective mode operations (Sec.~\ref{sec:encoding}). Our
$^{40}$Ca$^+$ layout holds at most five excited states per ion.
In that regime, a one-qubit-per-state representation requires $N_c$ 
computational qubits, whereas the qudit representation requires one
computational ion. The resulting advantage is therefore not primarily a
reduction in logical operation count, but more efficient use of the
discrete-variable hardware. As such, a spatially distributed register is replaced by
a spectrally addressed multilevel register within one ion while the bosonic
degrees of freedom remain native. When the manifold does not fit, it is
spread over several ions and the compression becomes a constant factor
per ion (Sec.~\ref{sec:multiion}). This is the resource
principle motivating the hybrid qudit--oscillator instruction set.

Universal control of a bosonic mode was settled by Lloyd and Braunstein, who showed that Gaussian operations plus one non-Gaussian element generate any transformation~\cite{lloyd1999}. The same criterion, phrased through Lie-algebra closure, runs through the control-theoretic literature~\cite{brif2010}. Neither says anything about hardware, or about a discrete register.

Experiments then built those operations in hardware. In circuit QED, dispersive coupling to an ancilla qubit gives universal oscillator control through number-selective phase gates~\cite{krastanov2015, heeres2015} or, faster, through conditional displacements~\cite{eickbusch2022}. On trapped ions, spin-dependent forces coupling an internal transition to the motion supply both the Gaussian and the non-Gaussian operations~\cite{sutherland2021, katz2023, gan2020}, and the same interactions encode logical qubits directly in the motion of a single ion~\cite{fluhmann2019}. Throughout, the discrete register is a qubit.

Qudit control on ions has matured on a separate track. Universal single-qudit operations have been demonstrated on ${}^{40}$Ca$^+$~\cite{ringbauer2022} and ${}^{171}$Yb$^+$~\cite{aksenov2023, zalivako2024}, native two-qudit entangling gates up to $d=5$~\cite{hrmo2023}, and control and readout of higher-dimensional registers on ${}^{137}$Ba$^+$~\cite{low2025qudit, shi2025}. The compilation advantages over qubits are by now well catalogued~\cite{kiktenko2025}. These developments leave open a trapped-ion computational architecture that
combines qudit control with qudit--oscillator operations and establishes the
resulting instruction set as a universal Hamiltonian-generation model
(Fig.~\ref{fig:landscape}).

Liu \textit{et al.} organized hybrid oscillator--qubit processors into instruction set architectures, cataloguing the phase-space, Fock-space, and sideband sets and their costs~\cite{liu2026isa}. The framework is deliberately platform-agnostic, and its primitives, in particular a native beam splitter, come from superconducting cavities. Hybrid oscillator--qudit registers have also been studied from a
coding-theoretic perspective, where stabilizer states, stabilizer codes, and
symplectic operations are constructed for a $d$-level system coupled to a
mode~\cite{chakraborty2026}. That work develops the structure of the hybrid
phase space rather than a hardware instruction set for a specific platform.

A concrete application built on a qubit register is the simulation of
molecular vibronic dynamics, where a few electronic states couple to
vibrational modes~\cite{vu2025}. Ref.~\cite{vu2025} distributes the
electronic manifold across a one-hot qubit register and works in a displaced
frame, where the vibronic coupling takes the form
$\sigma_z^R(\hat a+\hat a^\dagger)$ and matches the conditional displacement
used by that implementation. In the qudit encoding, the complete electronic manifold is stored within one
ion, so $\Pi_R$ is a native level projector, and electronic couplings between
encoded states become local intra-ion transitions. 

\begin{figure}[t]\centering
\includegraphics[width=\columnwidth]{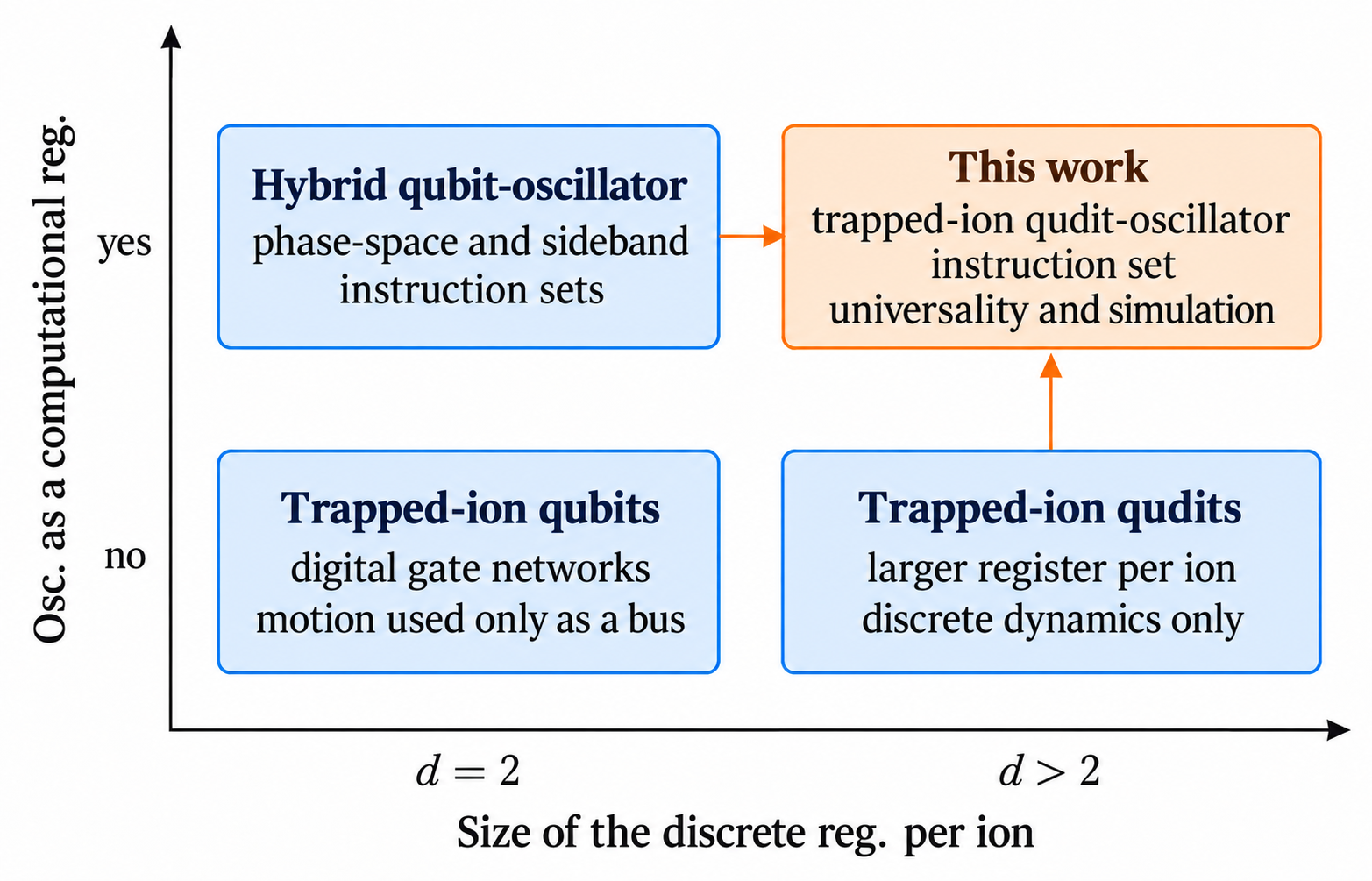}
\caption{Position of this work. Hybrid instruction sets pair a qubit with one
or more oscillators, while trapped-ion qudit control enlarges the discrete
register but keeps the dynamics discrete. This work connects these
directions as follows: A trapped-ion instruction set in which a $d$-level qudit and its
motional modes are both computational registers, with level-selective
couplings between them and a universality guarantee.}
\label{fig:landscape}
\end{figure}

We build the register the molecule already resembles. The primitives are the
operators in which such problems are naturally written as level-selective
phases, transitions between individual levels, and displacements and
rotations of a mode conditioned on a single level. The lab-frame vibronic
Hamiltonian therefore maps onto them term by term, with no displaced-frame
transformation. Five ion-native primitives provide local qudit--oscillator control. Independently of that application, three of them, the qudit transition ($X_E$), qudit phase shift ($Z_d$), and
level-conditioned displacement ($C_{\Pi_z}D$),
suffice to generate Hamiltonians formed from arbitrary qudit operators tensored with polynomial functions of the quadratures of all modes, so multimode universality requires no synthesized operation. An ancilla-mediated synthesized beam splitter is added as an efficient Gaussian mode-mixing primitive, and $C_{\Pi_x}D$ places the phase-space and sideband models of Ref.~\cite{liu2026isa} within the same hardware-native architecture. We implement the architecture in HybridLane~\cite{liu_hybridlane2025, pennylane2018}, recover the dissipative dynamics of the three-chromophore model of Ref.~\cite{vu2025} on a single ququart, and quantify both the benefits and limitations of the qudit register.

Sections~\ref{sec:isa}--\ref{sec:crosscompile} introduce the physical setting, instruction set, universality, and cross-compilation; Sections~\ref{sec:hybridlane}--\ref{sec:results} present the implementation and vibronic validation; Sections~\ref{sec:resources}--\ref{sec:conclusion} analyze resources, limitations, and conclusions.



\section{A Native Hybrid Qudit--Oscillator Instruction Set}
\label{sec:isa}
The instruction set consists of five hardware-native ISA primitives acting on
the qudit and its motional modes, namely the qudit phase shift ($Z_d$), the qudit
transition ($X_E$), the level-conditioned displacement ($C_{\Pi_z}D$), the
level-conditioned mode rotation ($C_{\Pi}R$), and the transition-conditioned displacement ($C_{\Pi_x}D$).
Here, ``hardware-native ISA primitive'' means a logical operation constructed
directly from trapped-ion internal-state and spin--motion controls, rather
than through a digital decomposition into two-qubit gates~\cite{liu2026isa}.
We use ``ISA operation'' for one
logical operation of a listed primitive or of the standard unconditional
mode rotation $R(\theta)=\exp(-i\theta\hat n)$, its pulse-level implementation
is platform-dependent. The beam splitter introduced below is a synthesized
ISA operation rather than a native primitive. The two-ion exchange, used
only when the manifold spans several ions, is described at the end of this
section. Table~\ref{tab:isa} collects the five single-ion primitives and
their physical control mechanisms.

The set instantiates the phase-space instruction set of Liu \textit{et al.}~\cite{liu2026isa}, consisting of the conditional displacement (CD), the mode rotation ($R_\phi$), and the beam splitter (BS), $\{CD,R_\phi,BS\}$,
with the discrete register lifted from a qubit to a qudit. The mode rotation $R_\phi$ is retained as the unconditional rotation $R(\theta)$, $X_E$ and $Z_d$ additionally supply native $\mathrm{SU}(d)$ control of the discrete register, $C_{\Pi_z}D$ is the qudit-level realization of the entangling primitive $CD$, and $BS$ is synthesized from second-order spin--motion interactions~\cite{sutherland2021, gan2020} (Sec.~\ref{sec:bs}). $C_{\Pi}R$, $C_{\Pi_x}D$, and the synthesized beam splitter plays no role in the universality argument of Sec.~\ref{sec:universality}, which rests on $X_E$, $Z_d$, and $C_{\Pi_z}D$ alone. They are kept because $C_{\Pi}R$ and $C_{\Pi_x}D$ compile the dispersive and link-modulated Hamiltonian terms directly (Sec.~\ref{sec:ham}), $C_{\Pi_x}D$ makes the sideband model native (Sec.~\ref{sec:crosscompile}), and the beam splitter provides efficient Gaussian mode mixing compatible with the phase-space instruction set of Ref.~\cite{liu2026isa}.

\begin{table*}[t]
  \centering
  \begin{tabular}{ll@{\hspace{1.2em}}l@{\hspace{1.2em}}l}
    \toprule
    Gate & Generator & Generalized generator & Physical basis \\
    \midrule

    $Z_d(\theta,\ell)$
    & $\ket{\ell}\!\bra{\ell}$
    & $\displaystyle \sum_{\ell}\theta_\ell\ket{\ell}\!\bra{\ell}$
    & virtual-$Z$ \\

    $X_E(\theta,\ell_i,\ell_j)$
    & $\ket{\ell_i}\!\bra{\ell_j}+\mathrm{h.c.}$
    & $\displaystyle
      \sum_{(i,j)\in E}J_{ij}
      \left(\ket{i}\!\bra{j}+\mathrm{h.c.}\right)$
    & carrier drive \\

    $C_{\Pi_z}D(\alpha,\ell)$
    & $i\ket{\ell}\!\bra{\ell}\!\otimes\!(\alpha\hat a^\dagger-\alpha^*\hat a)$
    & $\displaystyle
      i\sum_{\ell\in S}\ket{\ell}\!\bra{\ell}
            \otimes(\alpha\hat a^\dagger-\alpha^*\hat a)$
    & state-dependent force \\

    $C_{\Pi}R(\theta,\ell)$
    & $\ket{\ell}\!\bra{\ell}\otimes\hat n$
    & $\displaystyle
      \sum_{\ell\in S}\ket{\ell}\!\bra{\ell}\otimes\hat n$
    & dispersive sideband shift \\

    $C_{\Pi_x}D(\theta,\ell_i,\ell_j)$
    & $(\ket{\ell_i}\!\bra{\ell_j}+\mathrm{h.c.})\otimes\hat x$
    & $\displaystyle
    \sum_{(i,j)\in E}
    \theta_{ij}
    \left(\ket{i}\!\bra{j}+\mathrm{h.c.}\right)\otimes\hat x$
    & red/blue sidebands \\
    
    \bottomrule
  \end{tabular}

  \caption{Native ISA primitives, their generalized multilevel forms, and physical control mechanisms. Here $E$ denotes the set of simultaneously addressed qudit transitions and $S$ the selected set of qudit levels.}

  \label{tab:isa}
\end{table*}

\subsection{Background}
\label{sec:background}
\emph{Qudits and qumodes.} A qudit is a $d$-level register with basis $\ket{0},\dots,\ket{d-1}$, generalizing the qubit ($d=2$) and carrying the operator algebra $\mathfrak{su}(d)$~\cite{ringbauer2022, low2020, kiktenko2025}. A qumode is a bosonic mode with annihilation operator $\hat a$, number states $\ket{n}$, and quadratures $\hat x=\hat a+\hat a^\dagger$ and $\hat p=i(\hat a^\dagger-\hat a)$~\cite{lloyd1999, chakraborty2026}. The hybrid register occupies the joint space $\mathbb{C}^d\otimes\mathcal{F}$, where $\mathcal{F}$ is the bosonic Fock space (numerical simulations use a finite cutoff), and its operations couple discrete transitions of the qudit to phase-space motion of the mode~\cite{liu2026isa, crane2024}.

\emph{Trapped-ion interactions.} The five ISA primitives are built from
established trapped-ion internal-state and spin--motion controls
~\cite{wineland1998,leibfried2003,low2020,sutherland2021}. A resonant drive
on an internal transition acts as a carrier gate rotating two qudit levels.
A drive tuned to a red or blue motional sideband gives the
Jaynes--Cummings or anti--Jaynes--Cummings coupling between that transition
and a mode~\cite{sutherland2021}. Bichromatic drives provide
state-dependent motional forces~\cite{molmer1999,sorensen2000,low2020}, and the same field configuration yields a $\sigma_z$ force by choice of the beat-note frequency~\cite{bazavan2023},
while far-detuned sideband interactions produce phonon-number-dependent
dispersive shifts~\cite{wineland1998,katz2023,choi2026}.
The following subsections assemble these physical interactions into the
instruction set.

\emph{Universality.} Here, universality refers to Hamiltonian
generation, where the available generators and their repeated
commutators generate Hamiltonians formed from arbitrary qudit
operators tensored with finite polynomials in the oscillator
quadratures~\cite{lloyd1999, brif2010}. In infinite dimensions this
does not mean exact synthesis of every unitary in finite
depth. Section~\ref{sec:universality} establishes the
polynomial-Hamiltonian result, while Sec.~\ref{sec:resources}
addresses the separate question of cost.

\subsection{$Z_d$: Qudit Phase Shift}

$Z_d$ applies a controllable phase to a selected qudit level,
\begin{equation}
Z_d(\theta,\ell) = \exp\!\left(-i\theta\,\ket{\ell}\!\bra{\ell}\right),
\end{equation}
leaving every other level untouched. It is the qudit generalization of the qubit virtual-$Z$ gate~\cite{mckay2017}, implemented as a virtual phase update on the drives addressing level $\ell$ rather than as a physical transition. The $Z_d$ projectors span the diagonal sector of $\mathfrak{u}(d)$, while their traceless differences span the diagonal sector of $\mathfrak{su}(d)$.

More generally, independent phase shifts on several qudit levels can be
combined exactly into a single diagonal operation,
\begin{equation}
Z_d(\boldsymbol{\theta})
=
\exp\!\left[
-i\sum_{\ell=0}^{d-1}\theta_\ell
\ket{\ell}\!\bra{\ell}
\right].
\end{equation}
As $Z_d$ is realized virtually, the generalized operation is implemented by
coordinated phase updates of the frequency components addressing the relevant
transitions in the subsequent multitone control waveform, rather than by
additional physical pulses.

\subsection{$X_E$: Qudit Transition}
$X_E$ coherently couples two levels of the qudit,
\begin{equation}
X_E(\theta,\ell_i,\ell_j) = \exp\!\left[-i\theta\left(\ket{\ell_i}\!\bra{\ell_j}+\ket{\ell_j}\!\bra{\ell_i}\right)\right],
\end{equation}
driving a Rabi rotation within the $\{\ket{\ell_i},\ket{\ell_j}\}$ subspace while leaving the remaining levels untouched. This is realized by a resonant carrier drive on the corresponding internal transition~\cite{ringbauer2022, low2020}, with $\theta$ set by the pulse area and the level pair selected by the drive frequency. $X_E$ supplies the off-diagonal generators. Together with $Z_d$, it closes on $\mathfrak{su}(d)$, giving native single-qudit control~\cite{aksenov2023} (Sec.~\ref{sec:sud}).

More generally, multiple qudit transitions can be driven simultaneously by a
multitone carrier field,
\begin{equation}
X_{E}(\boldsymbol{\theta})
=
\exp\!\left[
-i\sum_{(i,j)\in E}
\theta_{ij}
\left(
\ket{i}\!\bra{j}+\ket{j}\!\bra{i}
\right)
\right],
\end{equation}
where each addressed transition is selected by its frequency, while its
rotation angle and phase are controlled by the corresponding tone.
When the addressed transitions share a level, as $A\!\leftrightarrow\!B$ and
$A\!\leftrightarrow\!C$ do, their generators do not commute, so the summed exponential above is not
equal to the product of the two sequential $X_E$ exponentials. In that case
the multitone waveform is synthesized to reproduce the desired sequential
$X_E$ unitary, so the block is a compiled control primitive rather than an
algebraic identity. Compatible $X_E$ operations may then be scheduled within a
single carrier-control block.

\subsection{$C_{\Pi_z}D$: Level-Conditioned Displacement}
$C_{\Pi_z}D$ applies a phase-space displacement of a mode conditioned on the qudit occupying a chosen level,
\begin{equation}
C_{\Pi_z}D(\alpha,\ell) = \exp\!\left[\ket{\ell}\!\bra{\ell}\otimes\left(\alpha\,\hat a^\dagger-\alpha^*\hat a\right)\right],
\end{equation}
displacing the mode by $\alpha$ when the qudit is in $\ket{\ell}$ and leaving
it fixed otherwise. Generalized oscillator--qudit conditional displacements
have been studied directly in hybrid qudit--oscillator settings
~\cite{evenhaim2025,chakraborty2026}. In trapped ions, multilevel
M{\o}lmer--S{\o}rensen-type interactions provide state-dependent motional
forces~\cite{molmer1999,sorensen2000,low2020}, the same bichromatic field yields the $\sigma_z$ force used here when its beat note is set to the motional frequency~\cite{bazavan2023}, while
projector-conditioned motional interactions have also been demonstrated
experimentally~\cite{gan2020}. These results establish the physical
ingredients for $C_{\Pi_z}D$. For a specific multilevel encoding, however, obtaining
the exact single-level projector $\Pi_\ell=\ket{\ell}\!\bra{\ell}$ can require
a calibrated combination of level-selective controls rather than a single
bichromatic tone. At the ISA level, $C_{\Pi_z}D$ is the projector-conditioned
operation defined above and is the qudit-level realization of the conditional
displacement $CD$ that anchors Liu's phase-space instruction set
~\cite{liu2026isa}. Applied to a qudit prepared in a superposition of levels,
it entangles the discrete register with the bosonic mode and supplies the
non-Gaussian resource used in the universality construction of
Sec.~\ref{sec:entangle}.

The conditioning can be generalized from a single level to an arbitrary set
$S$ of qudit levels,
\begin{equation}
C_{\Pi_z}D(\alpha,S)
=
\exp\!\left[
\left(
\sum_{\ell\in S}\ket{\ell}\!\bra{\ell}
\right)
\otimes
\left(
\alpha\hat a^\dagger-\alpha^*\hat a
\right)
\right].
\end{equation}
When the same displacement is conditioned on several levels of the same mode,
the corresponding rank-one $C_{\Pi_z}D$ operations can therefore be combined into a
single higher-rank conditional displacement.

\subsection{$C_{\Pi}R$: Level-Conditioned Mode Rotation}
$C_{\Pi}R$ rotates a mode in phase space by an angle proportional to its occupation number, at a rate set by the qudit level,
\begin{equation}
C_{\Pi}R(\theta,\ell) = \exp\!\left(-i\theta\,\ket{\ell}\!\bra{\ell}\otimes\hat n\right),
\qquad \hat n=\hat a^\dagger\hat a,
\end{equation}
acting only when the qudit occupies $\ket{\ell}$. Its physical basis is the
phonon-number-dependent dispersive shift produced by a far-detuned motional
sideband~\cite{wineland1998,katz2023,choi2026}. In a multilevel encoding,
obtaining exact $\Pi_\ell$ selectivity can require a calibrated combination of
level-selective dispersive controls rather than a single detuned tone. At the
ISA level, $C_{\Pi}R$ denotes the projector-conditioned operation defined above. Its
role in this work is to compile the dispersive terms of the vibronic
Hamiltonian directly (Sec.~\ref{sec:ham}).

The conditioning can likewise be generalized from a single level to an
arbitrary set $S$ of qudit levels,
\begin{equation}
C_{\Pi}R(\theta,S)
=
\exp\!\left[
-i\theta
\left(
\sum_{\ell\in S}\ket{\ell}\!\bra{\ell}
\right)
\otimes\hat n
\right].
\end{equation}
When the same mode rotation is conditioned on several qudit levels, the
corresponding rank-one $C_{\Pi}R$ operations can therefore be combined into a single
higher-rank conditional rotation.

\subsection{$C_{\Pi_x}D$: Transition-Conditioned Displacement}
$C_{\Pi_x}D$ couples a qudit transition directly to a bosonic mode. Here
$\Pi_x$ denotes the transition operator $X_{ij}=\ket{i}\!\bra{j}+\ket{j}\!\bra{i}$,
the off-diagonal counterpart of the level projector $\Pi_\ell$, so the mode
is displaced in opposite directions for the two eigenstates of the
transition. The operation reads
\begin{align}
C_{\Pi_x}D(\theta,\ell_i,\ell_j) &= \exp\!\left[-i\theta\left(\ket{\ell_i}\!\bra{\ell_j}+\ket{\ell_j}\!\bra{\ell_i}\right)\otimes\hat x\right], \nonumber \\
\hat x &=\hat a+\hat a^\dagger,
\end{align}
coupling the $\{\ket{\ell_i},\ket{\ell_j}\}$ transition to the mode. Because $\hat x=\hat a+\hat a^\dagger$, $C_{\Pi_x}D$ naturally contains both the Jaynes--Cummings and anti-Jaynes--Cummings interactions, corresponding to the red and blue sidebands, whose combination implements the quantum Rabi interaction~\cite{sutherland2021, gan2020}. It is realized by simultaneous red and blue sideband drives on the chosen transition. Driving only the red sideband, in contrast, yields the excitation-conserving Jaynes--Cummings interaction alone, as required by the sideband instruction set.

For the optical $S_{1/2}\!\leftrightarrow\!D_{5/2}$ transitions of the encoding
below, the spin--motion coupling is supplied directly by the $729$~nm
sideband drives~\cite{meth2025}. At the operating angle of the vibronic
circuit, $C_{\Pi_x}D$ is realized as a single flat bichromatic pulse
(Appendix~\ref{app:pulses}). At large rotation angles the co-driven carrier
becomes significant and additional compensation is required, which we do
not optimize here.

By making sideband interactions native alongside the phase-space primitives, $C_{\Pi_x}D$ unifies the two computational models within a single hardware-native instruction set, enabling the cross-compilation developed in Sec.~\ref{sec:crosscompile}.

The same construction can be extended to multiple qudit transitions coupled
to the same bosonic mode,
\begin{equation}
C_{\Pi_x}D_{E}(\boldsymbol{\theta})
=
\exp\!\left[
-i\sum_{(i,j)\in E}
\theta_{ij}
\left(
\ket{i}\!\bra{j}+\ket{j}\!\bra{i}
\right)\otimes\hat x
\right].
\end{equation}
This multitransition form corresponds to simultaneously addressing the
red- and blue-sideband components of the selected qudit transitions with
independently controlled tones.

\subsection{The Synthesized Beam Splitter}
The phase-space instruction set $\{CD, R_\phi, BS\}$ of
Ref.~\cite{liu2026isa} additionally lists an unconditional beam
splitter coupling two modes. As shown in Sec.~\ref{sec:universality},
the beam splitter is not required for universality of this
architecture, which follows from $X_E$, $Z_d$, and $C_{\Pi_z}D$
alone. We nevertheless provide it as an efficient Gaussian mode-mixing
primitive, so that programs written for the phase-space set run
without modification. It is synthesized from spin--motion
interactions, i.e., two state-dependent-force interactions driven on
modes $j\neq j'$ with equal detuning combine at second order into a
spin-dependent beam-splitter Hamiltonian, following Sutherland and
Srinivas~\cite{sutherland2021, gan2020}. A mediator ancilla ion
sharing the motional bus, prepared in a $\sigma_3$ eigenstate, carries
the conditioning and yields the unconditional beam splitter without
disturbing the computational register (Sec.~\ref{sec:bs}).

\subsection{The Two-Ion Exchange}
When the electronic manifold spans more than one ion, a coupling between
levels on different ions is needed. It moves the excitation from one ion
to the other and is built from the two-qudit M{\o}lmer--S{\o}rensen
interaction demonstrated up to $d=5$ in Ref.~\cite{hrmo2023}, acting on
one level pair of each ion. It is a
native trapped-ion operation and plays no role in the single-ion circuits
of Secs.~\ref{sec:methods} and~\ref{sec:results}. Its cost and its role in
universality are given in Sec.~\ref{sec:multiion}.


\section{Native Trapped-Ion Realization}
\label{sec:amm}
We now connect the instruction set to its trapped-ion hardware realization, starting with the physical hybrid register and the control mechanisms underlying each primitive~\cite{ringbauer2022, low2020, wineland1998, leibfried2003}, then introducing the vibronic Hamiltonian used as the benchmark throughout the rest of the paper~\cite{vu2025}.

\subsection{Register and Encoding}
\label{sec:encoding}
The hybrid register consists of a single $d$-level qudit and $M$ bosonic qumodes. The qudit is carried by the internal electronic levels of one trapped ion, and the qumodes are collective motional modes of the ion crystal~\cite{wineland1998}. Because the addressed internal transitions couple to the shared motional modes, the qudit interacts directly with the modes of the register, eliminating the SWAP operations that mediate interactions on nearest-neighbor architectures.

We encode the qudit in a $^{40}$Ca$^+$ ion on the $S_{1/2}\rightarrow D_{5/2}$
electric-quadrupole transition at $729$~nm, the same level structure used for
universal qudit control and for qudit sideband gates in
Refs.~\cite{ringbauer2022, meth2025}. A static magnetic field of a few Gauss
splits $S_{1/2}$ into two and $D_{5/2}$ into six Zeeman sublevels, giving ten
laser addressable transitions with a metastable lifetime of about $1$~s
\cite{ringbauer2022}. In the vibronic encoding we assign
$\ket{A}=S_{1/2}(m{=}-1/2)$ and place $\ket{B}$, $\ket{C}$, and $\ket{G}$
in three Zeeman sublevels of $D_{5/2}$. The $A\!\leftrightarrow\!B$ and
$A\!\leftrightarrow\!C$ transitions are then direct optical transitions
addressed by the drive frequency. $C_{\Pi_x}D$ acts on these same transitions.
$C_{\Pi_z}D$ and $C_{\Pi}R$ need a projector on a single level, $\Pi_\ell$, and a
sideband force between two code levels would give $\Pi_\ell-\Pi_{\ell'}$
instead. They are therefore driven between the addressed level and an
unused auxiliary sublevel $\ket{a}$ outside the code space, so that
$\sigma_z^{(\ell,a)}$ restricted to the code space is exactly $\Pi_\ell$.
This holds because the tones sit near the $\ell\!\leftrightarrow\!a$
transition, so the other code levels see only off-resonant shifts
suppressed by the ratio of the drive detuning to their spectral
distance.
The auxiliary level is counted in the fit condition $N_c+2\le d_{\rm eff}$.
On this platform our layout holds at most five excited states per ion,
and the sideband gates of Ref.~\cite{meth2025} use the same auxiliary
level construction. Each
transition is resolved by the Zeeman splitting, so
$X_E$ and $Z_d$ act selectively on chosen level pairs. The pulse and timing
parameters used in Appendices~\ref{app:pulses} and~\ref{app:runtime} are
taken from this platform.

In the vibronic application (Sec.~\ref{sec:methods}), each electronic state of the coupled-chromophore complex maps to one qudit level, so the entire electronic manifold sits on a single ion rather than being distributed across qubits.

\subsection{Qudit Platforms Across Ion Species}
\label{sec:platforms}

The instruction set is not tied to a particular ion species or qudit
dimension. We define the experimentally usable qudit dimension
$d_{\rm eff}$ as the number of internal states that can be coherently
initialized, selectively addressed, coupled to the motional modes through the
required primitives, and read out at the fidelity required by the
computation. This distinction is important as the total number of atomic levels
available in an ion does not by itself determine the useful computational
dimension.

Universal qudit control has been demonstrated in ${}^{40}$Ca$^+$
and ${}^{171}$Yb$^+$~\cite{ringbauer2022,aksenov2023}, native two-qudit
entangling gates up to $d=5$~\cite{hrmo2023}, while higher-dimensional
control and readout have been demonstrated in
${}^{137}$Ba$^+$~\cite{low2025qudit,shi2025}. For the single-excitation
encoding considered here, a single ion can contain the full electronic
manifold whenever
\begin{equation}
d_{\rm eff}\ge N_c+2,
\end{equation}
where $N_c$ is the number of excited electronic states and one level is
reserved as the auxiliary sublevel of Sec.~\ref{sec:encoding}. 
The ${}^{40}$Ca$^+$ ion used in our benchmark, with two $S_{1/2}$ and six
$D_{5/2}$ Zeeman sublevels, accommodates the three-chromophore manifold with
room to spare, and ${}^{171}$Yb$^+$ ququarts~\cite{aksenov2023, zalivako2024}
offer an alternative optical encoding with the same instruction set.

Larger $d_{\rm eff}$ extends the largest
discrete manifold that can be represented within one computational ion
without changing the instruction set.

When $N_c+2>d_{\rm eff}$, the discrete manifold is spread over several
ions, and couplings between ions use the two-ion exchange of
Sec.~\ref{sec:isa}. The cost of this split is given in
Sec.~\ref{sec:multiion}.

\subsection{The Vibronic Benchmark}
\label{sec:ham}
To demonstrate the instruction set on a physically relevant problem,
we introduce the vibronic model used as our validation benchmark and
show that each of its terms maps onto an ISA primitive. The system
follows Vu \textit{et al.}~\cite{vu2025}, where three chromophores
$A,B,C$, each a two-level electronic system coupled to a localized
vibrational mode, plus a low-frequency link mode of frequency
$\omega_l$ coupled to chromophore $A$~\cite{macdonell2021}. In the lab
frame,

\begin{align}
\label{eq:H_lab}
H_{\rm lab} \;&=\; \ket{G}\!\bra{G}\otimes(\hat h_A^g+\hat h_B^g+\hat h_C^g) \nonumber \\
             &+ \ket{A}\!\bra{A}\otimes(\hat h_A^e+\hat h_B^g+\hat h_C^g) \nonumber\\
           &+\; \ket{B}\!\bra{B}\otimes(\hat h_A^g+\hat h_B^e+\hat h_C^g) \nonumber \\
             &+ \ket{C}\!\bra{C}\otimes(\hat h_A^g+\hat h_B^g+\hat h_C^e) \nonumber\\
           &+\; J_{AB}\,\bigl(\ket{A}\!\bra{B}+\mathrm{h.c.}\bigr) 
             + J_{AC}\,\bigl(\ket{A}\!\bra{C}+\mathrm{h.c.}\bigr) ,
\end{align}

where $\hat h_R^g$ and $\hat h_R^e$ are the ground- and excited-state vibrational Hamiltonians of chromophore $R$. For chromophore $A$, which couples to both its local mode $a$ and the link mode $l$,
\begin{align}
\hat h_A^g &= \hbar\omega_{g,a}\,\hat a^\dagger\hat a
            + \hbar\omega_l\,\hat l^\dagger\hat l, \\
\hat h_A^e &= \hbar\omega_{e,a}\!\left[\hat a^\dagger\hat a+S_a-\sqrt{S_a}\,(\hat a^\dagger+\hat a)\right] \nonumber\\
           &\quad + \hbar\omega_l\!\left[\hat l^\dagger\hat l+S_l-\sqrt{S_l}\,(\hat l^\dagger+\hat l)\right].
\end{align}
For chromophores $B$ and $C$ (with $R=B,C$ and local mode $r=b,c$),
\begin{align}
\hat h_R^g &= \hbar\omega_{g,r}\,\hat r^\dagger\hat r, \\
\hat h_R^e &= \hbar\omega_{e,r}\!\left[\hat r^\dagger\hat r+S_r-\sqrt{S_r}\,(\hat r^\dagger+\hat r)\right].
\end{align}
Here $S_a,S_l,S_r$ are the Huang--Rhys factors setting the vibronic coupling strengths, and the excitation-transfer couplings are
\begin{equation}
J_{AR}=\hbar J_{AR,0}\!\left[1+\eta_{AR}(\hat l^\dagger+\hat l)\right],\qquad R=B,C,
\end{equation}
where $\eta_{AR}$ is a first-order coupling constant. The modulation of $J_{AR}$ by the low-frequency mode position $(\hat l+\hat l^\dagger)$ reflects the natural influence of vibronic coupling.

Each term of $H_{\rm lab}$ maps onto an ISA primitive. The reorganization
phases become $Z_d$, the free-oscillator mode terms unconditional rotations, the
dispersive shifts $C_{\Pi}R$, the linear Huang--Rhys displacements $C_{\Pi_z}D$, the static
electronic couplings $X_E$, and the link-modulated couplings $C_{\Pi_x}D$. The
correspondence is exact and term-by-term at the ISA level, so the compilation
of the lab-frame model introduces no frame or operator approximation. The
numerical evolution then uses Trotterization and a finite Fock cutoff. The
pulse-level realization of each ISA primitive constitutes a separate
hardware-compilation layer.


\section{Universality}
\label{sec:universality}
Next, we show that the instruction set is universal in the sense of Hamiltonian generation.

The argument rests on one identity. If $e^{-iAt}$ and $e^{-iBt}$ are available, then so is the evolution generated by their commutator,
\begin{equation}
\label{eq:groupcomm}
e^{-iA\epsilon}e^{-iB\epsilon}e^{iA\epsilon}e^{iB\epsilon}
= e^{-[A,B]\epsilon^{2}} + O(\epsilon^{3}),
\end{equation}
so new generators can be synthesized perturbatively by interleaving the evolutions already in hand. Universality reduces to exhibiting native generators whose commutators close on the algebra of the joint system.

Three primitive types suffice. $X_E$ and $Z_d$ generate $\mathfrak{su}(d)$ on the discrete register, and $C_{\Pi_z}D$, applied to each mode, couples the register to the modes and supplies the non-Gaussian resource. Cross-mode terms arise from commutators of $C_{\Pi_z}D$ generators on different modes carrying different qudit factors, so no mode-coupling primitive is required. $C_{\Pi}R$, $C_{\Pi_x}D$, and the synthesized beam splitter play no role in this argument. The two pillars and the multimode extension are established below, with the full induction deferred to Appendix~\ref{app:universality}.

\subsection{Pillar 1: Qudit Control from $X_E$ and $Z_d$}
\label{sec:sud}
The first pillar is complete control of the qudit, namely the ability to generate any element of $\mathrm{SU}(d)$. The $X_E$ primitives supply the off-diagonal generators $X_{ij}=\ket{i}\!\bra{j}+\ket{j}\!\bra{i}$ for each directly driven level pair, and a connected graph of driven transitions together with the phase control of $Z_d$ already generates the remaining pairs by commutation, so the full set of pairs need not be addressed directly, which matches the $^{40}$Ca$^+$ encoding of Sec.~\ref{sec:encoding} where only $S_{1/2}$ to $D_{5/2}$ transitions are laser driven, while $Z_d$ supplies the level projectors $\Pi_\ell=\ket{\ell}\!\bra{\ell}$, whose traceless differences $\Pi_\ell-\Pi_{\ell'}$ span the traceless diagonal sector of $\mathfrak{su}(d)$. The antisymmetric generators are obtained from a single commutator:
\begin{equation}
\label{eq:comm_XPi}
[\,X_{ij},\,\Pi_i\,]
= \ket{j}\!\bra{i}-\ket{i}\!\bra{j}
= i\,Y_{ij},
\qquad
Y_{ij}=i\bigl(\ket{i}\!\bra{j}-\ket{j}\!\bra{i}\bigr).
\end{equation}
The set $\{X_{ij},Y_{ij}\}_{i<j}$ together with $d-1$ independent traceless projector differences spans the generalized Gell-Mann basis of $\mathfrak{su}(d)$. The remaining identity combination of the $Z_d$ projectors contributes only a global phase, so $X_E$ and $Z_d$ generate all of $\mathrm{SU}(d)$ up to that physically irrelevant phase. This is the standard construction underlying universal single-qudit control on trapped-ion platforms, where carrier drives and virtual-$Z$ phase updates are native operations~\cite{ringbauer2022, low2020, low2025qudit}.

For the ququart encoding used throughout this work ($d=4$), the construction gives six $X_{ij}$ generators, six $Y_{ij}$ generators, and three independent traceless diagonal combinations, spanning the $15$-dimensional algebra $\mathfrak{su}(4)$. The additional identity combination of the four $Z_d$ projectors contributes only a global phase.

\subsection{Pillar 2: Universal Single-Mode Control from $C_{\Pi_z}D$}
\label{sec:entangle}
For the second pillar, fix two qudit levels $\ket{a},\ket{b}$ and write $\sigma_1,\sigma_2,\sigma_3$ for the Pauli operators on that two-dimensional subspace, embedded in $\mathbb{C}^d$ and acting as zero on the remaining levels, with Pillar~1 supplying all three natively. Composing two conditional displacements with opposite arguments,
\begin{equation}
\label{eq:cd_sym}
C_{\Pi_z}D(\alpha,\ell_a)\,C_{\Pi_z}D(-\alpha,\ell_b)
= \exp\!\left[\sigma_3\otimes\bigl(\alpha\hat a^\dagger-\alpha^*\hat a\bigr)\right],
\end{equation}
provides the symmetric conditional displacement, the entangling primitive of the phase-space instruction set~\cite{liu2026isa, sutherland2021, eickbusch2022}. Considering the real part of  $\alpha$ supplies the generator $\sigma_3\otimes\hat p$ while the imaginary part gives $\sigma_3\otimes\hat x$.

Conditional displacements alone generate nothing new. Since $\sigma_3$ commutes with itself, the commutator of two of them collapses to a $c$-number,
\begin{equation}
\label{eq:comm_cd}
[\,\sigma_3\otimes\hat x,\;\sigma_3\otimes\hat p\,]
=\Pi_{ab}\otimes[\hat x,\hat p]=2i\,\Pi_{ab}\otimes\Id,
\end{equation}
with $\Pi_{ab}=\ket{a}\!\bra{a}+\ket{b}\!\bra{b}$. New generators arise from commuting the conditional displacements against the qudit rotations of Pillar~1, which change the Pauli factor. Using $[\sigma_3,\sigma_1]=2i\sigma_2$,
\begin{equation}
\label{eq:chain1}
[\,\sigma_3\otimes\hat x,\;\sigma_1\otimes\Id\,]=2i\,\sigma_2\otimes\hat x ,
\end{equation}
and now the two available generators carry \emph{different} Pauli factors, which anticommute. Their commutator is quadratic in the quadratures:
\begin{equation}
\label{eq:chain2}
[\,\sigma_2\otimes\hat x,\;\sigma_3\otimes\hat p\,]
=i\,\sigma_1\otimes\{\hat x,\hat p\} .
\end{equation}
The construction iterates. Each further commutator against a conditional displacement raises the polynomial order in $\hat x$ and $\hat p$ by one, while cycling the qudit factor through the Pauli operators, so the accessible algebra contains operators of arbitrary polynomial degree in the quadratures. It is therefore not confined to the Gaussian sector that displacement and rotation alone would generate. Appendix~\ref{app:universality} carries out the induction, extends the qudit factor from $\mathfrak{su}(2)$ to all of $\mathfrak{su}(d)$, and treats the identity sector.

We also verified the construction numerically. Truncating the mode at
Fock cutoff $N$ and seeding with the six ququart $X_E$ generators
$X_{ij}\otimes\Id$ $(i<j)$, the four $Z_d$ projectors
$\Pi_\ell\otimes\Id$, and the two conditional-displacement generators
$\sigma_3\otimes\hat x$ and $\sigma_3\otimes\hat p$, the generated
algebra grows through the cutoff-independent rank sequence
$12, 29, 105, 375$ over iterations $0$--$3$, identical for
$N=6,8,10$. The measured Fock-basis bandwidth of the generated
operators reaches three. Bandwidth means how many Fock levels apart an
operator can directly connect. Each $\hat a$ or $\hat a^\dagger$ moves
the Fock index by one, so a polynomial of degree $k$ in $\hat x$ and
$\hat p$ can connect levels at most $k$ apart. A bandwidth of three
therefore requires a degree of at least three. Gaussian generators
have a degree of at most two, so these operators lie outside the Gaussian
sector. The finite truncated matrices then reach $\mathfrak{u}(4N)$ at
iteration $4$. Because the $Z_d$ seed set contains traceful level
projectors, the finite-dimensional closure is $\mathfrak{u}(4N)$
rather than $\mathfrak{su}(4N)$, and its dimension $(4N)^2$
necessarily depends on the cutoff. This saturation is a consistency
check on the truncated representation, not the universality claim
itself, which is established analytically in
Appendix~\ref{app:universality}. As a control, replacing the
conditional displacements with unconditional ones, $\Id\otimes\hat x$
and $\Id\otimes\hat p$, keeps the measured bandwidth at one,
consistent with the algebra remaining Gaussian.

The same interaction is what entangles the two registers. Acting on a qudit prepared in a superposition, Eq.~\eqref{eq:cd_sym} correlates the discrete state with the phase space of the mode. Disentangling the qudit afterwards leaves the mode in a superposition of coherent states, so cat states and more general non-classical oscillator states are directly accessible~\cite{gan2020, fluhmann2019}.

\subsection{Multimode Coupling Without a Beam Splitter}
\label{sec:multimode}
Universality extends from a single mode to the full register without any additional primitive. Applying Pillar~2 to each mode separately places every generator $\sigma_k\otimes\hat x_j$ and $\sigma_k\otimes\hat p_j$ in the algebra. Commuting two such generators on different modes with different Pauli factors then produces cross-mode products directly,
\begin{equation}
\label{eq:crossmode}
[\,\sigma_1\otimes\hat x_j,\;\sigma_2\otimes\hat x_{j'}\,]
=2i\,\sigma_3\otimes\hat x_j\hat x_{j'} ,
\end{equation}
since the quadratures of distinct modes commute and the Paulis anticommute. Iterating over quadratures and degrees generates arbitrary joint polynomials (Appendix~\ref{app:universality}), so $X_E$, $Z_d$, and $C_{\Pi_z}D$ alone are universal for the multimode register.

\subsection{The Synthesized Beam Splitter}
\label{sec:bs}
Although not required for universality, an unconditional beam splitter is a useful Gaussian mode-mixing operation, and it completes compatibility with the phase-space instruction set of Ref.~\cite{liu2026isa}. It is synthesized from the same spin--motion interactions that underlie $C_{\Pi_z}D$. Following the trapped-ion construction of Sutherland and Srinivas~\cite{sutherland2021}, two spin-dependent-force interactions applied to modes $j\neq j'$ with equal detuning combine to second order in the Magnus expansion into the spin-dependent beam-splitter interaction
\begin{equation}
\label{eq:bs}
H_{\rm BS} \simeq i\,\Omega\,\sigma_3
\bigl(\hat a_j\hat a_{j'}^\dagger e^{i\phi}-\hat a_j^\dagger\hat a_{j'}e^{-i\phi}\bigr),
\end{equation}
where $\sigma_3$ acts on the chosen two-level subspace of the qudit and both constituent interactions are of the $C_{\Pi_z}D$ type. To obtain the unconditional beam splitter $\exp(\theta[\hat a_j\hat a_{j'}^\dagger e^{i\phi}-\mathrm{h.c.}])$ without disturbing the computational register, the conditioning is carried by a mediator ancilla ion sharing the motional bus. The ancilla is prepared in a $\sigma_3$ eigenstate and the spin-dependent force of Ref.~\cite{sutherland2021} is driven on it, leaving the computational qudit untouched. The mediator is required only when the beam splitter itself is invoked, while the vibronic circuits of Sec.~\ref{sec:methods} use no beam splitter. The construction requires only native spin--motion interactions and has been shown to achieve high-fidelity beam-splitter operations in trapped-ion systems~\cite{sutherland2021, gan2020}.

\subsection{Closure}
Pillar~1 supplies every generator $\Lambda\otimes\Id$ with
$\Lambda\in\mathfrak{su}(d)$. Pillar~2 supplies a generator on each mode separately every $\Lambda\otimes P(\hat x_j,\hat p_j)$ with $P$ an arbitrary polynomial. Their cross-mode commutators, Eq.~\eqref{eq:crossmode}, spread the construction across the full register. Together with repeated commutators, these generators produce Hamiltonians formed from arbitrary qudit operators tensored with polynomials in the quadratures of all oscillator modes, using only the native primitives $X_E$, $Z_d$, and $C_{\Pi_z}D$. The architecture is therefore universal in the sense of Hamiltonian generation, and the proof is given in Appendix~\ref{app:universality}.


\section{Cross-Compilation Between Phase-Space and Sideband Models}
\label{sec:crosscompile}
The architecture supports two of Liu's universal instruction sets~\cite{liu2026isa} using trapped-ion operations, with beam-splitter mode mixing synthesized when required. The phase-space set is anchored on the conditional displacement $C_{\Pi_z}D$, while the sideband set is anchored on the Jaynes--Cummings interaction. On trapped ions both are supported within the same physical control architecture, the latter through $C_{\Pi_x}D$, whose generator $(\ket{\ell_i}\!\bra{\ell_j}+\mathrm{h.c.})\otimes\hat x$ contains the Jaynes--Cummings and anti-Jaynes--Cummings terms~\cite{sutherland2021}. Programs written for either model therefore run without leaving the native gate set.

Because both instruction sets are supported within the same hardware architecture, programs can be translated directly between them. Following the equivalence established by Liu \textit{et al.} (Appendix~D.2)~\cite{liu2026isa}, a conditional displacement can be assembled from sideband interactions interleaved with qudit rotations, and conversely a sideband coupling from conditional displacements and rotations, so a circuit expressed in one instruction set can be rewritten in the other using only native primitives. The choice between the two computational models therefore becomes an algorithmic decision rather than a hardware constraint. Cross-compilation is not an interface between heterogeneous hardware platforms, but an internal compiler transformation within a single native trapped-ion architecture.


\section{HybridLane Implementation}
\label{sec:hybridlane}
We extend HybridLane~\cite{liu_hybridlane2025}, a PennyLane-based framework~\cite{pennylane2018}, from hybrid oscillator--qubit circuits to native hybrid qudit--oscillator computation. The extension introduces the five native primitives as first-class operations parameterized by the addressed qudit level, or level pair, which makes the implementation independent of the chosen qudit dimension. Circuits are written directly in $Z_d$, $X_E$, $C_{\Pi_z}D$, $C_{\Pi}R$, and $C_{\Pi_x}D$ and executed on the framework's bosonic backend.

To validate the approach, we compile the lab-frame Hamiltonian of Sec.~\ref{sec:ham} into the native primitives and evolve it on the extended HybridLane. Accuracy is checked in two ways: (1) against direct numerical integration of $H_{\rm lab}$, which isolates the Trotter error, and (2) against the published reference trajectories of Ref.~\cite{vu2025}, which run the displaced-frame Hamiltonian on a qubit register.


\section{Application: Dissipative Vibronic Dynamics}
\label{sec:methods}
We now validate the proposed architecture on a concrete simulation problem, namely the dissipative vibronic dynamics of a coupled-chromophore complex~\cite{macdonell2021, valahu2023}. When light is absorbed by such a complex, the resulting electronic excitation migrates between chromophores while coupled to vibrational modes and to a dissipative environment~\cite{so2024}. Capturing this transport requires the electronic, vibrational, and dissipative degrees of freedom to be treated together. This makes it both a natural target for a hybrid processor and a demanding benchmark as the dynamics couple electronic excitations to multiple bosonic modes and incorporate open-system dissipation.

The three-chromophore benchmark is illustrated in
Fig.~\ref{fig:vibronic_model}.

\begin{figure}[t]
    \centering
    \includegraphics[width=0.95\columnwidth]{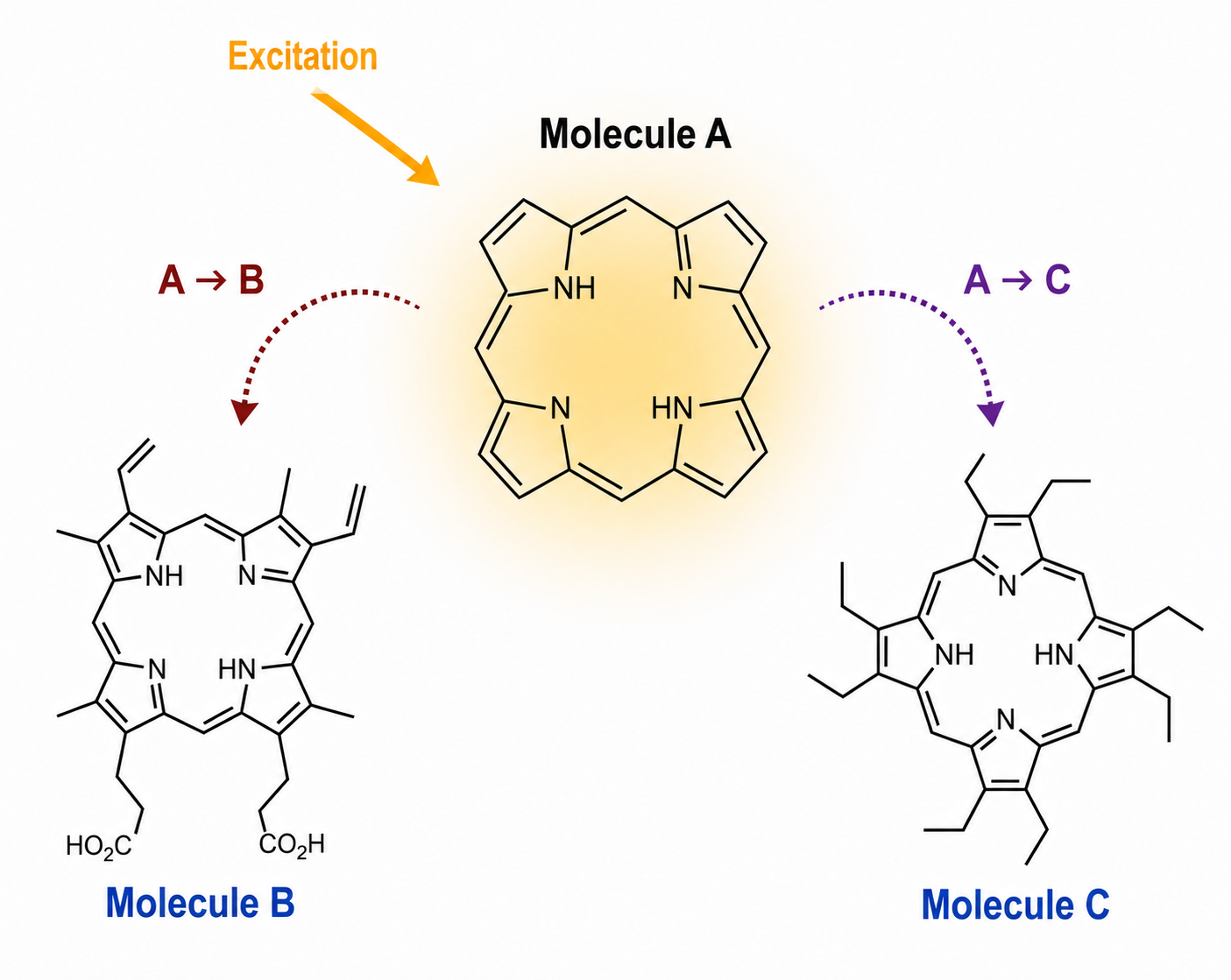}
    \caption{Three-chromophore energy-transfer model used as the vibronic
    benchmark. An excitation initially localized on chromophore $A$ can
    transfer to chromophores $B$ and $C$ while coupled to vibrational modes.}
    \label{fig:vibronic_model}
\end{figure}

We adopt the three-chromophore model of Ref.~\cite{vu2025} as our benchmark for two reasons. First, it exercises the complete instruction set of level-dependent phases and mode rotations, conditional displacements, electronic transitions, and link-mode-modulated sidebands within the same Hamiltonian (see Sec.~\ref{sec:ham}). Second, it is a well-specified model with published parameters, so the comparison rests on a system others can reproduce. The remainder of this section specifies the three evolution methods and the treatment of dissipation. The results follow in Sec.~\ref{sec:results}.

\subsection{Qudit Compilation}
\label{sec:methodA}
The qudit compilation directly compiles the lab-frame Hamiltonian into the native instruction set, without translation through an intermediate qubit circuit. It encodes the electronic state of the complex in the four levels, $\{\ket{G},\ket{A},\ket{B},\ket{C}\}$, of a single qudit, together with four qumodes for the local and link vibrational modes, giving a Hilbert-space dimension $4N^4$ at Fock cutoff $N$. The lab-frame Hamiltonian splits into six pieces, each compiling directly into ISA operations,
\begin{align}
H_{\rm lab} &= H_{\rm free} + H_{C_{\Pi}R} + H_{C_{\Pi_z}D} \nonumber\\
&\quad + H_{Z_d} + H_{X_E} + H_{C_{\Pi_x}D},
\end{align}
where $H_{\rm free}$ collects the unconditional mode rotations (four gates, one per qumode), $H_{C_{\Pi}R}$ the dispersive shifts $\chi_R\Pi_R\hat n_R$ (three level-conditioned rotations), $H_{C_{\Pi_z}D}$ the conditional displacements $-\omega_{eR}\sqrt{S_R}\,\Pi_R\hat x_R$ together with the link-mode term $-\omega_l\sqrt{S_l}\,\Pi_A\hat x_l$ (four conditional displacements), $H_{Z_d}$ the reorganization phases $\omega_{eR}S_R\Pi_R$ and $\omega_l S_l\Pi_A$ (four phase shifts), $H_{X_E}$ the electronic transitions $J^{(0)}_{RR'}(\Pi_{RR'}+\Pi_{R'R})$ ($A\!\leftrightarrow\!B$ and $A\!\leftrightarrow\!C$, two transitions), and $H_{C_{\Pi_x}D}$ the link-modulated sidebands $\eta_{RR'}J^{(0)}_{RR'}(\Pi_{RR'}+\Pi_{R'R})\hat x_l$ (two sideband gates). Each sub-Hamiltonian maps directly onto ISA operations, with the free terms implemented by unconditional mode rotations and the remaining terms by the five native primitives, making the decomposition an architectural property rather than a compiler artifact.

The five half-step sub-Hamiltonians other than $H_{C_{\Pi_x}D}$ contribute $4+3+4+4+2=17$ ISA operations per half-step, and the two $H_{C_{\Pi_x}D}$ terms are applied at the center of the symmetric step,
\begin{equation}
\label{eq:Aorder}
U_A(\tau) = \prod_{k=1}^{5} U_k(\tau/2)\,\cdot\,U_{C_{\Pi_x}D}(\tau)\,\cdot\,\prod_{k=5}^{1} U_k(\tau/2),
\end{equation}
giving $2\times17+2=36$ reference ISA operations per Trotter step.
The two central terms $H_{C_{\Pi_x}D}$ do not commute
with each other, so they are applied as one summed exponential, the
multitransition $C_{\Pi_x}D$ block of Table~\ref{tab:isa}, driven as two sideband
tone pairs on one beam. The step is then a true second order splitting. The simultaneous
two-transition drive is validated under the physical sideband Hamiltonian
at the circuit angles with $1-F=8\times10^{-9}$ (Appendix~\ref{app:pulses}).
Applying the two terms one after another instead adds a commutator error
that raises the Trotter RMSE from $0.007\%$ to $0.008\%$. All qudit
trajectories below use the summed block. A single optimization reduces this further. The two phase shifts on $\Pi_A$, $e^{-i\omega_{eA}S_a\tau\Pi_A}$ and $e^{-i\omega_l S_l\tau\Pi_A}$, commute exactly and fuse into one $Z_d$ operation, removing one operation per half-step for $2\times16+2=34$ reference ISA operations per step. The fused circuit is algebraically equivalent to the unfused decomposition and reproduces its evolution to machine precision. The generalized multilevel controls provide a further exact compilation
reduction. The three $Z_d$ operations acting on different qudit levels in each
half-step are combined into one generalized diagonal $Z_d$ operation, while the
two electronic transitions $A\!\leftrightarrow\!B$ and
$A\!\leftrightarrow\!C$ are combined into one multitransition $X_E$ control
block. The latter is synthesized to reproduce the original sequential $X_E$
unitary. The resulting $27$-control-block compilation reproduces the
$34$-operation reference evolution to machine precision, so the previously reported
trajectories are unchanged.
We use the $34$ reference ISA operations as the bookkeeping count in the
rest of the paper, since its only fusion is an exact identity. The $27$-control-block schedule is reported as the multitone extension, and the count it saves
depends on the control block described in Sec.~\ref{sec:isa}.

The circuit realizes one complete second-order Trotter step entirely within the native instruction set, with no intermediate qubit compilation, CNOT synthesis, or SWAP network (Fig.~\ref{fig:circuit_qudit}).
\begin{figure*}[t]
  \centering
  \includegraphics[width=0.92\textwidth]{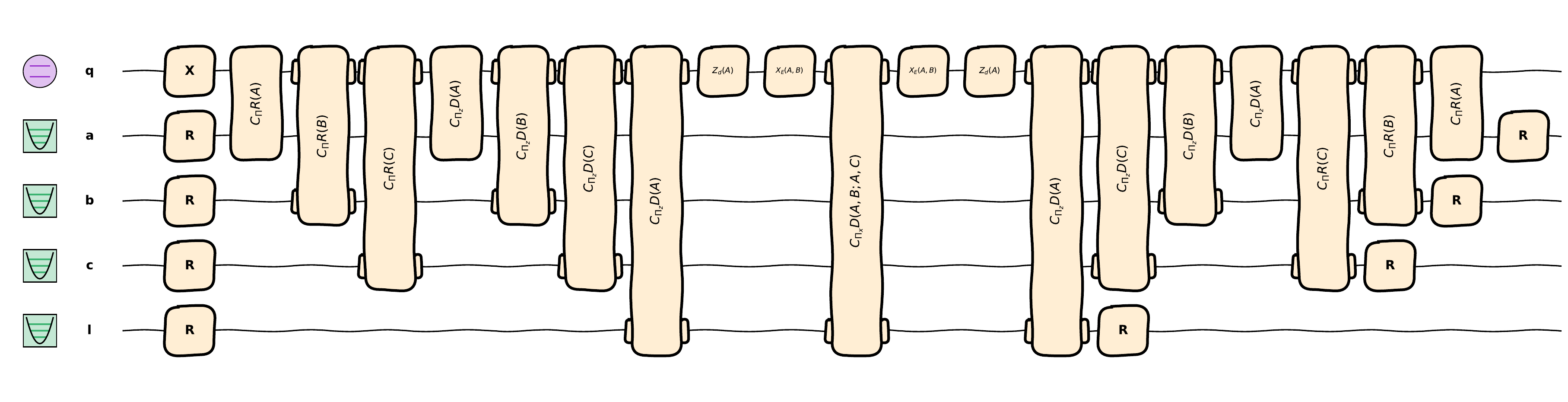}
  \caption{Single second-order Trotter step of the generalized qudit compilation
on the qudit and four-qumode register (Fock cutoff $N=8$). The step uses
$27$ ISA operations and no SWAP or CNOT operations.
This multitone version uses $27$ scheduled control blocks, while the
reference decomposition contains $34$ ISA operations, and neither has SWAP or CNOT operations. R denotes an
unconditional mode rotation, $C_{\Pi}R$, $C_{\Pi_z}D$, $Z_d$, $X_E$, and $C_{\Pi_x}D$ denote the
instruction-set primitives of Sec.~\ref{sec:isa}.}
  \label{fig:circuit_qudit}
\end{figure*}

\subsection{Qubit Reference}
\label{sec:methodB}
The qubit reference is constructed by compiling the displaced-frame
Hamiltonian $H_{\rm cqed}$ of Ref.~\cite{vu2025} into trapped-ion
operations on the same backend, so that the two encodings are compared on
one platform.

As set out in Sec.~\ref{sec:resources}, it works in the displaced frame, where the vibronic coupling becomes $\sigma_z^R(\hat a+\hat a^\dagger)$ and every term is two-body. It compiles that Hamiltonian, $H_{\rm cqed}$, on three qubits, $\ket{G}=\ket{000}$, $\ket{A}=\ket{100}$, $\ket{B}=\ket{010}$, $\ket{C}=\ket{001}$, alongside the same four qumodes.

The dispersive shifts become $\chi_R\sigma_z^R\hat n_R$, the vibronic
couplings $g_{{\rm cd},R}\sigma_z^R\hat x_R$, and the electronic exchanges
$(\sigma_x\sigma_x+\sigma_y\sigma_y)/2$ interactions. At the logical level
one symmetric step uses $32$ ISA operations, of which $20$ are entangling.
These logical operations are not all single native pulses. Under the
M{\o}lmer--S{\o}rensen primitive assumed here, each exchange term is
decomposed into two commuting windows, one for $XX$ and one for
$YY$~\cite{molmer1999, sorensen2000}, and the link-modulated exchange
$J_l$ has no native two-ion primitive and is synthesized from
M{\o}lmer--S{\o}rensen sandwiches around a conditional displacement. The
two $J_l$ terms at the center of the step are applied one after the other,
which adds a splitting error of about $0.004\%$ in mean population RMSE
relative to the exact summed center. Compiling to timed pulses
expands the step to $58$ under the explicit compilation of
Appendix~\ref{app:qubitport}, with each CNOT counted as one pulse. The qudit compilation of Sec.~\ref{sec:methodA} does not expand,
since every ISA operation is a single native interaction. It compiles
to $27$ timed pulses, fewer than its $34$ operations because $Z_d$ costs
no pulse.

The qubit reference compiles the displaced-frame Hamiltonian $H_{\rm cqed}$, not $H_{\rm lab}$. It represents the qubit formulation used in Ref.~\cite{vu2025} and serves as the established result against which the qudit compilation is validated.

\subsection{Simulation}
\label{sec:protocol}
All compilations start from $\ket{A}$ with every mode in vacuum and use second-order Trotter decompositions with the same step size and Fock cutoff, and run on the extended HybridLane of Sec.~\ref{sec:hybridlane}.

The population trajectories reported below were generated using the
$34$-operation reference qudit circuit with the summed $C_{\Pi_x}D$ center, and the
$27$-control-block multitone schedule reproduces this evolution to machine
precision. The plotted closed system qubit reference uses the summed $J_l$ center,
and the sequential circuit of Sec.~\ref{sec:methodB} differs from it by
$0.004\%$ mean population RMSE. The qudit compilation uses the four-level register. The qubit reference, by contrast, uses a one-qubit-per-state register on the same backend, so observed differences originate from the encoding and the Hamiltonian it compiles, not from the solver. Populations $P_R(t)=\braket{\Pi_R}$ are extracted at each step, checked against direct integration of $H_{\rm lab}$ with QuTiP~\cite{qutip2012, qutip2013} to isolate the Trotter error, and compared against the published reference trajectories of Ref.~\cite{vu2025}.

For the closed system (Sec.~\ref{sec:results_closed}) we integrate the Schr{\"o}dinger equation with QuTiP~\cite{qutip2012, qutip2013}. For the dissipative dynamics (Sec.~\ref{sec:results_dissipative}) we use a continuous Lindblad master equation, cross-checked against a discretized Stinespring channel~\cite{lindblad1976, stinespring1955}.

The amplitude-damping and pure-dephasing channels for chromophore $R$ are
\begin{equation}
\label{eq:Lqd}
L_{\rm amp,R} = \sqrt{\gamma_{\rm amp,R}}\,\ket{G}\!\bra{R}, \ \ 
L_{\rm dep,R} = \sqrt{\gamma_{\rm dep,R}}\,\bigl(I - 2\Pi_R\bigr).
\end{equation}
The dephasing operator is the qudit form of the qubit $\sigma_z^R$
within the single-excitation manifold, where $\sigma_z^R$ has eigenvalue $+1$ on all states with chromophore $R$ unexcited and $-1$ on $\ket{R}$, which is exactly $I-2\Pi_R$. The two encodings therefore implement the same physical dissipation, even though they compile different Hamiltonians.


\section{Results}
\label{sec:results}
We validate the native qudit compilation of $H_{\rm lab}$ against the published reference of Ref.~\cite{vu2025}, first in the closed system, then under dissipation, and finally with respect to the two discretization parameters. Agreement is quantified by the root-mean-square error (RMSE) between chromophore population trajectories.

\subsection{Closed-System Dynamics}
\label{sec:results_closed}
Figure~\ref{fig:closed} compares three trajectories over the full
$2$~ps window considering direct integration of $H_{\rm lab}$, the Trotterized qudit compilation of $H_{\rm lab}$, and the qubit compilation of the displaced-frame $H_{\rm cqed}$. The Trotterized qudit dynamics match the exact lab-frame evolution to an RMSE of order $10^{-4}$, so the Trotter error is negligible at the step size used. The qubit compilation reproduces the published reference trajectory to $0.031\%$, confirming that our reimplementation of the reference is faithful.
The two compiled models differ from each other by $1.6\%$ at these
parameters. The residual arises between two different finite-dimensional
representations, $H_{\rm lab}$ and the displaced-frame $H_{\rm cqed}$, with
the Trotter contribution being much smaller.

\begin{figure*}[t]
  \centering
  \includegraphics[width=0.9\textwidth]{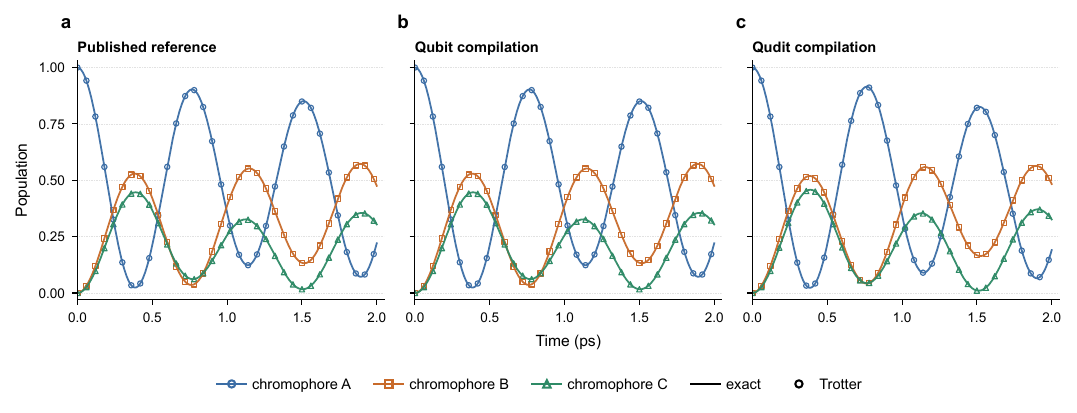}
  \caption{Closed-system population dynamics over $2$~ps, corresponding to Fig.~6 of Ref.~\cite{vu2025}. (a) The published reference and (b) the qubit compilation compile the displaced-frame $H_{\rm cqed}$; (c) the qudit compilation compiles $H_{\rm lab}$. Solid lines are exact integration, markers the Trotterized circuits. The qudit compilation matches the exact evolution to an RMSE of order $10^{-4}$. The qubit compilation reproduces the published reference to $0.031\%$}.
  \label{fig:closed}
\end{figure*}

\subsection{Dissipative Dynamics}
\label{sec:results_dissipative}
Figures~\ref{fig:amp}--\ref{fig:combined} give the dissipative dynamics under amplitude damping, pure dephasing, and both together, across the eight panels of Ref.~\cite{vu2025}. Figures~\ref{fig:amp}--\ref{fig:combined} show the qudit compilation against the published trajectories. Table~\ref{tab:rmse} additionally reports the displaced-frame comparison.

At Fock cutoff $N=4$, the displaced-frame $H_{\rm cqed}$ calculation,
implemented with the level-encoded Kraus builder, reproduces the reference
to $1.03\%$ RMSE, while the qudit calculation gives $1.09\%$, i.e., the two
agree to $0.30\%$. At $N=8$, the displaced-frame calculation is implemented
as the explicit three-qubit circuit, giving corresponding values of
$1.02\%$, $1.09\%$, and $0.45\%$.

The $N=4$ and $N=8$ rows were produced with different shot backends, a
Kraus channel and a Stinespring dilation, so the small change between
rows reflects both the cutoff and the solver.  That the qudit
compilation of $H_{\rm lab}$ results are close to a reference computed
in the displaced frame is the very point of the comparison. It
indicates that the frame transformation is faithful at these
parameters, and the native compilation recovers the same physics
without performing it.

The strongly dephased panel of Fig.~\ref{fig:dep} departs from the published
curve by $\sim6\%$, while the lab-frame and displaced-frame compilations
agree with each other to $0.7\%$ there. The residual is between our
calculations and the published reference, not between the two encodings.

\begin{figure*}[t]
  \centering
  \includegraphics[width=0.9\textwidth]{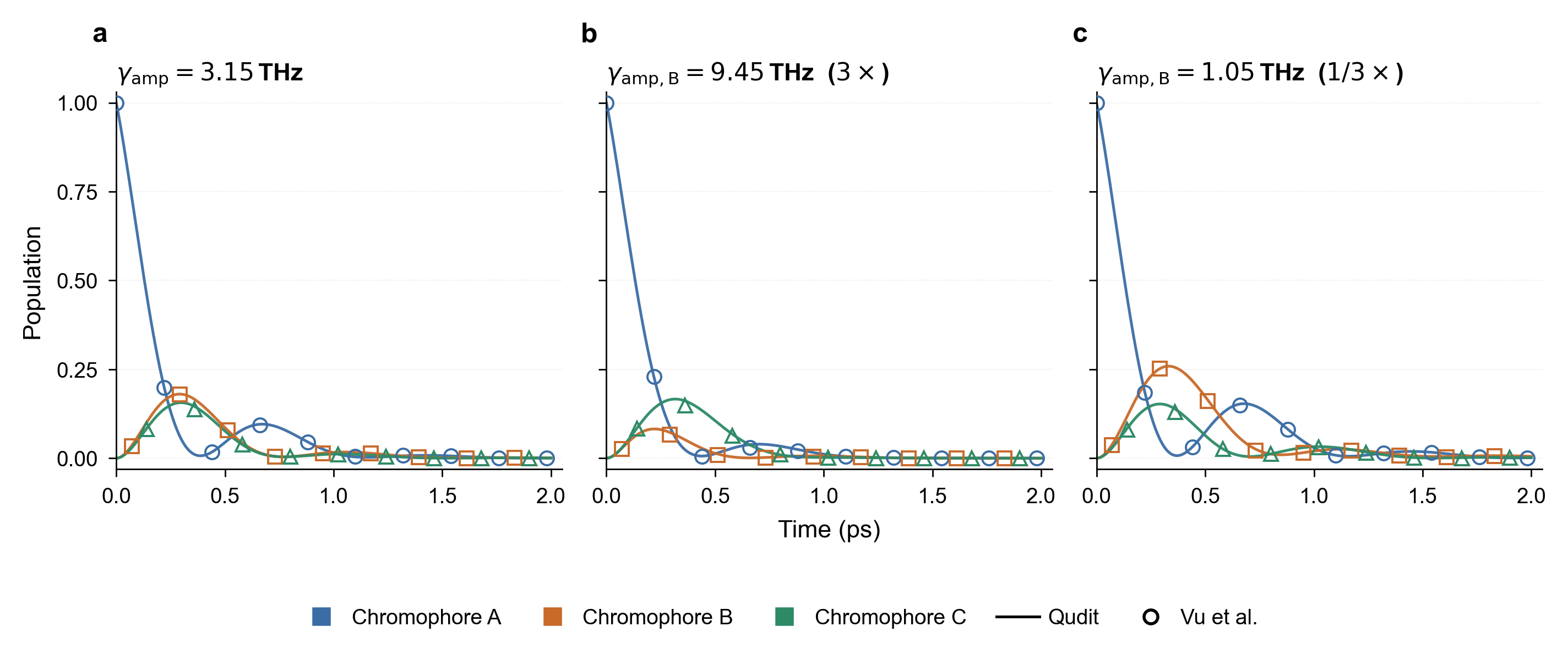}
  \caption{Dissipative population dynamics under amplitude damping, corresponding to Fig.~8 of Ref.~\cite{vu2025}. The qudit compilation tracks the published reference across all chromophores.}
  \label{fig:amp}
\end{figure*}

\begin{figure*}[t]
  \centering
  \includegraphics[width=0.9\textwidth]{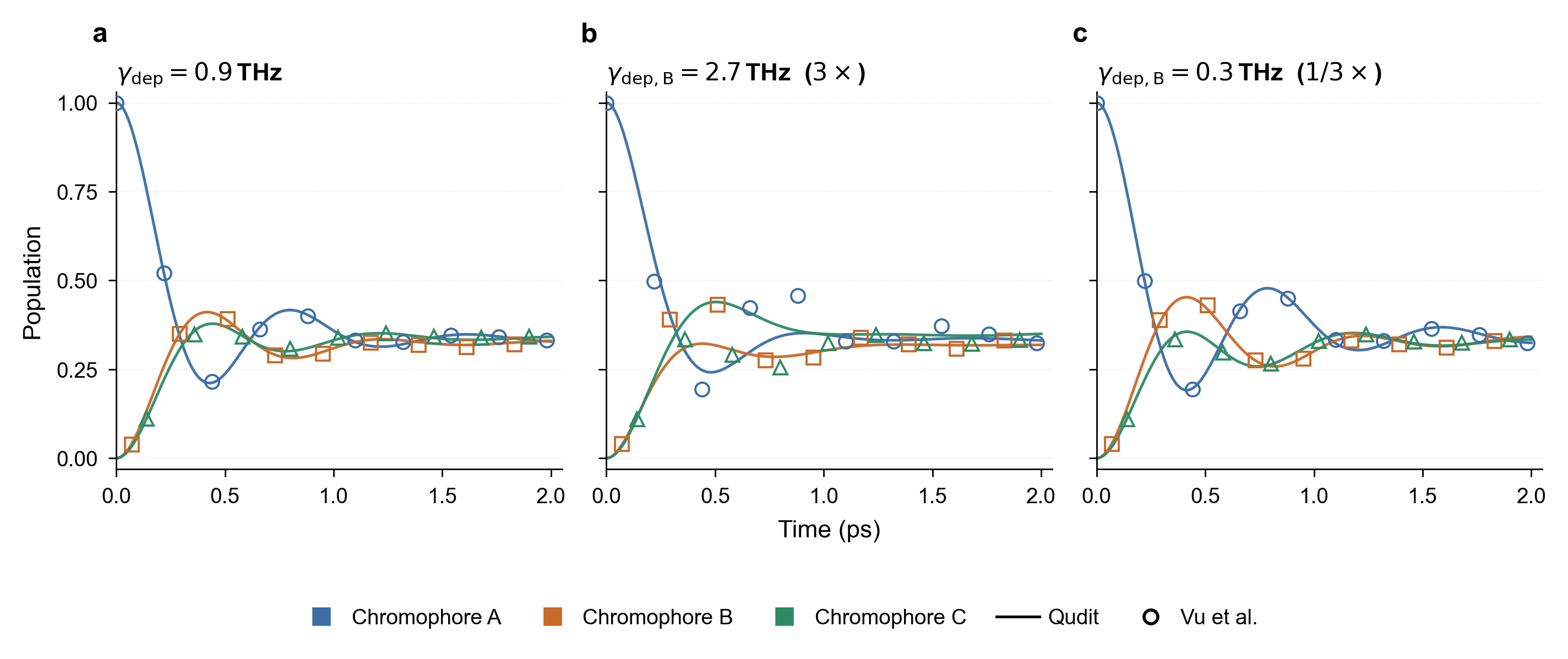}
  \caption{Dissipative population dynamics under pure dephasing, corresponding to Fig.~9 of Ref.~\cite{vu2025}. Agreement of the qudit compilation with the published reference is better than $0.8\%$ on all panels except the strongly dephased one, where both compilations depart from the published reference by $\sim6\%$ while agreeing with each other to $0.7\%$.}
  \label{fig:dep}
\end{figure*}

\begin{figure*}[t]
  \centering
  \includegraphics[width=0.7\textwidth]{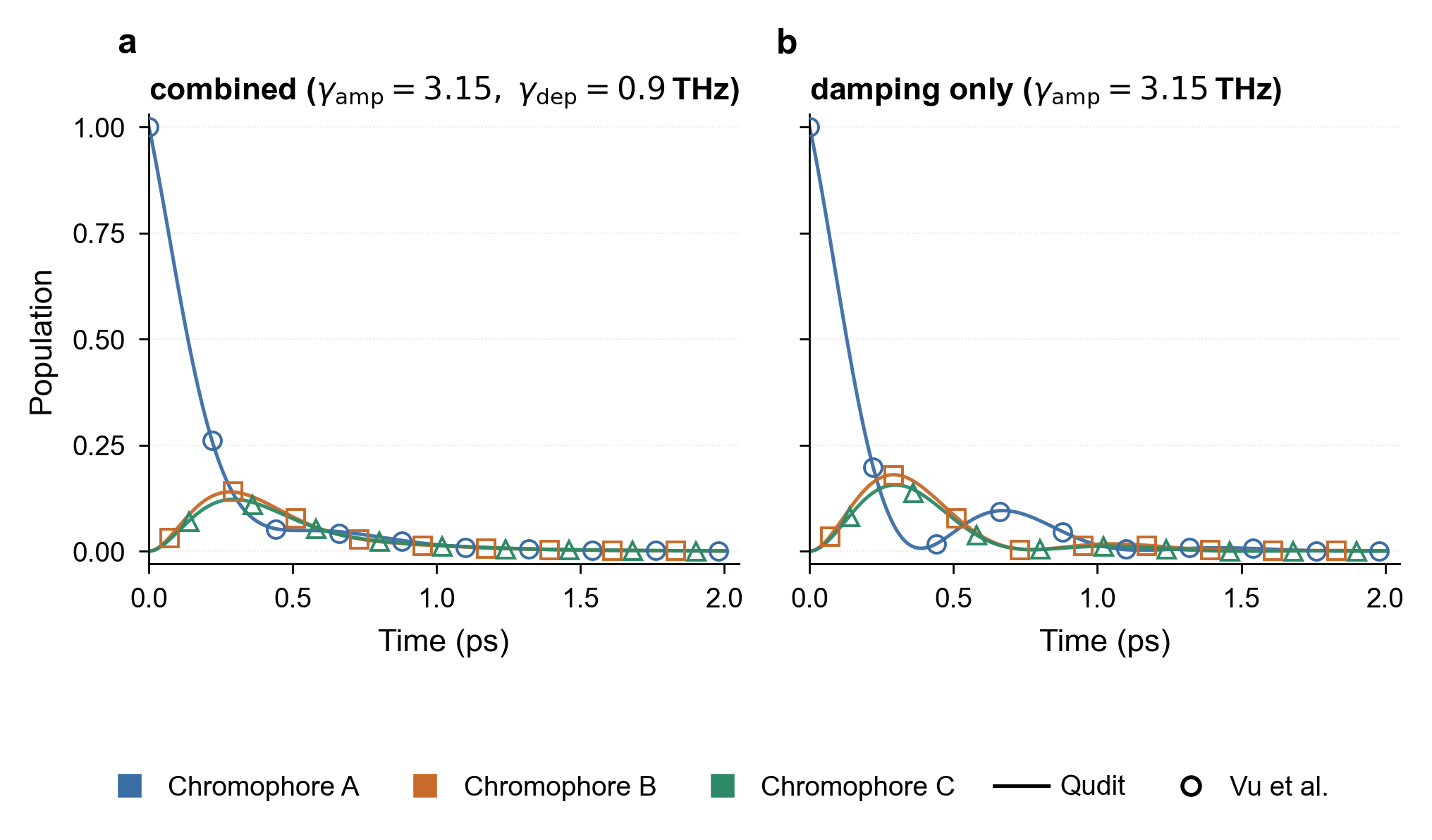}
  \caption{Dissipative population dynamics under combined amplitude damping and dephasing, corresponding to Fig.~10 of Ref.~\cite{vu2025}.}
  \label{fig:combined}
\end{figure*}

\begin{table}[h]
  \centering
  \begin{tabular}{lccc}
    \toprule
    Cutoff & Displaced--ref. & Qudit--ref. & Displaced--qudit \\
    \midrule
    $N=4$ & $1.03\%$ & $1.09\%$ & $0.30\%$ \\
    $N=8$ & $1.02\%$ & $1.09\%$ & $0.45\%$ \\
    \bottomrule
  \end{tabular}
  \caption{Population-trajectory RMSE for the dissipative dynamics, averaged
over the eight panels of Figs.~\ref{fig:amp}--\ref{fig:combined}. At $N=4$,
the displaced-frame $H_{\rm cqed}$ result is produced by the level-encoded
Kraus builder, whereas at $N=8$ it is produced by the explicit three-qubit
Stinespring circuit. The qudit column evaluates $H_{\rm lab}$. The two rows
therefore represent independently validated implementations rather than a
controlled Fock-cutoff sweep. The reported averages include the strongly dephased panel discussed in the
text.}
  \label{tab:rmse}
\end{table}

\subsection{Convergence}
\label{sec:results_convergence}
Figure~\ref{fig:conv_trotter} shows the closed-system population RMSE
against the unitary Trotter step size on a log--log scale. After tightening
the numerical integration tolerances so that the solver error remains below
the Trotter error, both compilations exhibit the expected second-order
Suzuki scaling. Over the fitted range $\tau\ge5$~fs, the fitted slopes are
$1.981$ for the qudit compilation and $2.027$ for the qubit reference
($R^2=1.0000$ for both). At the production step size $\tau=5$~fs, the mean
population RMSE is $0.0062\%$ for the qudit compilation and $0.0044\%$ for
the qubit reference. These results support the choice $\tau=5$~fs. We use
$N=8$ as the working Fock cutoff for the circuit calculations.

\begin{figure}[t]
  \centering
    \includegraphics[width=\columnwidth]{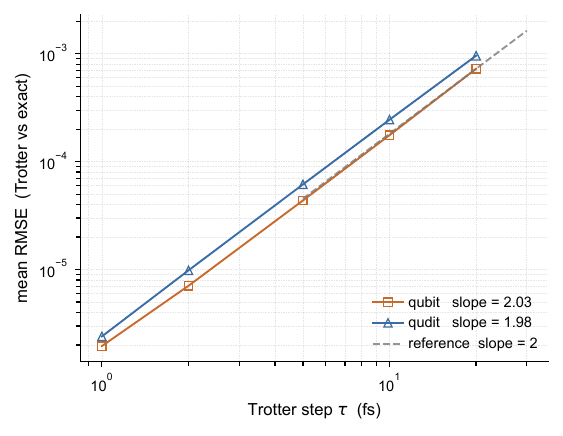}
  \caption{Closed-system population RMSE versus unitary Trotter step size
$\tau$ (log--log). Over the fitted range $\tau\ge5$~fs, the slopes are
$1.981$ for the qudit compilation and $2.027$ for the qubit reference,
consistent with second-order Suzuki scaling. The operating point
$\tau=5$~fs lies in the converged regime.}
  \label{fig:conv_trotter}
\end{figure}


\section{Resource Scaling and Architectural Tradeoffs}
\label{sec:resources}
This section reports programmatically counted resource metrics for one
second-order Trotter step of the vibronic Hamiltonian of
Sec.~\ref{sec:ham}, generalized to $N_c$ chromophores and anchored at the
$N_c=3$ case of Sec.~\ref{sec:methods}. Reference counts are ISA
operations as defined in Sec.~\ref{sec:isa}, while the multitone count is
reported in scheduled control blocks. The underlying pulse-level
implementation is platform-dependent, and for large $N_c$ not every level
pair needs to correspond to a single directly driven physical transition.

The two trapped-ion benchmarks intentionally follow different Hamiltonian
representations. The qudit compilation applies $H_{\rm lab}$ directly,
whereas the qubit reference follows Ref.~\cite{vu2025} and applies the
displaced-frame Hamiltonian $H_{\rm cqed}$. This is a comparison against the
published qubit formulation. A one-hot qubit register can also represent
$H_{\rm lab}$, since within the single-excitation code space,
$\Pi_R=(I-\sigma_z^R)/2$, so the lab-frame projector is also a single-qubit
operator. The supported architectural advantages of the qudit encoding are
instead computational-register compression and interaction locality. The full
electronic manifold occupies one ion, its projectors are native level
projectors, and electronic exchanges become intra-ion $X_E$ operations.

The platform itself accounts for the largest share of the reduction. On the
superconducting hardware layout of Ref.~\cite{vu2025}, nearest-neighbor
connectivity forces SWAP networks, and one Trotter step of the three-site
model decomposes into $84$ CNOT, $9$ conditional-displacement, and $3$ SNAP
gates. 

That estimate counts each Hamiltonian term once, includes only two-qubit
and qubit--mode gates, and prices the qubit part for a nearest-neighbor
line. It is a resource estimate for that layout and is not in the same
counting convention as the trapped-ion circuits below, which counts every
operation of a full symmetric step. We therefore use it as context only and
compare the two trapped-ion encodings against each other.
The corresponding layout for a general one-dimensional $N_c$-chromophore
array uses $2N_c-2$ transmons and $2N_c-2$ cavities. Neither trapped-ion compilation
contains explicit CNOT or SWAP operations at the logical level. Instead, the corresponding interactions are
implemented directly by the native operations, with the shared motional bus
coupling the discrete register to the motional modes. The trapped-ion
register must in turn supply $2N_c-2$ spectrally resolvable motional modes
from the ion crystal, so the compression concerns the discrete computational
register rather than the oscillator resources, which both platforms provide
in comparable number. The qudit register then
adds the compression and locality advantages of the following subsections on
top of this platform-level reduction.

\subsection{Computational-Register Footprint}

The first resource difference is the size of the discrete computational
register. In the one-qubit-per-state trapped-ion encoding, with one
computational qubit per ion, $N_c$ excited electronic states are represented
by $N_c$ qubits, with the reference state represented by the all-zero state.
Thus,
\begin{equation}
N_{\rm comp}^{(\rm qubit)}=N_c .
\end{equation}

When the full manifold fits within one experimentally usable qudit,
$d_{\rm eff}\ge N_c+2$, the corresponding qudit encoding requires only
\begin{equation}
N_{\rm comp}^{(\rm qudit)}=1 .
\end{equation}
The computational-register compression factor in this regime is therefore
\begin{equation}
\mathcal{C}_{\rm ion}
=
\frac{N_{\rm comp}^{(\rm qubit)}}
     {N_{\rm comp}^{(\rm qudit)}}
=
N_c .
\end{equation}

For the three-chromophore benchmark, the qubit representation therefore uses
three computational qubits, whereas the qudit representation stores the
entire four-state manifold
$\{\ket{G},\ket{A},\ket{B},\ket{C}\}$ in one ququart ion.

These numbers refer specifically to ions carrying the discrete computational
register, not to the total number of ions in the trap. Additional ions may be
required to provide the desired collective motional-mode structure or to
mediate synthesized multimode operations.

\subsection{Interaction Locality}

Register compression also changes the locality of the electronic
interactions. Consider an electronic coupling
\begin{equation}
H_{RR'}
=
J_{RR'}
\left(
\ket{R}\!\bra{R'}+\ket{R'}\!\bra{R}
\right).
\end{equation}
When $\ket{R}$ and $\ket{R'}$ are levels of the same qudit, this interaction
is an intra-ion transition and is implemented directly by the $X_E$ primitive.
In the one-qubit-per-state representation, the corresponding excitation
exchange acts on two different qubits and is implemented through a two-ion
interaction.

Thus, whenever the relevant electronic manifold fits within one qudit,
edges of the electronic coupling graph are converted from inter-ion
interactions into local transitions within a single ion. In the
three-chromophore benchmark, both $A\leftrightarrow B$ and
$A\leftrightarrow C$ couplings are therefore local $X_E$ operations.

This locality advantage should not be confused with a reduction in the total
number of qudit--oscillator entangling operations. Here,
the generalized qudit compilation additionally reduces the
scheduled control-block count, and the qudit compilation also uses fewer qudit--oscillator entangling
operations, $16$ against $20$ at $N_c=3$, because the electronic exchange
terms become local $X_E$ operations.

\subsection{ISA Operation Counts}
Table~\ref{tab:gatecounts} shows the number of operations required.
The rank-one compilation follows programmatically generated counts that
grow linearly in $N_c$. With the generalized multilevel controls,
compatible $Z_d$ and $X_E$ operations can be combined, reducing the
number of control blocks per step, as verified programmatically across
the system sizes considered.

The
one-qubit-per-state encoding uses a count that grows at the same rate,
see Appendix~\ref{app:runtime}.

\begin{table}[h]
  \centering
  \small
    \begin{tabular}{lccc}
    \toprule
  & Count & Timed pulses & Entangling \\
\midrule
Qubit reference & 32 ISA ops. & 58 & 20 \\
Qudit, reference & 34 ISA ops. & 27 & 16 \\
Qudit, multitone & 27 ctrl. blocks & 25 & 15 \\
\bottomrule
    \end{tabular}
  \caption{Operation counts per second-order Trotter step at $N_c=3$.
  The Count column reports ISA operations for the reference circuits and
scheduled control blocks for the multitone circuit. The Timed pulses column counts
  pulses after compiling to trapped-ion primitives, with the dispersive
  gates $C_{\Pi}R$ and CR each a single pulse and
    the $J_l$ term of the qubit reference decomposed with each CNOT counted as
  one pulse (Appendix~\ref{app:qubitport}). The two central
  $C_{\Pi_x}D$ terms count as two logical operations executed in one pulse.
  The Entangling column reports entangling ISA operations for the reference
circuits and entangling control blocks for the multitone circuit. The qudit circuits compile $H_{\rm lab}$, the qubit reference
  compiles $H_{\rm cqed}$.}
  \label{tab:gatecounts}
\end{table}

At the logical level the counts are close, $34$ against $32$ at $N_c=3$,
with both growing linearly in $N_c$ and the multitone qudit version
growing more slowly, where
the reference count reflects the number of Hamiltonian terms while the multitone count
summarizes the number of control blocks, with $Z_d$, $X_E$, and the central $C_{\Pi_x}D$ each fused
into one block. After compilation to timed pulses the reference qudit circuit
uses $27$ and the qubit reference $58$, because each exchange needs two
M{\o}lmer--S{\o}rensen windows and $J_l$ has no native gate. The
entangling count also favors the qudit, $16$ against $20$. The larger
difference is in time, since the qubit cost sits in a few long
M{\o}lmer--S{\o}rensen windows rather than in many short pulses, as
Appendix~\ref{app:runtime} shows.
This reduction is specific to the benchmark circuit and its pricing model.

Its resource benefit
instead lies in computational-register compression, interaction locality, and
the direct representation of the physical discrete manifold.

A compact binary encoding would store the manifold in
$\lceil\log_2(N_c+1)\rceil$ qubits, but the state-conditioned terms then
become multi-controlled operators whose decomposition cost grows sharply with
$N_c$. It is not the natural choice for a single-excitation model, and
Ref.~\cite{vu2025} uses the one-qubit-per-state encoding, which we adopt for
the comparison.

\subsection{Architectural Tradeoff}
The qudit architecture exchanges spatial register complexity for local
multilevel-control complexity. When the discrete manifold fits within one
qudit, it reduces the number of computational ions, converts electronic
couplings into intra-ion transitions, and represents state-conditioned
operators such as $\Pi_R=\ket{R}\!\bra{R}$ directly. 
The qudit representation permits the lab-frame vibronic Hamiltonian to
compile term by term in native level-selective operations. The qubit
reference considered here instead follows the displaced-frame formulation
of Ref.~\cite{vu2025}, although the one-hot encoding can also represent
$H_{\rm lab}$ within its code space.
A mediator ancilla ion is needed only when the synthesized beam splitter
itself is invoked, while universality and the vibronic circuits require none.

The corresponding cost is spectral-control complexity. The number of internal
levels and spectrally resolved controls grows with $N_c$, and these must be
resolved, calibrated, and controlled while suppressing leakage and
off-resonant excitation. The useful operating point is therefore determined
by a hardware-dependent tradeoff between computational-ion compression and
multilevel-control complexity, rather than by qudit dimension alone.

\subsection{Projected Runtime}
Under the timing model of Appendix~\ref{app:runtime}, one Trotter step at
$N_c=3$ is projected at $210~\mu$s on the qudit against
$2208~\mu$s for the qubit reference run sequentially, a ratio of $10.5$
under the sequential timing model.
The projection rests on three
assumptions. First, the dispersive detuning behind $C_{\Pi}R$ is an
operating point chosen in this work. We
bound it from below by the sideband Rabi frequency and from above by the
motional mode spacing, and report the projected ratio across that window
in Appendix~\ref{app:pulses}. Second, every gate is priced from its own compiled angle by a stated rule
(Appendix~\ref{app:runtime}), and the $C_{\Pi}R$ duration is fixed by a
physical duration scan rather than by the formula alone. The M{\o}lmer--S{\o}rensen time is taken as
angle independent, which is the measured practice of
Ref.~\cite{meth2025} and sets most of the qubit time. Third, several
primitives are independently replayed under the physical drive
Hamiltonian, but not all, and the full Trotter circuit is not simulated
pulse by pulse. The projected advantage is therefore a compiled-circuit estimate
under the stated pulse parameters.

\subsection{Scaling Beyond One Ion}
\label{sec:multiion}
One ion holds at most $K=d_{\rm eff}-2$ excited states, since one level
is the reference and one is the auxiliary. Our $^{40}$Ca$^+$ layout of
Sec.~\ref{sec:encoding} reserves one more sublevel and holds $K=5$. When
$N_c$ exceeds $K$,
the manifold is spread over
\begin{equation}
N_{\rm ion}=\left\lceil\frac{N_c}{K}\right\rceil
\end{equation}
computational ions, and the register compression becomes a constant factor
of $K$ per ion. Each ion keeps its own reference level, and the global
$\ket{G}$ is every ion in its reference level. Gates inside one ion keep
their local form. An electronic coupling whose two levels sit on different
ions moves the excitation from one ion to the other, and on trapped ions
it is built from M{\o}lmer--S{\o}rensen interactions between qudit
levels~\cite{hrmo2023}. We price it as the qubit reference prices its
exchange, two windows for the exchange and one window for the
link-modulated term. Universality is unchanged. Local control on each ion
from $X_E$ and $Z_d$ together with the two-ion exchange on a connected set
of ion pairs gives full control of the multi-ion register, and
$C_{\Pi_z}D$ on each ion couples it to the modes as in
Appendix~\ref{app:universality}.

The projected speedup has a closed form. Following the one-dimensional
chain extension of Ref.~\cite{vu2025}, each of the $N_c-1$ bonds carries
two exchange terms and one or two mode-modulated terms, one on an end
bond and two on an interior bond. Under the timing model of
Appendix~\ref{app:runtime}, a bond costs $c_{\rm out}$ when its two
levels sit on different ions, $1,000~\mu$s on an end bond and
$1,200~\mu$s on an interior bond, and
$c_{\rm in}$ when they sit on the same ion, about
$1.3$ or $2.0~\mu$s from the class-mean times of $X_E$ and $C_{\Pi_x}D$.
The qubit reference pays $c_{\rm out}$ on every bond. Packing $K$
consecutive chromophores per ion gives $M=\lceil N_c/K\rceil$ ions, with
$M-1$ bonds crossing ions and $N_c-M$ inside. Let $T_0$ be the time of the remaining gates, which under the timing
model is the same on both sides at every $N_c$. Then
\begin{equation}
\label{eq:cluster}
S=\frac{T_0+\sum_{\rm bonds} c_{\rm out}}
{T_0+\sum_{\rm inside} c_{\rm in}+\sum_{\rm crossing} c_{\rm out}} .
\end{equation}
Eq.~\eqref{eq:cluster} reproduces every entry of Table~\ref{tab:runtime}.
The class-mean gate times and the one-window cost of each mode-modulated
term are assumptions of the timing model.

Three properties follow. First, when the whole manifold fits in one ion,
$N_c\le K$, no bond crosses and $S$ grows with $N_c$. The first crossing
bond at $N_c=K+1$ adds $c_{\rm out}$ to the qudit time, and every row of
Table~\ref{tab:runtime} and of the split cases below has $S$ smaller
there than at $N_c=K$. Second, for long chains with $N_c\gg K$ both times grow linearly
in $N_c$, so $S$ is set by the time each side adds per extra
chromophore. Adding one chromophore adds one bond. The qubit reference
pays $c_{\rm out}$ for it. The qudit pays $c_{\rm in}$ for $K-1$ bonds
out of every $K$ and $c_{\rm out}$ for the one that crosses ions.
Writing $a$ for the growth of $T_0$ per chromophore and taking
$c_{\rm in}$ and $c_{\rm out}$ at their interior bond values,
$S$ approaches
\begin{equation}
S_\infty=\frac{a+c_{\rm out}}{a+c_{\rm in}\,(1-1/K)+c_{\rm out}/K},
\end{equation}
which lies below $K$ and above one. Third, $S\ge1$ for all $N_c$, since
$T_0$ is equal on both sides, the crossing bonds cost the same, and the
inside bonds cost less on the qudit.

For the three-chromophore benchmark at $d_{\rm eff}=4$ with one
auxiliary per ion, $K=2$, two ions are needed and one bond crosses. The
projected step time rises from $210~\mu$s to $1209~\mu$s and $S$ drops
from $10.5$ to $1.8$. On $^{40}$Ca$^+$ with $K=5$, a five-chromophore
chain still fits in one ion at $S=12.8$. Six, eight, and twelve
chromophores need two, two, and three ions at $S=4.2$, $4.7$, and
$4.1$, and $S_\infty=4.1$. The gain of the qudit register is therefore
the number of bonds that stay inside one ion. Among the cases evaluated
it is largest when the whole manifold fits, and ions with many usable
levels, such as $^{137}$Ba$^+$~\cite{low2025qudit, shi2025}, keep a
larger constant gain for long chains. The same argument extends beyond
a chain. For an interaction graph that is clustered, where most
couplings sit inside groups no larger than $K$ and only a few cross
between groups, the fraction of couplings that stay inside one ion can
be higher than in a chain, and the projected gain under the stated
timing model can be larger.


\section{Discussion}
\label{sec:discussion}

The principal benefit of the proposed architecture remains structural,
although generalized multilevel controls also reduce the number of scheduled
control blocks. 

When the relevant discrete manifold fits within
a single qudit, the computational register is compressed into one multilevel
ion, electronic couplings become intra-ion transitions, and state-conditioned
operators are represented directly. For the vibronic benchmark this enables
the lab-frame Hamiltonian to be compiled without the displaced-frame
transformation used in the qubit formulation.

These advantages do not imply a demonstrated improvement in fidelity. For
the three-chromophore benchmark the reference qudit circuit compiles to $27$ timed
pulses, where the qubit reference needs $58$, and under the timing model
of Appendix~\ref{app:runtime} one Trotter step has a projected speedup of $10.5$
on the qudit under the sequential timing model. That projection is an estimate under the
stated pulse parameters.

The practical
operating regime therefore depends on the hardware cost and fidelity of
multilevel control.

The relevant scaling parameter is the experimentally usable qudit dimension
$d_{\rm eff}$. Increasing $d_{\rm eff}$ allows larger discrete manifolds to be
encoded within one ion, but also increases spectral-control requirements.
In particular, the fidelity scaling of the native qudit--oscillator
interactions at large $d$ remains to be experimentally established.

Although vibronic dynamics provides the validation considered here, the
instruction set and universality result are not specific to molecular
simulation and apply more broadly to hybrid problems involving multilevel
discrete systems coupled to bosonic modes. The same control structure may
also provide a basis for qudit-assisted bosonic state preparation,
error-correction protocols, and bosonic-code constructions in which a
multilevel ancilla controls oscillator operations.


\section{Conclusion}
\label{sec:conclusion}
Scaling trapped-ion processors by adding ions while preserving performance is
challenging. Exploiting the multilevel structure within each ion is a
complementary scaling direction, and this work develops it for hybrid
computation. We introduced a hardware-native instruction set for hybrid
qudit--oscillator computation on trapped ions, consisting of five native primitives for
qudit control and qudit--motion interactions. Three of them, $X_E$, $Z_d$, and $C_{\Pi_z}D$, already generate
Hamiltonians spanning arbitrary qudit operators tensored with polynomials in
the quadratures of all modes. An ancilla-mediated beam splitter is provided as an
efficient Gaussian mode-mixing primitive. The phase-space and sideband
computational models coexist natively within one architecture.

The principal advantage of the qudit register remains architectural, while
generalized multilevel controls additionally reduce the scheduled control-block count.  

Whenever a discrete manifold fits within the usable
dimension $d_{\rm eff}$, $N_c$ computational qubit ions can be represented by
one multilevel ion, couplings between encoded states become intra-ion
transitions, and state projectors $\Pi_R$ are local single-qudit operators, so the
lab-frame Hamiltonian maps directly onto the ISA without a frame
transformation.
The number of internal levels and spectrally resolved controls grows with
$N_c$,
and a larger $d$ raises the spectral addressing and calibration burden,
while the resulting fidelity scaling remains hardware dependent
(Sec.~\ref{sec:resources}).

A qubit encoding does not incur the same local
multilevel-control scaling, because each computational unit remains
two-level, but pays instead with $N_c$ computational ions and inter-ion
entangling operations, along with its own system-size effects in addressing
and mode structure. For the vibronic model, ISA operation counts grow linearly in $N_c$ for both
compilations and multitone control blocks grow more slowly,
while the qudit encoding reduces the computational register from
$N_c$ ions to one whenever $N_c+2\leq d_{\rm eff}$. Past that point the
manifold is spread over $\lceil N_c/K\rceil$ ions with $K$ excited
states per ion. The compression is then a constant factor per ion, and
the projected runtime ratio settles at a constant above one that grows
with $K$, $4.1$ on $^{40}$Ca$^+$ under the timing model, so ions with many usable levels keep a larger gain for long
chains (Sec.~\ref{sec:multiion}).

We demonstrated this structure on the dissipative vibronic dynamics of a
three-chromophore complex previously simulated on superconducting hybrid
hardware, where each Trotter step requires $84$ CNOT gates
(Sec.~\ref{sec:resources}). 
That estimate counts each term once for a nearest-neighbor layout. On
trapped ions the shared motional bus removes the routing overhead in both
encodings.

Its
four-state electronic manifold is encoded in one ququart, the vibrations
remain native motional modes, and the lab-frame Hamiltonian maps term by term
onto the instruction set. One ququart replaces
three computational qubits
at $34$ reference ISA operations against $32$, compiling to $27$ timed
pulses against $58$, and one Trotter step is projected to run $10.5$
times faster on the qudit under the stated sequential timing model, and
the dynamics reproduce the published dissipative reference
trajectories to approximately $1.1\%$ average RMSE.


\begin{acknowledgments}
We thank Ken Brown and Jungsang Kim (Duke University) for insightful discussions.
This work was supported in part by National Science Foundation grants
OSI-2531350, PHY-2325080 and OMA-2120757.
Yuan Liu and Frank Mueller were also supported in part by the
U.S. Department of Energy, Office of Science, Advanced Scientific
Computing Research, under contract number DE-SC0025384.
\end{acknowledgments}


\section*{Data Availability}
The code and data supporting the findings of this study are openly available at \url{https://github.com/fmuelle4711/hybrid-isa}.

\bibliographystyle{apsrev4-2}
\bibliography{references}

\begin{thebibliography}{47}%
\makeatletter
\providecommand \@ifxundefined [1]{%
 \@ifx{#1\undefined}
}%
\providecommand \@ifnum [1]{%
 \ifnum #1\expandafter \@firstoftwo
 \else \expandafter \@secondoftwo
 \fi
}%
\providecommand \@ifx [1]{%
 \ifx #1\expandafter \@firstoftwo
 \else \expandafter \@secondoftwo
 \fi
}%
\providecommand \natexlab [1]{#1}%
\providecommand \enquote  [1]{``#1''}%
\providecommand \bibnamefont  [1]{#1}%
\providecommand \bibfnamefont [1]{#1}%
\providecommand \citenamefont [1]{#1}%
\providecommand \href@noop [0]{\@secondoftwo}%
\providecommand \href [0]{\begingroup \@sanitize@url \@href}%
\providecommand \@href[1]{\@@startlink{#1}\@@href}%
\providecommand \@@href[1]{\endgroup#1\@@endlink}%
\providecommand \@sanitize@url [0]{\catcode `\\12\catcode `\$12\catcode `\&12\catcode `\#12\catcode `\^12\catcode `\_12\catcode `\%12\relax}%
\providecommand \@@startlink[1]{}%
\providecommand \@@endlink[0]{}%
\providecommand \url  [0]{\begingroup\@sanitize@url \@url }%
\providecommand \@url [1]{\endgroup\@href {#1}{\urlprefix }}%
\providecommand \urlprefix  [0]{URL }%
\providecommand \Eprint [0]{\href }%
\providecommand \doibase [0]{https://doi.org/}%
\providecommand \selectlanguage [0]{\@gobble}%
\providecommand \bibinfo  [0]{\@secondoftwo}%
\providecommand \bibfield  [0]{\@secondoftwo}%
\providecommand \translation [1]{[#1]}%
\providecommand \BibitemOpen [0]{}%
\providecommand \bibitemStop [0]{}%
\providecommand \bibitemNoStop [0]{.\EOS\space}%
\providecommand \EOS [0]{\spacefactor3000\relax}%
\providecommand \BibitemShut  [1]{\csname bibitem#1\endcsname}%
\let\auto@bib@innerbib\@empty
\bibitem [{\citenamefont {Ransford}\ \emph {et~al.}(2026)\citenamefont {Ransford} \emph {et~al.}}]{ransford2026}%
  \BibitemOpen
  \bibfield  {author} {\bibinfo {author} {\bibfnamefont {A.}~\bibnamefont {Ransford}} \emph {et~al.},\ }\href {https://doi.org/10.1038/s41586-026-10676-4} {\bibfield  {journal} {\bibinfo  {journal} {Nature}\ }\textbf {\bibinfo {volume} {655}},\ \bibinfo {pages} {81} (\bibinfo {year} {2026})}\BibitemShut {NoStop}%
\bibitem [{\citenamefont {Pino}\ \emph {et~al.}(2021)\citenamefont {Pino} \emph {et~al.}}]{pino2021}%
  \BibitemOpen
  \bibfield  {author} {\bibinfo {author} {\bibfnamefont {J.~M.}\ \bibnamefont {Pino}} \emph {et~al.},\ }\href {https://doi.org/10.1038/s41586-021-03318-4} {\bibfield  {journal} {\bibinfo  {journal} {Nature}\ }\textbf {\bibinfo {volume} {592}},\ \bibinfo {pages} {209} (\bibinfo {year} {2021})}\BibitemShut {NoStop}%
\bibitem [{\citenamefont {Low}\ \emph {et~al.}(2025)\citenamefont {Low} \emph {et~al.}}]{low2025qudit}%
  \BibitemOpen
  \bibfield  {author} {\bibinfo {author} {\bibfnamefont {P.~J.}\ \bibnamefont {Low}} \emph {et~al.},\ }\href {https://doi.org/10.1038/s41534-025-01031-y} {\bibfield  {journal} {\bibinfo  {journal} {npj Quantum Information}\ }\textbf {\bibinfo {volume} {11}},\ \bibinfo {pages} {85} (\bibinfo {year} {2025})}\BibitemShut {NoStop}%
\bibitem [{\citenamefont {Ringbauer}\ \emph {et~al.}(2022)\citenamefont {Ringbauer} \emph {et~al.}}]{ringbauer2022}%
  \BibitemOpen
  \bibfield  {author} {\bibinfo {author} {\bibfnamefont {M.}~\bibnamefont {Ringbauer}} \emph {et~al.},\ }\href {https://doi.org/10.1038/s41567-022-01658-0} {\bibfield  {journal} {\bibinfo  {journal} {Nature Physics}\ }\textbf {\bibinfo {volume} {18}},\ \bibinfo {pages} {1053} (\bibinfo {year} {2022})}\BibitemShut {NoStop}%
\bibitem [{\citenamefont {MacDonell}\ \emph {et~al.}(2021)\citenamefont {MacDonell}, \citenamefont {Dickerson}, \citenamefont {Birch}, \citenamefont {Kumar}, \citenamefont {Edmunds}, \citenamefont {Biercuk}, \citenamefont {Hempel},\ and\ \citenamefont {Kassal}}]{macdonell2021}%
  \BibitemOpen
  \bibfield  {author} {\bibinfo {author} {\bibfnamefont {R.~J.}\ \bibnamefont {MacDonell}}, \bibinfo {author} {\bibfnamefont {C.~E.}\ \bibnamefont {Dickerson}}, \bibinfo {author} {\bibfnamefont {C.~J.~T.}\ \bibnamefont {Birch}}, \bibinfo {author} {\bibfnamefont {A.}~\bibnamefont {Kumar}}, \bibinfo {author} {\bibfnamefont {C.~L.}\ \bibnamefont {Edmunds}}, \bibinfo {author} {\bibfnamefont {M.~J.}\ \bibnamefont {Biercuk}}, \bibinfo {author} {\bibfnamefont {C.}~\bibnamefont {Hempel}},\ and\ \bibinfo {author} {\bibfnamefont {I.}~\bibnamefont {Kassal}},\ }\href {https://doi.org/10.1039/D1SC02142G} {\bibfield  {journal} {\bibinfo  {journal} {Chemical Science}\ }\textbf {\bibinfo {volume} {12}},\ \bibinfo {pages} {9794} (\bibinfo {year} {2021})}\BibitemShut {NoStop}%
\bibitem [{\citenamefont {Valahu}\ \emph {et~al.}(2023)\citenamefont {Valahu}, \citenamefont {Olaya-Agudelo}, \citenamefont {MacDonell}, \citenamefont {Navickas}, \citenamefont {Rao}, \citenamefont {Millican}, \citenamefont {P\'erez-S\'anchez}, \citenamefont {Yuen-Zhou}, \citenamefont {Biercuk}, \citenamefont {Hempel}, \citenamefont {Tan},\ and\ \citenamefont {Kassal}}]{valahu2023}%
  \BibitemOpen
  \bibfield  {author} {\bibinfo {author} {\bibfnamefont {C.~H.}\ \bibnamefont {Valahu}}, \bibinfo {author} {\bibfnamefont {V.~C.}\ \bibnamefont {Olaya-Agudelo}}, \bibinfo {author} {\bibfnamefont {R.~J.}\ \bibnamefont {MacDonell}}, \bibinfo {author} {\bibfnamefont {T.}~\bibnamefont {Navickas}}, \bibinfo {author} {\bibfnamefont {A.~D.}\ \bibnamefont {Rao}}, \bibinfo {author} {\bibfnamefont {M.~J.}\ \bibnamefont {Millican}}, \bibinfo {author} {\bibfnamefont {J.~B.}\ \bibnamefont {P\'erez-S\'anchez}}, \bibinfo {author} {\bibfnamefont {J.}~\bibnamefont {Yuen-Zhou}}, \bibinfo {author} {\bibfnamefont {M.~J.}\ \bibnamefont {Biercuk}}, \bibinfo {author} {\bibfnamefont {C.}~\bibnamefont {Hempel}}, \bibinfo {author} {\bibfnamefont {T.~R.}\ \bibnamefont {Tan}},\ and\ \bibinfo {author} {\bibfnamefont {I.}~\bibnamefont {Kassal}},\ }\href {https://doi.org/10.1038/s41557-023-01300-3} {\bibfield  {journal} {\bibinfo  {journal} {Nature Chemistry}\ }\textbf {\bibinfo {volume} {15}},\ \bibinfo {pages} {1503} (\bibinfo {year}
  {2023})}\BibitemShut {NoStop}%
\bibitem [{\citenamefont {Davoudi}\ \emph {et~al.}(2021)\citenamefont {Davoudi}, \citenamefont {Linke},\ and\ \citenamefont {Pagano}}]{davoudi2021}%
  \BibitemOpen
  \bibfield  {author} {\bibinfo {author} {\bibfnamefont {Z.}~\bibnamefont {Davoudi}}, \bibinfo {author} {\bibfnamefont {N.~M.}\ \bibnamefont {Linke}},\ and\ \bibinfo {author} {\bibfnamefont {G.}~\bibnamefont {Pagano}},\ }\href {https://doi.org/10.1103/PhysRevResearch.3.043072} {\bibfield  {journal} {\bibinfo  {journal} {Phys. Rev. Research}\ }\textbf {\bibinfo {volume} {3}},\ \bibinfo {pages} {043072} (\bibinfo {year} {2021})}\BibitemShut {NoStop}%
\bibitem [{\citenamefont {Pagano}\ \emph {et~al.}(2025)\citenamefont {Pagano}, \citenamefont {Adamczyk},\ and\ \citenamefont {So}}]{pagano2025}%
  \BibitemOpen
  \bibfield  {author} {\bibinfo {author} {\bibfnamefont {G.}~\bibnamefont {Pagano}}, \bibinfo {author} {\bibfnamefont {W.}~\bibnamefont {Adamczyk}},\ and\ \bibinfo {author} {\bibfnamefont {V.}~\bibnamefont {So}},\ }\bibfield  {journal} {\bibinfo  {journal} {arXiv preprint arXiv:2505.20412}\ }\href {https://doi.org/10.48550/arXiv.2505.20412} {10.48550/arXiv.2505.20412} (\bibinfo {year} {2025}),\ \Eprint {https://arxiv.org/abs/2505.20412} {arXiv:2505.20412 [quant-ph]} \BibitemShut {NoStop}%
\bibitem [{\citenamefont {So}\ \emph {et~al.}(2024)\citenamefont {So}, \citenamefont {Suganthi}, \citenamefont {Menon}, \citenamefont {Zhu}, \citenamefont {Zhuravel}, \citenamefont {Pu}, \citenamefont {Wolynes}, \citenamefont {Onuchic},\ and\ \citenamefont {Pagano}}]{so2024}%
  \BibitemOpen
  \bibfield  {author} {\bibinfo {author} {\bibfnamefont {V.}~\bibnamefont {So}}, \bibinfo {author} {\bibfnamefont {M.~D.}\ \bibnamefont {Suganthi}}, \bibinfo {author} {\bibfnamefont {A.}~\bibnamefont {Menon}}, \bibinfo {author} {\bibfnamefont {M.}~\bibnamefont {Zhu}}, \bibinfo {author} {\bibfnamefont {R.}~\bibnamefont {Zhuravel}}, \bibinfo {author} {\bibfnamefont {H.}~\bibnamefont {Pu}}, \bibinfo {author} {\bibfnamefont {P.~G.}\ \bibnamefont {Wolynes}}, \bibinfo {author} {\bibfnamefont {J.~N.}\ \bibnamefont {Onuchic}},\ and\ \bibinfo {author} {\bibfnamefont {G.}~\bibnamefont {Pagano}},\ }\href {https://doi.org/10.1126/sciadv.ads8011} {\bibfield  {journal} {\bibinfo  {journal} {Science Advances}\ }\textbf {\bibinfo {volume} {10}},\ \bibinfo {pages} {eads8011} (\bibinfo {year} {2024})}\BibitemShut {NoStop}%
\bibitem [{\citenamefont {Vu}\ \emph {et~al.}(2025)\citenamefont {Vu} \emph {et~al.}}]{vu2025}%
  \BibitemOpen
  \bibfield  {author} {\bibinfo {author} {\bibfnamefont {N.~P.}\ \bibnamefont {Vu}} \emph {et~al.},\ }\href {https://doi.org/10.1021/acs.jctc.5c00315} {\bibfield  {journal} {\bibinfo  {journal} {J. Chem. Theory Comput.}\ }\textbf {\bibinfo {volume} {21}},\ \bibinfo {pages} {6258} (\bibinfo {year} {2025})}\BibitemShut {NoStop}%
\bibitem [{\citenamefont {Liu}\ \emph {et~al.}(2026)\citenamefont {Liu} \emph {et~al.}}]{liu2026isa}%
  \BibitemOpen
  \bibfield  {author} {\bibinfo {author} {\bibfnamefont {Y.}~\bibnamefont {Liu}} \emph {et~al.},\ }\href {https://doi.org/10.1103/4rf7-9tfx} {\bibfield  {journal} {\bibinfo  {journal} {PRX Quantum}\ }\textbf {\bibinfo {volume} {7}},\ \bibinfo {pages} {010201} (\bibinfo {year} {2026})}\BibitemShut {NoStop}%
\bibitem [{\citenamefont {Crane}\ \emph {et~al.}(2024)\citenamefont {Crane}, \citenamefont {Smith}, \citenamefont {Tomesh}, \citenamefont {Eickbusch}, \citenamefont {Martyn}, \citenamefont {K{\"u}hn}, \citenamefont {Funcke}, \citenamefont {DeMarco}, \citenamefont {Chuang}, \citenamefont {Wiebe}, \citenamefont {Schuckert},\ and\ \citenamefont {Girvin}}]{crane2024}%
  \BibitemOpen
  \bibfield  {author} {\bibinfo {author} {\bibfnamefont {E.}~\bibnamefont {Crane}}, \bibinfo {author} {\bibfnamefont {K.~C.}\ \bibnamefont {Smith}}, \bibinfo {author} {\bibfnamefont {T.}~\bibnamefont {Tomesh}}, \bibinfo {author} {\bibfnamefont {A.}~\bibnamefont {Eickbusch}}, \bibinfo {author} {\bibfnamefont {J.~M.}\ \bibnamefont {Martyn}}, \bibinfo {author} {\bibfnamefont {S.}~\bibnamefont {K{\"u}hn}}, \bibinfo {author} {\bibfnamefont {L.}~\bibnamefont {Funcke}}, \bibinfo {author} {\bibfnamefont {M.~A.}\ \bibnamefont {DeMarco}}, \bibinfo {author} {\bibfnamefont {I.~L.}\ \bibnamefont {Chuang}}, \bibinfo {author} {\bibfnamefont {N.}~\bibnamefont {Wiebe}}, \bibinfo {author} {\bibfnamefont {A.}~\bibnamefont {Schuckert}},\ and\ \bibinfo {author} {\bibfnamefont {S.~M.}\ \bibnamefont {Girvin}},\ }\bibfield  {journal} {\bibinfo  {journal} {arXiv preprint arXiv:2409.03747}\ }\href {https://doi.org/10.48550/arXiv.2409.03747} {10.48550/arXiv.2409.03747} (\bibinfo {year} {2024}),\ \Eprint
  {https://arxiv.org/abs/2409.03747} {arXiv:2409.03747 [quant-ph]} \BibitemShut {NoStop}%
\bibitem [{\citenamefont {Kemper}\ \emph {et~al.}(2025)\citenamefont {Kemper}, \citenamefont {Alvertis}, \citenamefont {Asaduzzaman}, \citenamefont {Bakalov}, \citenamefont {Baron}, \citenamefont {Bierman} \emph {et~al.}}]{kemper2025}%
  \BibitemOpen
  \bibfield  {author} {\bibinfo {author} {\bibfnamefont {A.~F.}\ \bibnamefont {Kemper}}, \bibinfo {author} {\bibfnamefont {A.}~\bibnamefont {Alvertis}}, \bibinfo {author} {\bibfnamefont {M.}~\bibnamefont {Asaduzzaman}}, \bibinfo {author} {\bibfnamefont {B.~N.}\ \bibnamefont {Bakalov}}, \bibinfo {author} {\bibfnamefont {D.}~\bibnamefont {Baron}}, \bibinfo {author} {\bibfnamefont {J.}~\bibnamefont {Bierman}}, \emph {et~al.},\ }\bibfield  {journal} {\bibinfo  {journal} {arXiv preprint arXiv:2511.13882}\ }\href {https://doi.org/10.48550/arXiv.2511.13882} {10.48550/arXiv.2511.13882} (\bibinfo {year} {2025}),\ \Eprint {https://arxiv.org/abs/2511.13882} {arXiv:2511.13882 [quant-ph]} \BibitemShut {NoStop}%
\bibitem [{\citenamefont {Low}\ \emph {et~al.}(2020)\citenamefont {Low}, \citenamefont {White}, \citenamefont {Cox}, \citenamefont {Day},\ and\ \citenamefont {Senko}}]{low2020}%
  \BibitemOpen
  \bibfield  {author} {\bibinfo {author} {\bibfnamefont {P.~J.}\ \bibnamefont {Low}}, \bibinfo {author} {\bibfnamefont {B.~M.}\ \bibnamefont {White}}, \bibinfo {author} {\bibfnamefont {A.~A.}\ \bibnamefont {Cox}}, \bibinfo {author} {\bibfnamefont {M.~L.}\ \bibnamefont {Day}},\ and\ \bibinfo {author} {\bibfnamefont {C.}~\bibnamefont {Senko}},\ }\href {https://doi.org/10.1103/PhysRevResearch.2.033128} {\bibfield  {journal} {\bibinfo  {journal} {Phys. Rev. Research}\ }\textbf {\bibinfo {volume} {2}},\ \bibinfo {pages} {033128} (\bibinfo {year} {2020})}\BibitemShut {NoStop}%
\bibitem [{\citenamefont {Aksenov}\ \emph {et~al.}(2023)\citenamefont {Aksenov}, \citenamefont {Zalivako}, \citenamefont {Semerikov}, \citenamefont {Borisenko}, \citenamefont {Semenin}, \citenamefont {Sidorov}, \citenamefont {Fedorov}, \citenamefont {Khabarova},\ and\ \citenamefont {Kolachevsky}}]{aksenov2023}%
  \BibitemOpen
  \bibfield  {author} {\bibinfo {author} {\bibfnamefont {M.~A.}\ \bibnamefont {Aksenov}}, \bibinfo {author} {\bibfnamefont {I.~V.}\ \bibnamefont {Zalivako}}, \bibinfo {author} {\bibfnamefont {I.~A.}\ \bibnamefont {Semerikov}}, \bibinfo {author} {\bibfnamefont {A.~S.}\ \bibnamefont {Borisenko}}, \bibinfo {author} {\bibfnamefont {N.~V.}\ \bibnamefont {Semenin}}, \bibinfo {author} {\bibfnamefont {P.~L.}\ \bibnamefont {Sidorov}}, \bibinfo {author} {\bibfnamefont {A.~K.}\ \bibnamefont {Fedorov}}, \bibinfo {author} {\bibfnamefont {K.~Y.}\ \bibnamefont {Khabarova}},\ and\ \bibinfo {author} {\bibfnamefont {N.~N.}\ \bibnamefont {Kolachevsky}},\ }\href {https://doi.org/10.1103/PhysRevA.107.052612} {\bibfield  {journal} {\bibinfo  {journal} {Phys. Rev. A}\ }\textbf {\bibinfo {volume} {107}},\ \bibinfo {pages} {052612} (\bibinfo {year} {2023})}\BibitemShut {NoStop}%
\bibitem [{\citenamefont {Wineland}\ \emph {et~al.}(1998)\citenamefont {Wineland}, \citenamefont {Monroe}, \citenamefont {Itano}, \citenamefont {Leibfried}, \citenamefont {King},\ and\ \citenamefont {Meekhof}}]{wineland1998}%
  \BibitemOpen
  \bibfield  {author} {\bibinfo {author} {\bibfnamefont {D.~J.}\ \bibnamefont {Wineland}}, \bibinfo {author} {\bibfnamefont {C.}~\bibnamefont {Monroe}}, \bibinfo {author} {\bibfnamefont {W.~M.}\ \bibnamefont {Itano}}, \bibinfo {author} {\bibfnamefont {D.}~\bibnamefont {Leibfried}}, \bibinfo {author} {\bibfnamefont {B.~E.}\ \bibnamefont {King}},\ and\ \bibinfo {author} {\bibfnamefont {D.~M.}\ \bibnamefont {Meekhof}},\ }\href {https://doi.org/10.6028/jres.103.019} {\bibfield  {journal} {\bibinfo  {journal} {J. Res. Natl. Inst. Stand. Technol.}\ }\textbf {\bibinfo {volume} {103}},\ \bibinfo {pages} {259} (\bibinfo {year} {1998})}\BibitemShut {NoStop}%
\bibitem [{\citenamefont {Leibfried}\ \emph {et~al.}(2003)\citenamefont {Leibfried}, \citenamefont {Blatt}, \citenamefont {Monroe},\ and\ \citenamefont {Wineland}}]{leibfried2003}%
  \BibitemOpen
  \bibfield  {author} {\bibinfo {author} {\bibfnamefont {D.}~\bibnamefont {Leibfried}}, \bibinfo {author} {\bibfnamefont {R.}~\bibnamefont {Blatt}}, \bibinfo {author} {\bibfnamefont {C.}~\bibnamefont {Monroe}},\ and\ \bibinfo {author} {\bibfnamefont {D.}~\bibnamefont {Wineland}},\ }\href {https://doi.org/10.1103/RevModPhys.75.281} {\bibfield  {journal} {\bibinfo  {journal} {Rev. Mod. Phys.}\ }\textbf {\bibinfo {volume} {75}},\ \bibinfo {pages} {281} (\bibinfo {year} {2003})}\BibitemShut {NoStop}%
\bibitem [{\citenamefont {Sutherland}\ and\ \citenamefont {Srinivas}(2021)}]{sutherland2021}%
  \BibitemOpen
  \bibfield  {author} {\bibinfo {author} {\bibfnamefont {R.~T.}\ \bibnamefont {Sutherland}}\ and\ \bibinfo {author} {\bibfnamefont {R.}~\bibnamefont {Srinivas}},\ }\href {https://doi.org/10.1103/PhysRevA.104.032609} {\bibfield  {journal} {\bibinfo  {journal} {Phys. Rev. A}\ }\textbf {\bibinfo {volume} {104}},\ \bibinfo {pages} {032609} (\bibinfo {year} {2021})}\BibitemShut {NoStop}%
\bibitem [{\citenamefont {M{\o}lmer}\ and\ \citenamefont {S{\o}rensen}(1999)}]{molmer1999}%
  \BibitemOpen
  \bibfield  {author} {\bibinfo {author} {\bibfnamefont {K.}~\bibnamefont {M{\o}lmer}}\ and\ \bibinfo {author} {\bibfnamefont {A.}~\bibnamefont {S{\o}rensen}},\ }\href {https://doi.org/10.1103/PhysRevLett.82.1835} {\bibfield  {journal} {\bibinfo  {journal} {Phys. Rev. Lett.}\ }\textbf {\bibinfo {volume} {82}},\ \bibinfo {pages} {1835} (\bibinfo {year} {1999})}\BibitemShut {NoStop}%
\bibitem [{\citenamefont {S{\o}rensen}\ and\ \citenamefont {M{\o}lmer}(2000)}]{sorensen2000}%
  \BibitemOpen
  \bibfield  {author} {\bibinfo {author} {\bibfnamefont {A.}~\bibnamefont {S{\o}rensen}}\ and\ \bibinfo {author} {\bibfnamefont {K.}~\bibnamefont {M{\o}lmer}},\ }\href {https://doi.org/10.1103/PhysRevA.62.022311} {\bibfield  {journal} {\bibinfo  {journal} {Phys. Rev. A}\ }\textbf {\bibinfo {volume} {62}},\ \bibinfo {pages} {022311} (\bibinfo {year} {2000})}\BibitemShut {NoStop}%
\bibitem [{\citenamefont {Lloyd}\ and\ \citenamefont {Braunstein}(1999)}]{lloyd1999}%
  \BibitemOpen
  \bibfield  {author} {\bibinfo {author} {\bibfnamefont {S.}~\bibnamefont {Lloyd}}\ and\ \bibinfo {author} {\bibfnamefont {S.~L.}\ \bibnamefont {Braunstein}},\ }\href {https://doi.org/10.1103/PhysRevLett.82.1784} {\bibfield  {journal} {\bibinfo  {journal} {Phys. Rev. Lett.}\ }\textbf {\bibinfo {volume} {82}},\ \bibinfo {pages} {1784} (\bibinfo {year} {1999})}\BibitemShut {NoStop}%
\bibitem [{\citenamefont {Brif}\ \emph {et~al.}(2010)\citenamefont {Brif}, \citenamefont {Chakrabarti},\ and\ \citenamefont {Rabitz}}]{brif2010}%
  \BibitemOpen
  \bibfield  {author} {\bibinfo {author} {\bibfnamefont {C.}~\bibnamefont {Brif}}, \bibinfo {author} {\bibfnamefont {R.}~\bibnamefont {Chakrabarti}},\ and\ \bibinfo {author} {\bibfnamefont {H.}~\bibnamefont {Rabitz}},\ }\href {https://doi.org/10.1088/1367-2630/12/7/075008} {\bibfield  {journal} {\bibinfo  {journal} {New J. Phys.}\ }\textbf {\bibinfo {volume} {12}},\ \bibinfo {pages} {075008} (\bibinfo {year} {2010})}\BibitemShut {NoStop}%
\bibitem [{\citenamefont {Krastanov}\ \emph {et~al.}(2015)\citenamefont {Krastanov} \emph {et~al.}}]{krastanov2015}%
  \BibitemOpen
  \bibfield  {author} {\bibinfo {author} {\bibfnamefont {S.}~\bibnamefont {Krastanov}} \emph {et~al.},\ }\href {https://doi.org/10.1103/PhysRevA.92.040303} {\bibfield  {journal} {\bibinfo  {journal} {Phys. Rev. A}\ }\textbf {\bibinfo {volume} {92}},\ \bibinfo {pages} {040303} (\bibinfo {year} {2015})}\BibitemShut {NoStop}%
\bibitem [{\citenamefont {Heeres}\ \emph {et~al.}(2015)\citenamefont {Heeres}, \citenamefont {Vlastakis}, \citenamefont {Holland}, \citenamefont {Krastanov}, \citenamefont {Albert}, \citenamefont {Frunzio}, \citenamefont {Jiang},\ and\ \citenamefont {Schoelkopf}}]{heeres2015}%
  \BibitemOpen
  \bibfield  {author} {\bibinfo {author} {\bibfnamefont {R.~W.}\ \bibnamefont {Heeres}}, \bibinfo {author} {\bibfnamefont {B.}~\bibnamefont {Vlastakis}}, \bibinfo {author} {\bibfnamefont {E.}~\bibnamefont {Holland}}, \bibinfo {author} {\bibfnamefont {S.}~\bibnamefont {Krastanov}}, \bibinfo {author} {\bibfnamefont {V.~V.}\ \bibnamefont {Albert}}, \bibinfo {author} {\bibfnamefont {L.}~\bibnamefont {Frunzio}}, \bibinfo {author} {\bibfnamefont {L.}~\bibnamefont {Jiang}},\ and\ \bibinfo {author} {\bibfnamefont {R.~J.}\ \bibnamefont {Schoelkopf}},\ }\href {https://doi.org/10.1103/PhysRevLett.115.137002} {\bibfield  {journal} {\bibinfo  {journal} {Phys. Rev. Lett.}\ }\textbf {\bibinfo {volume} {115}},\ \bibinfo {pages} {137002} (\bibinfo {year} {2015})}\BibitemShut {NoStop}%
\bibitem [{\citenamefont {Eickbusch}\ \emph {et~al.}(2022)\citenamefont {Eickbusch}, \citenamefont {Sivak}, \citenamefont {Ding}, \citenamefont {Elder}, \citenamefont {Jha}, \citenamefont {Venkatraman}, \citenamefont {Royer}, \citenamefont {Girvin}, \citenamefont {Schoelkopf},\ and\ \citenamefont {Devoret}}]{eickbusch2022}%
  \BibitemOpen
  \bibfield  {author} {\bibinfo {author} {\bibfnamefont {A.}~\bibnamefont {Eickbusch}}, \bibinfo {author} {\bibfnamefont {V.}~\bibnamefont {Sivak}}, \bibinfo {author} {\bibfnamefont {A.~Z.}\ \bibnamefont {Ding}}, \bibinfo {author} {\bibfnamefont {S.~S.}\ \bibnamefont {Elder}}, \bibinfo {author} {\bibfnamefont {S.~R.}\ \bibnamefont {Jha}}, \bibinfo {author} {\bibfnamefont {J.}~\bibnamefont {Venkatraman}}, \bibinfo {author} {\bibfnamefont {B.}~\bibnamefont {Royer}}, \bibinfo {author} {\bibfnamefont {S.~M.}\ \bibnamefont {Girvin}}, \bibinfo {author} {\bibfnamefont {R.~J.}\ \bibnamefont {Schoelkopf}},\ and\ \bibinfo {author} {\bibfnamefont {M.~H.}\ \bibnamefont {Devoret}},\ }\href {https://doi.org/10.1038/s41567-022-01776-9} {\bibfield  {journal} {\bibinfo  {journal} {Nat. Phys.}\ }\textbf {\bibinfo {volume} {18}},\ \bibinfo {pages} {1464} (\bibinfo {year} {2022})}\BibitemShut {NoStop}%
\bibitem [{\citenamefont {Katz}\ and\ \citenamefont {Monroe}(2023)}]{katz2023}%
  \BibitemOpen
  \bibfield  {author} {\bibinfo {author} {\bibfnamefont {O.}~\bibnamefont {Katz}}\ and\ \bibinfo {author} {\bibfnamefont {C.}~\bibnamefont {Monroe}},\ }\href {https://doi.org/10.1103/PhysRevLett.131.033604} {\bibfield  {journal} {\bibinfo  {journal} {Phys. Rev. Lett.}\ }\textbf {\bibinfo {volume} {131}},\ \bibinfo {pages} {033604} (\bibinfo {year} {2023})}\BibitemShut {NoStop}%
\bibitem [{\citenamefont {Gan}\ \emph {et~al.}(2020)\citenamefont {Gan}, \citenamefont {Maslennikov}, \citenamefont {Tseng}, \citenamefont {Nguyen},\ and\ \citenamefont {Matsukevich}}]{gan2020}%
  \BibitemOpen
  \bibfield  {author} {\bibinfo {author} {\bibfnamefont {H.~C.~J.}\ \bibnamefont {Gan}}, \bibinfo {author} {\bibfnamefont {G.}~\bibnamefont {Maslennikov}}, \bibinfo {author} {\bibfnamefont {K.-W.}\ \bibnamefont {Tseng}}, \bibinfo {author} {\bibfnamefont {C.}~\bibnamefont {Nguyen}},\ and\ \bibinfo {author} {\bibfnamefont {D.}~\bibnamefont {Matsukevich}},\ }\href {https://doi.org/10.1103/PhysRevLett.124.170502} {\bibfield  {journal} {\bibinfo  {journal} {Phys. Rev. Lett.}\ }\textbf {\bibinfo {volume} {124}},\ \bibinfo {pages} {170502} (\bibinfo {year} {2020})}\BibitemShut {NoStop}%
\bibitem [{\citenamefont {Fl{\"u}hmann}\ \emph {et~al.}(2019)\citenamefont {Fl{\"u}hmann}, \citenamefont {Nguyen}, \citenamefont {Marinelli}, \citenamefont {Negnevitsky}, \citenamefont {Mehta},\ and\ \citenamefont {Home}}]{fluhmann2019}%
  \BibitemOpen
  \bibfield  {author} {\bibinfo {author} {\bibfnamefont {C.}~\bibnamefont {Fl{\"u}hmann}}, \bibinfo {author} {\bibfnamefont {T.~L.}\ \bibnamefont {Nguyen}}, \bibinfo {author} {\bibfnamefont {M.}~\bibnamefont {Marinelli}}, \bibinfo {author} {\bibfnamefont {V.}~\bibnamefont {Negnevitsky}}, \bibinfo {author} {\bibfnamefont {K.}~\bibnamefont {Mehta}},\ and\ \bibinfo {author} {\bibfnamefont {J.~P.}\ \bibnamefont {Home}},\ }\href {https://doi.org/10.1038/s41586-019-0960-6} {\bibfield  {journal} {\bibinfo  {journal} {Nature}\ }\textbf {\bibinfo {volume} {566}},\ \bibinfo {pages} {513} (\bibinfo {year} {2019})}\BibitemShut {NoStop}%
\bibitem [{\citenamefont {Zalivako}\ \emph {et~al.}(2025)\citenamefont {Zalivako} \emph {et~al.}}]{zalivako2024}%
  \BibitemOpen
  \bibfield  {author} {\bibinfo {author} {\bibfnamefont {I.~V.}\ \bibnamefont {Zalivako}} \emph {et~al.},\ }\href {https://doi.org/10.3390/quantum7020019} {\bibfield  {journal} {\bibinfo  {journal} {Quantum Reports}\ }\textbf {\bibinfo {volume} {7}},\ \bibinfo {pages} {19} (\bibinfo {year} {2025})}\BibitemShut {NoStop}%
\bibitem [{\citenamefont {Hrmo}\ \emph {et~al.}(2023)\citenamefont {Hrmo}, \citenamefont {Wilhelm}, \citenamefont {Gerster}, \citenamefont {van Mourik}, \citenamefont {Huber}, \citenamefont {Blatt}, \citenamefont {Schindler}, \citenamefont {Monz},\ and\ \citenamefont {Ringbauer}}]{hrmo2023}%
  \BibitemOpen
  \bibfield  {author} {\bibinfo {author} {\bibfnamefont {P.}~\bibnamefont {Hrmo}}, \bibinfo {author} {\bibfnamefont {B.}~\bibnamefont {Wilhelm}}, \bibinfo {author} {\bibfnamefont {L.}~\bibnamefont {Gerster}}, \bibinfo {author} {\bibfnamefont {M.~W.}\ \bibnamefont {van Mourik}}, \bibinfo {author} {\bibfnamefont {M.}~\bibnamefont {Huber}}, \bibinfo {author} {\bibfnamefont {R.}~\bibnamefont {Blatt}}, \bibinfo {author} {\bibfnamefont {P.}~\bibnamefont {Schindler}}, \bibinfo {author} {\bibfnamefont {T.}~\bibnamefont {Monz}},\ and\ \bibinfo {author} {\bibfnamefont {M.}~\bibnamefont {Ringbauer}},\ }\href {https://doi.org/10.1038/s41467-023-37375-2} {\bibfield  {journal} {\bibinfo  {journal} {Nat. Commun.}\ }\textbf {\bibinfo {volume} {14}},\ \bibinfo {pages} {2242} (\bibinfo {year} {2023})}\BibitemShut {NoStop}%
\bibitem [{\citenamefont {Shi}\ \emph {et~al.}(2026)\citenamefont {Shi}, \citenamefont {Sinanan-Singh}, \citenamefont {Burke}, \citenamefont {Chiaverini},\ and\ \citenamefont {Chuang}}]{shi2025}%
  \BibitemOpen
  \bibfield  {author} {\bibinfo {author} {\bibfnamefont {X.}~\bibnamefont {Shi}}, \bibinfo {author} {\bibfnamefont {J.}~\bibnamefont {Sinanan-Singh}}, \bibinfo {author} {\bibfnamefont {T.~J.}\ \bibnamefont {Burke}}, \bibinfo {author} {\bibfnamefont {J.}~\bibnamefont {Chiaverini}},\ and\ \bibinfo {author} {\bibfnamefont {I.~L.}\ \bibnamefont {Chuang}},\ }\href {https://doi.org/10.1038/s41467-026-68746-0} {\bibfield  {journal} {\bibinfo  {journal} {Nature Communications}\ }\textbf {\bibinfo {volume} {17}},\ \bibinfo {pages} {1911} (\bibinfo {year} {2026})}\BibitemShut {NoStop}%
\bibitem [{\citenamefont {Kiktenko}\ \emph {et~al.}(2025)\citenamefont {Kiktenko}, \citenamefont {Nikolaeva},\ and\ \citenamefont {Fedorov}}]{kiktenko2025}%
  \BibitemOpen
  \bibfield  {author} {\bibinfo {author} {\bibfnamefont {E.~O.}\ \bibnamefont {Kiktenko}}, \bibinfo {author} {\bibfnamefont {A.~S.}\ \bibnamefont {Nikolaeva}},\ and\ \bibinfo {author} {\bibfnamefont {A.~K.}\ \bibnamefont {Fedorov}},\ }\href {https://doi.org/10.1103/RevModPhys.97.021003} {\bibfield  {journal} {\bibinfo  {journal} {Rev. Mod. Phys.}\ }\textbf {\bibinfo {volume} {97}},\ \bibinfo {pages} {021003} (\bibinfo {year} {2025})}\BibitemShut {NoStop}%
\bibitem [{\citenamefont {Chakraborty}\ and\ \citenamefont {Albert}(2026)}]{chakraborty2026}%
  \BibitemOpen
  \bibfield  {author} {\bibinfo {author} {\bibfnamefont {S.}~\bibnamefont {Chakraborty}}\ and\ \bibinfo {author} {\bibfnamefont {V.~V.}\ \bibnamefont {Albert}},\ }\href {https://doi.org/10.1103/yr64-ypxt} {\bibfield  {journal} {\bibinfo  {journal} {PRX Quantum}\ }\textbf {\bibinfo {volume} {7}},\ \bibinfo {pages} {020320} (\bibinfo {year} {2026})}\BibitemShut {NoStop}%
\bibitem [{\citenamefont {Furches}\ \emph {et~al.}(2026)\citenamefont {Furches}, \citenamefont {Stavenger},\ and\ \citenamefont {Ortiz~Marrero}}]{liu_hybridlane2025}%
  \BibitemOpen
  \bibfield  {author} {\bibinfo {author} {\bibfnamefont {J.}~\bibnamefont {Furches}}, \bibinfo {author} {\bibfnamefont {T.~J.}\ \bibnamefont {Stavenger}},\ and\ \bibinfo {author} {\bibfnamefont {C.}~\bibnamefont {Ortiz~Marrero}},\ }\href {https://doi.org/10.48550/arXiv.2603.10919} {\bibinfo {title} {Hybridlane: A software development kit for hybrid continuous-discrete variable quantum computing}} (\bibinfo {year} {2026}),\ \Eprint {https://arxiv.org/abs/2603.10919} {arXiv:2603.10919 [quant-ph]} \BibitemShut {NoStop}%
\bibitem [{\citenamefont {Bergholm}\ \emph {et~al.}(2018)\citenamefont {Bergholm}, \citenamefont {Izaac}, \citenamefont {Schuld}, \citenamefont {Gogolin}, \citenamefont {Ahmed}, \citenamefont {Ajith}, \citenamefont {Alam} \emph {et~al.}}]{pennylane2018}%
  \BibitemOpen
  \bibfield  {author} {\bibinfo {author} {\bibfnamefont {V.}~\bibnamefont {Bergholm}}, \bibinfo {author} {\bibfnamefont {J.}~\bibnamefont {Izaac}}, \bibinfo {author} {\bibfnamefont {M.}~\bibnamefont {Schuld}}, \bibinfo {author} {\bibfnamefont {C.}~\bibnamefont {Gogolin}}, \bibinfo {author} {\bibfnamefont {S.}~\bibnamefont {Ahmed}}, \bibinfo {author} {\bibfnamefont {V.}~\bibnamefont {Ajith}}, \bibinfo {author} {\bibfnamefont {M.~S.}\ \bibnamefont {Alam}}, \emph {et~al.},\ }\bibfield  {journal} {\bibinfo  {journal} {arXiv preprint arXiv:1811.04968}\ }\href {https://doi.org/10.48550/arXiv.1811.04968} {10.48550/arXiv.1811.04968} (\bibinfo {year} {2018}),\ \Eprint {https://arxiv.org/abs/1811.04968} {arXiv:1811.04968 [quant-ph]} \BibitemShut {NoStop}%
\bibitem [{\citenamefont {B\u{a}z\u{a}van}\ \emph {et~al.}(2023)\citenamefont {B\u{a}z\u{a}van}, \citenamefont {Saner}, \citenamefont {Minder}, \citenamefont {Hughes}, \citenamefont {Sutherland}, \citenamefont {Lucas}, \citenamefont {Srinivas},\ and\ \citenamefont {Ballance}}]{bazavan2023}%
  \BibitemOpen
  \bibfield  {author} {\bibinfo {author} {\bibfnamefont {O.}~\bibnamefont {B\u{a}z\u{a}van}}, \bibinfo {author} {\bibfnamefont {S.}~\bibnamefont {Saner}}, \bibinfo {author} {\bibfnamefont {M.}~\bibnamefont {Minder}}, \bibinfo {author} {\bibfnamefont {A.~C.}\ \bibnamefont {Hughes}}, \bibinfo {author} {\bibfnamefont {R.~T.}\ \bibnamefont {Sutherland}}, \bibinfo {author} {\bibfnamefont {D.~M.}\ \bibnamefont {Lucas}}, \bibinfo {author} {\bibfnamefont {R.}~\bibnamefont {Srinivas}},\ and\ \bibinfo {author} {\bibfnamefont {C.~J.}\ \bibnamefont {Ballance}},\ }\href {https://doi.org/10.1103/PhysRevA.107.022617} {\bibfield  {journal} {\bibinfo  {journal} {Physical Review A}\ }\textbf {\bibinfo {volume} {107}},\ \bibinfo {pages} {022617} (\bibinfo {year} {2023})}\BibitemShut {NoStop}%
\bibitem [{\citenamefont {Choi}\ \emph {et~al.}(2026)\citenamefont {Choi}, \citenamefont {Kang}, \citenamefont {Kim}, \citenamefont {You}, \citenamefont {Lee},\ and\ \citenamefont {Kim}}]{choi2026}%
  \BibitemOpen
  \bibfield  {author} {\bibinfo {author} {\bibfnamefont {W.}~\bibnamefont {Choi}}, \bibinfo {author} {\bibfnamefont {J.}~\bibnamefont {Kang}}, \bibinfo {author} {\bibfnamefont {K.}~\bibnamefont {Kim}}, \bibinfo {author} {\bibfnamefont {J.}~\bibnamefont {You}}, \bibinfo {author} {\bibfnamefont {K.}~\bibnamefont {Lee}},\ and\ \bibinfo {author} {\bibfnamefont {T.}~\bibnamefont {Kim}},\ }\bibfield  {journal} {\bibinfo  {journal} {arXiv preprint arXiv:2601.04591}\ }\href {https://doi.org/10.48550/arXiv.2601.04591} {10.48550/arXiv.2601.04591} (\bibinfo {year} {2026}),\ \Eprint {https://arxiv.org/abs/2601.04591} {arXiv:2601.04591 [quant-ph]} \BibitemShut {NoStop}%
\bibitem [{\citenamefont {McKay}\ \emph {et~al.}(2017)\citenamefont {McKay}, \citenamefont {Wood}, \citenamefont {Sheldon}, \citenamefont {Chow},\ and\ \citenamefont {Gambetta}}]{mckay2017}%
  \BibitemOpen
  \bibfield  {author} {\bibinfo {author} {\bibfnamefont {D.~C.}\ \bibnamefont {McKay}}, \bibinfo {author} {\bibfnamefont {C.~J.}\ \bibnamefont {Wood}}, \bibinfo {author} {\bibfnamefont {S.}~\bibnamefont {Sheldon}}, \bibinfo {author} {\bibfnamefont {J.~M.}\ \bibnamefont {Chow}},\ and\ \bibinfo {author} {\bibfnamefont {J.~M.}\ \bibnamefont {Gambetta}},\ }\href {https://doi.org/10.1103/PhysRevA.96.022330} {\bibfield  {journal} {\bibinfo  {journal} {Phys. Rev. A}\ }\textbf {\bibinfo {volume} {96}},\ \bibinfo {pages} {022330} (\bibinfo {year} {2017})}\BibitemShut {NoStop}%
\bibitem [{\citenamefont {Even-Haim}\ \emph {et~al.}(2025)\citenamefont {Even-Haim}, \citenamefont {Diringer}, \citenamefont {Ruimy}, \citenamefont {Baranes}, \citenamefont {Gorlach}, \citenamefont {Hacohen-Gourgy},\ and\ \citenamefont {Kaminer}}]{evenhaim2025}%
  \BibitemOpen
  \bibfield  {author} {\bibinfo {author} {\bibfnamefont {S.}~\bibnamefont {Even-Haim}}, \bibinfo {author} {\bibfnamefont {A.~A.}\ \bibnamefont {Diringer}}, \bibinfo {author} {\bibfnamefont {R.}~\bibnamefont {Ruimy}}, \bibinfo {author} {\bibfnamefont {G.}~\bibnamefont {Baranes}}, \bibinfo {author} {\bibfnamefont {A.}~\bibnamefont {Gorlach}}, \bibinfo {author} {\bibfnamefont {S.}~\bibnamefont {Hacohen-Gourgy}},\ and\ \bibinfo {author} {\bibfnamefont {I.}~\bibnamefont {Kaminer}},\ }\href {https://doi.org/10.22331/q-2025-12-10-1934} {\bibfield  {journal} {\bibinfo  {journal} {Quantum}\ }\textbf {\bibinfo {volume} {9}},\ \bibinfo {pages} {1934} (\bibinfo {year} {2025})}\BibitemShut {NoStop}%
\bibitem [{\citenamefont {Meth}\ \emph {et~al.}(2025)\citenamefont {Meth}, \citenamefont {Zhang}, \citenamefont {Haase}, \citenamefont {Edmunds}, \citenamefont {Postler}, \citenamefont {Jena}, \citenamefont {Steiner}, \citenamefont {Dellantonio}, \citenamefont {Blatt}, \citenamefont {Zoller}, \citenamefont {Monz}, \citenamefont {Schindler}, \citenamefont {Muschik},\ and\ \citenamefont {Ringbauer}}]{meth2025}%
  \BibitemOpen
  \bibfield  {author} {\bibinfo {author} {\bibfnamefont {M.}~\bibnamefont {Meth}}, \bibinfo {author} {\bibfnamefont {J.}~\bibnamefont {Zhang}}, \bibinfo {author} {\bibfnamefont {J.~F.}\ \bibnamefont {Haase}}, \bibinfo {author} {\bibfnamefont {C.}~\bibnamefont {Edmunds}}, \bibinfo {author} {\bibfnamefont {L.}~\bibnamefont {Postler}}, \bibinfo {author} {\bibfnamefont {A.~J.}\ \bibnamefont {Jena}}, \bibinfo {author} {\bibfnamefont {A.}~\bibnamefont {Steiner}}, \bibinfo {author} {\bibfnamefont {L.}~\bibnamefont {Dellantonio}}, \bibinfo {author} {\bibfnamefont {R.}~\bibnamefont {Blatt}}, \bibinfo {author} {\bibfnamefont {P.}~\bibnamefont {Zoller}}, \bibinfo {author} {\bibfnamefont {T.}~\bibnamefont {Monz}}, \bibinfo {author} {\bibfnamefont {P.}~\bibnamefont {Schindler}}, \bibinfo {author} {\bibfnamefont {C.}~\bibnamefont {Muschik}},\ and\ \bibinfo {author} {\bibfnamefont {M.}~\bibnamefont {Ringbauer}},\ }\href {https://doi.org/10.1038/s41567-025-02797-w} {\bibfield  {journal} {\bibinfo  {journal} {Nature Physics}\
  }\textbf {\bibinfo {volume} {21}},\ \bibinfo {pages} {570} (\bibinfo {year} {2025})}\BibitemShut {NoStop}%
\bibitem [{\citenamefont {Johansson}\ \emph {et~al.}(2012)\citenamefont {Johansson}, \citenamefont {Nation},\ and\ \citenamefont {Nori}}]{qutip2012}%
  \BibitemOpen
  \bibfield  {author} {\bibinfo {author} {\bibfnamefont {J.~R.}\ \bibnamefont {Johansson}}, \bibinfo {author} {\bibfnamefont {P.~D.}\ \bibnamefont {Nation}},\ and\ \bibinfo {author} {\bibfnamefont {F.}~\bibnamefont {Nori}},\ }\href {https://doi.org/10.1016/j.cpc.2012.02.021} {\bibfield  {journal} {\bibinfo  {journal} {Comput. Phys. Commun.}\ }\textbf {\bibinfo {volume} {183}},\ \bibinfo {pages} {1760} (\bibinfo {year} {2012})}\BibitemShut {NoStop}%
\bibitem [{\citenamefont {Johansson}\ \emph {et~al.}(2013)\citenamefont {Johansson}, \citenamefont {Nation},\ and\ \citenamefont {Nori}}]{qutip2013}%
  \BibitemOpen
  \bibfield  {author} {\bibinfo {author} {\bibfnamefont {J.~R.}\ \bibnamefont {Johansson}}, \bibinfo {author} {\bibfnamefont {P.~D.}\ \bibnamefont {Nation}},\ and\ \bibinfo {author} {\bibfnamefont {F.}~\bibnamefont {Nori}},\ }\href {https://doi.org/10.1016/j.cpc.2012.11.019} {\bibfield  {journal} {\bibinfo  {journal} {Comput. Phys. Commun.}\ }\textbf {\bibinfo {volume} {184}},\ \bibinfo {pages} {1234} (\bibinfo {year} {2013})}\BibitemShut {NoStop}%
\bibitem [{\citenamefont {Lindblad}(1976)}]{lindblad1976}%
  \BibitemOpen
  \bibfield  {author} {\bibinfo {author} {\bibfnamefont {G.}~\bibnamefont {Lindblad}},\ }\href {https://doi.org/10.1007/BF01608499} {\bibfield  {journal} {\bibinfo  {journal} {Commun. Math. Phys.}\ }\textbf {\bibinfo {volume} {48}},\ \bibinfo {pages} {119} (\bibinfo {year} {1976})}\BibitemShut {NoStop}%
\bibitem [{\citenamefont {Stinespring}(1955)}]{stinespring1955}%
  \BibitemOpen
  \bibfield  {author} {\bibinfo {author} {\bibfnamefont {W.~F.}\ \bibnamefont {Stinespring}},\ }\href {https://doi.org/10.1090/S0002-9939-1955-0069403-4} {\bibfield  {journal} {\bibinfo  {journal} {Proc. Amer. Math. Soc.}\ }\textbf {\bibinfo {volume} {6}},\ \bibinfo {pages} {211} (\bibinfo {year} {1955})}\BibitemShut {NoStop}%
\bibitem [{\citenamefont {Hall}(2015)}]{hall2015}%
  \BibitemOpen
  \bibfield  {author} {\bibinfo {author} {\bibfnamefont {B.~C.}\ \bibnamefont {Hall}},\ }\href {https://doi.org/10.1007/978-3-319-13467-3} {\emph {\bibinfo {title} {Lie Groups, Lie Algebras, and Representations: An Elementary Introduction}}},\ \bibinfo {edition} {2nd}\ ed.,\ \bibinfo {series} {Graduate Texts in Mathematics}, Vol.\ \bibinfo {volume} {222}\ (\bibinfo  {publisher} {Springer},\ \bibinfo {address} {Cham},\ \bibinfo {year} {2015})\BibitemShut {NoStop}%
\bibitem [{\citenamefont {Hakimi~Heris}\ \emph {et~al.}(2026)\citenamefont {Hakimi~Heris}, \citenamefont {Liu},\ and\ \citenamefont {Mueller}}]{hakimi2026hypulse}%
  \BibitemOpen
  \bibfield  {author} {\bibinfo {author} {\bibfnamefont {M.}~\bibnamefont {Hakimi~Heris}}, \bibinfo {author} {\bibfnamefont {Y.}~\bibnamefont {Liu}},\ and\ \bibinfo {author} {\bibfnamefont {F.}~\bibnamefont {Mueller}},\ }\bibfield  {journal} {\bibinfo  {journal} {arXiv preprint arXiv:2604.26804}\ }\href {https://doi.org/10.48550/arXiv.2604.26804} {10.48550/arXiv.2604.26804} (\bibinfo {year} {2026}),\ \Eprint {https://arxiv.org/abs/2604.26804} {arXiv:2604.26804 [quant-ph]} \BibitemShut {NoStop}%
\bibitem [{\citenamefont {Schindler}\ \emph {et~al.}(2013)\citenamefont {Schindler}, \citenamefont {Nigg}, \citenamefont {Monz}, \citenamefont {Barreiro}, \citenamefont {Martinez}, \citenamefont {Wang}, \citenamefont {Quint}, \citenamefont {Brandl}, \citenamefont {Nebendahl}, \citenamefont {Roos}, \citenamefont {Chwalla}, \citenamefont {Hennrich},\ and\ \citenamefont {Blatt}}]{schindler2013}%
  \BibitemOpen
  \bibfield  {author} {\bibinfo {author} {\bibfnamefont {P.}~\bibnamefont {Schindler}}, \bibinfo {author} {\bibfnamefont {D.}~\bibnamefont {Nigg}}, \bibinfo {author} {\bibfnamefont {T.}~\bibnamefont {Monz}}, \bibinfo {author} {\bibfnamefont {J.~T.}\ \bibnamefont {Barreiro}}, \bibinfo {author} {\bibfnamefont {E.}~\bibnamefont {Martinez}}, \bibinfo {author} {\bibfnamefont {S.~X.}\ \bibnamefont {Wang}}, \bibinfo {author} {\bibfnamefont {S.}~\bibnamefont {Quint}}, \bibinfo {author} {\bibfnamefont {M.~F.}\ \bibnamefont {Brandl}}, \bibinfo {author} {\bibfnamefont {V.}~\bibnamefont {Nebendahl}}, \bibinfo {author} {\bibfnamefont {C.~F.}\ \bibnamefont {Roos}}, \bibinfo {author} {\bibfnamefont {M.}~\bibnamefont {Chwalla}}, \bibinfo {author} {\bibfnamefont {M.}~\bibnamefont {Hennrich}},\ and\ \bibinfo {author} {\bibfnamefont {R.}~\bibnamefont {Blatt}},\ }\href {https://doi.org/10.1088/1367-2630/15/12/123012} {\bibfield  {journal} {\bibinfo  {journal} {New J. Phys.}\ }\textbf {\bibinfo {volume} {15}},\ \bibinfo {pages}
  {123012} (\bibinfo {year} {2013})}\BibitemShut {NoStop}%
\end{thebibliography}%


\appendix
\section{Proof of Universality}
\label{app:universality}

We prove the claim stated in Sec.~\ref{sec:universality}, namely the generators of $X_E$, $Z_d$, and $C_{\Pi_z}D$ alone, together with their repeated commutators, generate Hamiltonians formed from arbitrary qudit operators tensored with polynomials in the quadratures of all oscillator modes. No mode-coupling primitive is required.

\subsection{Setup and Notation}
Fix two qudit levels $\ket{a},\ket{b}$ and let $\sigma_1,\sigma_2,\sigma_3$ denote the Pauli operators on that two-dimensional subspace, embedded in $\mathbb{C}^d$ and acting as zero on the remaining levels. Pillar~1 (Sec.~\ref{sec:sud}) gives all of $\mathfrak{su}(d)$ on the discrete register, and in particular all three $\sigma_k$. From $C_{\Pi_z}D$, Eq.~\eqref{eq:cd_sym}, we have the two seed generators
\begin{equation}
\label{eq:seeds}
G_x=\sigma_3\otimes\hat x, \qquad G_p=\sigma_3\otimes\hat p ,
\end{equation}
obtained by taking the displacement amplitude $\alpha$, real and imaginary, respectively.

Throughout we use the identity, valid for any operators $A,B$ on the qudit and $U,V$ on the mode,
\begin{equation}
\label{eq:tensorcomm}
[\,A\otimes U,\;B\otimes V\,]
=\tfrac12\,[A,B]\otimes\{U,V\}
+\tfrac12\,\{A,B\}\otimes[U,V],
\end{equation}
which follows from expanding both sides. For Pauli operators
$\{\sigma_j,\sigma_k\}=2\delta_{jk}$ and
$[\sigma_j,\sigma_k]=2i\epsilon_{jkl}\sigma_l$, so
Eq.~\eqref{eq:tensorcomm} splits cleanly, where Distinct Paulis
anticommute and only the first term survives, identical Paulis commute
and only the second survives.

Let $\mathcal{L}$ denote the Lie algebra generated by $\{\Lambda\otimes\Id\}_{\Lambda\in\mathfrak{su}(d)}\cup\{G_x,G_p\}$. We show $\mathcal{L}$ contains $\Lambda\otimes P(\hat x,\hat p)$ for every $\Lambda\in\mathfrak{su}(d)$ and every polynomial $P$, and also $\Id\otimes P$.

\subsection{Step 1: All Paulis at Degree One}
Applying Eq.~\eqref{eq:tensorcomm} with $A=\sigma_3$, $B=\sigma_1$, $U=\hat x$, $V=\Id$ gives $\{U,V\}=2\hat x$ and $[U,V]=0$, so
\begin{equation}
\label{eq:step1}
[\,\sigma_3\otimes\hat x,\;\sigma_1\otimes\Id\,]
=\tfrac12(2i\epsilon_{31l}\sigma_l)\otimes 2\hat x
=2i\,\sigma_2\otimes\hat x .
\end{equation}
Repeating with $\sigma_2\otimes\Id$ in place of $\sigma_1\otimes\Id$, and with $\hat p$ in place of $\hat x$, yields
\begin{equation}
\label{eq:degree1}
\sigma_k\otimes\hat x\in\mathcal{L},\qquad
\sigma_k\otimes\hat p\in\mathcal{L},\qquad k=1,2,3 .
\end{equation}

\subsection{Step 2: Degree Two, and Why Conditional Displacements Alone Do Not Suffice}
Two conditional displacements with the same qudit factor generate nothing new. Taking $A=B=\sigma_3$ in Eq.~\eqref{eq:tensorcomm} gives $[A,B]=0$ and $\{A,B\}=2\,\Pi_{ab}$, where $\Pi_{ab}=\ket{a}\!\bra{a}+\ket{b}\!\bra{b}$, so
\begin{equation}
\label{eq:cnumber}
[\,\sigma_3\otimes\hat x,\;\sigma_3\otimes\hat p\,]
=\Pi_{ab}\otimes[\hat x,\hat p]
=2i\,\Pi_{ab}\otimes\Id,
\end{equation}
a multiple of the identity on the addressed subspace. This is the
precise sense in which the conditional displacement is not, on its
own, a source of new generators. Rather, it is the complete control of
the discrete register that unlocks the construction.

Commuting two \emph{different} Paulis does produce a new generator. With $A=\sigma_2$, $B=\sigma_3$, $U=\hat x$, $V=\hat p$, the Paulis anticommute, so only the first term of Eq.~\eqref{eq:tensorcomm} survives:
\begin{equation}
\label{eq:degree2a}
[\,\sigma_2\otimes\hat x,\;\sigma_3\otimes\hat p\,]
=\tfrac12(2i\sigma_1)\otimes\{\hat x,\hat p\}
=i\,\sigma_1\otimes\{\hat x,\hat p\} .
\end{equation}
Likewise, with $A=\sigma_1$, $B=\sigma_2$, $U=V=\hat x$,
\begin{equation}
\label{eq:degree2b}
[\,\sigma_1\otimes\hat x,\;\sigma_2\otimes\hat x\,]
=\tfrac12(2i\sigma_3)\otimes 2\hat x^2
=2i\,\sigma_3\otimes\hat x^2 ,
\end{equation}
and similarly for $\hat p^2$. Every quadratic monomial in $\hat x$ and $\hat p$ is therefore in $\mathcal{L}$, tensored with a Pauli, and by Step~1's argument the Pauli factor can then be rotated to any of the three.

\subsection{Step 3: Induction on Polynomial Degree}
\begin{lemma}
\label{lem:induction}
For every $n\ge 1$, every polynomial $Q$ of degree $\le n$ in $\hat x$ and $\hat p$, and every $k\in\{1,2,3\}$, the operator $\sigma_k\otimes Q$ lies in $\mathcal{L}$.
\end{lemma}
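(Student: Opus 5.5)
We argue by induction on $n$, using the tensor commutator identity Eq.~\eqref{eq:tensorcomm} together with the Pauli relations. Two reductions come first. Since $\mathcal{L}$ is a real vector space, it suffices to treat a spanning set of Hermitian polynomials of each degree. For this we use the symmetrized (Weyl-ordered) monomials $W_{r,s}$, the symmetrized products of $r$ factors $\hat x$ and $s$ factors $\hat p$. By the canonical commutation relation, every monomial differs from a combination of $W_{r,s}$ only by terms of lower degree. Moreover, once $\sigma_3\otimes Q\in\mathcal{L}$, commuting with $\sigma_1\otimes\Id$ and $\sigma_2\otimes\Id$ gives $\sigma_2\otimes Q$ and $\sigma_1\otimes Q$, exactly as in Step~1. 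So it suffices to produce one Pauli factor per monomial.

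The base cases are already in hand. Degree $n=1$ is Eq.~\eqref{eq:degree1}. Degree $n=2$ follows from Eqs.~\eqref{eq:degree2a}--\eqref{eq:degree2b}, with $\{\hat x,\hat p\}=2W_{1,1}$. The constant polynomial appears as $\sigma_3\otimes\Id$, which lies in $\mathcal{L}$ because $\sigma_3\in\mathfrak{su}(d)$.

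For the inductive step, assume $\sigma_k\otimes Q\in\mathcal{L}$ for all $k$ and all $Q$ of degree $\le n$. Take distinct Paulis, say $\sigma_1\otimes Q$ and $\sigma_2\otimes\hat x$, both in $\mathcal{L}$ by hypothesis. Because $\sigma_1$ and $\sigma_2$ anticommute, only the first term of Eq.~\eqref{eq:tensorcomm} survives:
\begin{equation*}
[\sigma_1\otimes Q,\;\sigma_2\otimes\hat x]=i\,\sigma_3\otimes\{Q,\hat x\}.
\end{equation*}
The same computation with $\hat p$ in place of $\hat x$ gives $i\,\sigma_3\otimes\{Q,\hat p\}$. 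Now set $Q=W_{r,s}$ with $r+s=n$. The Weyl symbol of $\tfrac12\{W_{r,s},\hat x\}$ is $x^{r+1}p^{s}$ plus terms of degree $\le n-1$; this is the Moyal expansion, in which the anticommutator is the cosine bracket and its correction terms are even in $\hbar$. Hence
\begin{equation*}
\tfrac12\{W_{r,s},\hat x\}=W_{r+1,s}+(\text{lower degree}).
\end{equation*}
By the induction hypothesis, $\sigma_3$ tensored with those lower-degree corrections is already in $\mathcal{L}$, so we can subtract them and obtain $\sigma_3\otimes W_{r+1,s}\in\mathcal{L}$. Doing the same with $\hat p$ yields $W_{r,s+1}$. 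Together these reach every degree-$(n+1)$ symmetric monomial: $W_{0,n+1}$ comes from $W_{0,n}$ via $\hat p$, and all others from $W_{r-1,s}$ via $\hat x$. Taking real linear combinations then gives all Hermitian $Q$ of degree $\le n+1$, and the Pauli rotation from the first paragraph transfers the result to $\sigma_1$ and $\sigma_2$.

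The main obstacle is the ordering bookkeeping in the step $\{W_{r,s},\hat x\}\mapsto W_{r+1,s}$. One must verify that the anticommutator raises the degree exactly by one with leading symmetric part $2W_{r+1,s}$, and that the remainder involves only degrees $\le n$, which are absorbable by the hypothesis. I would establish this either from the Weyl--Moyal product or, more elementarily, by writing $\hat x W_{r,s}$ and $W_{r,s}\hat x$ as averages over insertion positions and noting that their sum equals $2W_{r+1,s}$ up to commutator terms. Each such commutator $[\hat x,\hat p]=2i$ lowers the degree by two.

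A secondary point is that the statement is restricted to Hermitian $Q$, or reads "polynomial" modulo the real structure of $\mathcal{L}$. I would state explicitly that $\mathcal{L}$ is a real Lie algebra of anti-Hermitian generators up to the conventional factor $i$, so that the claim concerns real spans of Hermitian symmetrized monomials.
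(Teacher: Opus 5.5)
Your proof is correct and has the same skeleton as the paper's. Both argue by induction on degree, using the distinct-Pauli commutator $[\sigma_j\otimes\hat x,\sigma_k\otimes Q]=i\epsilon_{jkl}\,\sigma_l\otimes\{\hat x,Q\}$, removing lower-degree pieces with the hypothesis, and rotating the Pauli factor as in Step~1.

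What you do differently is the choice of basis. The paper raises degree on ordered words: it writes $\{\hat x,Q\}=2\hat xQ+[Q,\hat x]$ and concludes $\sigma_l\otimes\hat xQ\in\mathcal{L}$. This is lighter bookkeeping, but the resulting words are generally not Hermitian. For example, $Q=\hat p^{2}$ gives $\hat x\hat p^{2}$, and the subtracted correction is $-4i\,\sigma_l\otimes\hat p$. So the paper tacitly works with complex coefficients. Your Weyl-ordered $W_{r,s}$ keep every intermediate operator Hermitian. You therefore obtain the result inside the real Lie algebra that corresponds to physical Hamiltonians. The literal statement for arbitrary $Q$, read in the complexification, then follows by splitting $Q=H_1+iH_2$ into Hermitian parts of the same degree.

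The step you flag as the main obstacle is easier than you expect. Since $\hat x$ and $\hat p$ are linear, the Moyal cosine bracket terminates at zeroth order. Hence $\tfrac12\{W_{r,s},\hat x\}=W_{r+1,s}$ and $\tfrac12\{W_{r,s},\hat p\}=W_{r,s+1}$ hold exactly, with no lower-degree remainder to subtract.
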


\begin{proof}
Steps~1 and~2 establish the cases $n=1$ and $n=2$. Suppose the claim holds up to degree $n$, and let $Q$ be a monomial of degree $n$. Choose $j\ne k$, so that $\sigma_j$ and $\sigma_k$ anticommute and Eq.~\eqref{eq:tensorcomm} reduces to its first term:
\begin{equation}
\label{eq:inductstep}
[\,\sigma_j\otimes\hat x,\;\sigma_k\otimes Q\,]
=\tfrac12\,(2i\epsilon_{jkl}\sigma_l)\otimes\{\hat x,Q\}
=i\,\epsilon_{jkl}\,\sigma_l\otimes\{\hat x,Q\} .
\end{equation}
Expanding the anticommutator,
\begin{equation}
\label{eq:anticomm_expand}
\{\hat x,Q\} = 2\,\hat x\,Q + [\,Q,\hat x\,] ,
\end{equation}
where the first term is a monomial of degree $n+1$ and the second, being a commutator of a degree-$n$ polynomial with $\hat x$, has degree at most $n-1$ by the canonical relation $[\hat x,\hat p]=2i$. The correction term is therefore of degree $\le n-1$, and by the inductive hypothesis $\sigma_l\otimes[Q,\hat x]$ already lies in $\mathcal{L}$. Subtracting it from Eq.~\eqref{eq:inductstep} leaves
\begin{equation}
2i\,\epsilon_{jkl}\,\sigma_l\otimes \hat x\,Q \;\in\;\mathcal{L}.
\end{equation}
The same argument with $\hat p$ in place of $\hat x$ gives $\sigma_l\otimes\hat p\,Q\in\mathcal{L}$. Since every monomial of degree $n+1$ is of the form $\hat x Q$ or $\hat p Q$ with $Q$ of degree $n$, and since the Pauli index $l$ can be rotated to any value by the argument of Step~1, the claim holds at degree $n+1$. By induction it holds for all $n$.
\end{proof}

\subsection{Step 4: From $\mathfrak{su}(2)$ to $\mathfrak{su}(d)$}
Lemma~\ref{lem:induction} gives $\sigma_k\otimes P$ for arbitrary $P$, but the qudit factor is so far confined to the $\{\ket{a},\ket{b}\}$ subspace. Pillar~1 supplies $\Lambda\otimes\Id$ for every $\Lambda\in\mathfrak{su}(d)$, and Eq.~\eqref{eq:tensorcomm} with $V=\Id$ gives
\begin{equation}
\label{eq:sud_lift}
[\,\Lambda\otimes\Id,\;\sigma_k\otimes P\,]
=[\Lambda,\sigma_k]\otimes P .
\end{equation}
The map $\Lambda\mapsto[\Lambda,\sigma_k]$ is the adjoint action of $\mathfrak{su}(d)$ on itself. As a real Lie algebra $\mathfrak{su}(d)$ is simple, its adjoint representation is irreducible and the only nonzero invariant subspace containing $\sigma_k$ is all of $\mathfrak{su}(d)$~\cite{hall2015}. Repeated application of Eq.~\eqref{eq:sud_lift} therefore generates $\Lambda\otimes P$ for every $\Lambda\in\mathfrak{su}(d)$, with $P$ fixed. Since $P$ was arbitrary, $\mathcal{L}$ contains $\Lambda\otimes P$ for all $\Lambda\in\mathfrak{su}(d)$ and all polynomials $P$.

\subsection{Step 5: Several Modes from $C_{\Pi_z}D$ Alone}
Steps~1--4 were carried out on a single mode $j$ and apply verbatim to
each mode of the register, since $C_{\Pi_z}D$ acts on any chosen
mode. In particular, Step~1 places $\sigma_r\otimes\hat x_j$ and
$\sigma_r\otimes\hat p_j$ in $\mathcal{L}$ for every mode $j$ and
every Pauli index $r$. Cross-mode products then follow directly from
Eq.~\eqref{eq:tensorcomm}. In particular, for two different modes
$j\neq j'$ and two different Pauli indices $r\neq s$, the quadratures
commute, so only the Pauli commutator survives,
\begin{equation}
\label{eq:crossmode_app}
[\,\sigma_r\otimes\hat x_j,\;\sigma_s\otimes\hat x_{j'}\,]
=\tfrac12(2i\epsilon_{rst}\sigma_t)\otimes 2\hat x_j\hat x_{j'}
=2i\,\epsilon_{rst}\,\sigma_t\otimes\hat x_j\hat x_{j'} ,
\end{equation}
and likewise for any pair of quadratures. The induction of
Lemma~\ref{lem:induction} extends to total degree across all modes. As
such, commuting a degree-$n$ multimode monomial (with a fixed Pauli
factor) against a single-mode generator $\sigma_r\otimes\hat x_{j''}$
with a different Pauli index raises the degree on mode $j''$ by one,
with the lower-degree correction from $[\hat x_{j''},\hat p_{j''}]=2i$
already in $\mathcal{L}$ by hypothesis, exactly as in the single-mode
proof. Since every multimode monomial is reachable this way,
$\mathcal{L}$ contains $\sigma_r\otimes P$ for every polynomial $P$ in
the quadratures of all modes, and by Step~4 also $\Lambda\otimes P$
for every $\Lambda\in\mathfrak{su}(d)$. No mode-coupling primitive is
used.

\subsection{Step 6: The Identity Sector}
What remains is to obtain $\Id\otimes P$. Take $A=B=\sigma_3$ in
Eq.~\eqref{eq:tensorcomm}, so that $[A,B]=0$ and $\{A,B\}=2\Pi_{ab}$:
\begin{equation}
\label{eq:identity_sector}
[\,\sigma_3\otimes U,\;\sigma_3\otimes V\,]
=\Pi_{ab}\otimes[U,V] .
\end{equation}
By Step~5, $U$ and $V$ may be arbitrary multimode polynomials. Under commutation the quadratures generate the full polynomial
algebra~\cite{lloyd1999}. Explicitly, for any $m,n\ge0$,
\begin{equation}
\label{eq:ladder}
[\,\hat x_j^{\,m+1}\hat p_j^{\,n},\;\hat p_j\,]
= 2i\,(m{+}1)\,\hat x_j^{\,m}\hat p_j^{\,n},
\end{equation}
which follows from $[\hat x_j,\hat p_j]=2i$ and the Leibniz rule, and
likewise with $\hat x_j$ and $\hat p_j$ exchanged. Since Step~5 already
supplies $\sigma_3\otimes U$ for arbitrary multimode polynomials $U$,
Eq.~\eqref{eq:ladder} shows that
$\Pi_{ab}\otimes\hat x_j^{\,m}\hat p_j^{\,n}$ is generated for every
$m,n\ge0$. Factors involving other modes commute through unchanged, so the
same construction gives every normally ordered multimode monomial. Hence,
$\Pi_{ab}\otimes P\in\mathcal{L}$ for every multimode polynomial $P$ and
every level pair $\{a,b\}$. Summing over all unordered pairs gives
\begin{equation}
\sum_{a<b}\Pi_{ab}=(d-1)\Id,
\end{equation}
so $\Id\otimes P\in\mathcal{L}$ for arbitrary $d\ge2$.
Together with the $\mathfrak{su}(d)$ sector of Step~4, this spans the full
space of qudit operators tensored with $P$.

The synthesized beam splitter of Sec.~\ref{sec:bs} supplies the Gaussian
generator $B_{jj'}=\tfrac12(\hat x_j\hat p_{j'}-\hat p_j\hat x_{j'})$
directly; it is contained in the algebra above and is provided as a primitive
for efficiency rather than necessity.

\subsection{Conclusion}
$\mathcal{L}$ contains $\Lambda\otimes P(\hat x_1,\hat p_1,\dots,\hat x_M,\hat p_M)$ for every $\Lambda$ in the space of qudit operators and every polynomial $P$. These span the polynomial Hamiltonian algebra of the joint qudit--oscillator system, establishing universality in the sense of Hamiltonian generation~\cite{lloyd1999, brif2010}.\qed

The single-mode construction (Steps~1--4) was checked numerically for the ququart. The calculation was seeded with the six $X_E$ generators $X_{ij}\otimes\Id$ for $i<j$, the four traceful $Z_d$ projectors $\Pi_\ell\otimes\Id$, and the two conditional-displacement generators $\sigma_3\otimes\hat x$ and $\sigma_3\otimes\hat p$, giving twelve initial matrices in total. On a mode truncated at Fock cutoff $N$, the generated algebra follows the rank sequence $12, 29, 105, 375$ over iterations $0$--$3$, identically for $N=6,8,10$. The measured Fock-basis bandwidth of the generated operators through iteration $3$ is three. As explained in Sec.~\ref{sec:entangle}, bandwidth is a lower bound on polynomial degree, so this certifies non-Gaussian operators of at least degree three but gives no upper bound, and the iteration-$3$ span of dimension $375$ exceeds the $160$-dimensional space of degree at most three, so higher-degree operators are also present. At iteration $4$, the finite matrices saturate on $\mathfrak{u}(4N)$, of dimension $(4N)^2$; and because the $Z_d$ seeds are traceful level projectors, the closure is $\mathfrak{u}(4N)$ rather than $\mathfrak{su}(4N)$. This saturation is necessarily cutoff-dependent, serving as a consistency check on the truncated representation rather than as evidence for the infinite-dimensional statement, which Steps~1--6 establish analytically. The control with unconditional displacements follows the sequence $12, 19, 21, 24$ over the same iterations at $N=6$ and never exceeds bandwidth one at $N=6,8,10$, so the numerics separate the Gaussian and non-Gaussian seed sets. This is a consistency check and does not by itself rule out truncation effects.

\section{Qubit Reference Compilation}
\label{app:qubitport}

This appendix records how the trapped-ion qubit reference of
Sec.~\ref{sec:methodB} is built and how its $32$ logical operations
expand to $58$ timed pulses, with each CNOT counted as one pulse. Its purpose is to make the Timed pulses column of
Table~\ref{tab:gatecounts} reproducible.

\subsection{From the published circuit to $32$ logical operations}

The reference compiles the displaced-frame Hamiltonian $H_{\rm cqed}$ of
Ref.~\cite{vu2025} on three qubits and four modes. In the form given
there, $\tilde H/\hbar=\tilde H_0/\hbar+\tilde H_1/\hbar+\tilde H_{2,XX}/\hbar+\tilde H_{2,YY}/\hbar$ with
\begin{align}
\tilde H_0/\hbar &= \omega_a a^\dagger a+\omega_b b^\dagger b+\omega_c c^\dagger c+\omega_l l^\dagger l \nonumber\\
  &\quad -\Delta_{ab}\frac{\sigma_b^z}{2}-\Delta_{ac}\frac{\sigma_c^z}{2}, \nonumber\\
\tilde H_1/\hbar &= -\frac{\chi_a}{2}a^\dagger a\,\sigma_a^z-\frac{\chi_b}{2}b^\dagger b\,\sigma_b^z-\frac{\chi_c}{2}c^\dagger c\,\sigma_c^z \nonumber\\
  &\quad + g_{cd,a}(a+a^\dagger)\frac{\sigma_a^z}{2}+g_{cd,b}(b+b^\dagger)\frac{\sigma_b^z}{2} \nonumber\\
  &\quad + g_{cd,c}(c+c^\dagger)\frac{\sigma_c^z}{2}, \nonumber\\
\tilde H_{2,XX}/\hbar &= g_{cd,l}(l+l^\dagger)\frac{\sigma_a^z}{4}+\frac{g_{ab}}{2}\sigma_a^x\sigma_b^x+\frac{g_{ac}}{2}\sigma_a^x\sigma_c^x \nonumber\\
  &\quad +\frac{g_{abl}}{2}\sigma_a^x\sigma_b^x(l+l^\dagger) \nonumber\\
  &\quad +\frac{g_{acl}}{2}\sigma_a^x\sigma_c^x(l+l^\dagger), \nonumber\\
\tilde H_{2,YY}/\hbar &= g_{cd,l}(l+l^\dagger)\frac{\sigma_a^z}{4}+\frac{g_{ab}}{2}\sigma_a^y\sigma_b^y+\frac{g_{ac}}{2}\sigma_a^y\sigma_c^y \nonumber\\
  &\quad +\frac{g_{abl}}{2}\sigma_a^y\sigma_b^y(l+l^\dagger) \nonumber\\
  &\quad +\frac{g_{acl}}{2}\sigma_a^y\sigma_c^y(l+l^\dagger).
\end{align}
On trapped ions $\tilde H_0$ compiles to mode rotations and virtual
$r_z$, $\tilde H_1$ to CR and CD, the $g_{ab},g_{ac}$ terms to the
exchange $J$, the two $g_{cd,l}$ halves to one link-mode CD, and the
$g_{abl},g_{acl}$ terms to the link-modulated exchange $J_l$, which our
step applies once at the center. Its released circuit
runs a full second-order step, every block once forward and once
backward, and uses an ancilla qubit with SWAP and CNOT routing to reach
the link mode. On trapped ions the shared motional bus makes the ancilla
and the routing unnecessary, so we drop them and keep the $22$ Hamiltonian
terms. Grouping the terms into the trapped-ion operations of
Table~\ref{tab:isa}, one half step uses $15$ operations, and the two
link-modulated exchange terms sit once at the center, giving
$2\times15+2=32$. The published estimate of $84$ CNOT, $9$ CD, and $3$
SNAP counts each term once for a nearest-neighbor superconducting line
and is not this circuit. Counting the released circuit directly gives
$132$ primitive gates per full step, of which $72$ act on more than one
register. That recount reproduces the published CD and SNAP numbers up
to the factor of two between one application and the full symmetric
step, $18$ against $9$ and $6$ against $3$. The CNOT number comes to $88$ against the published $84$ once SWAP
routing is expanded, a small difference in routing convention that we
do not consider in our gate accounting.

\begin{figure*}[t]
  \centering
  \includegraphics[width=0.92\textwidth]{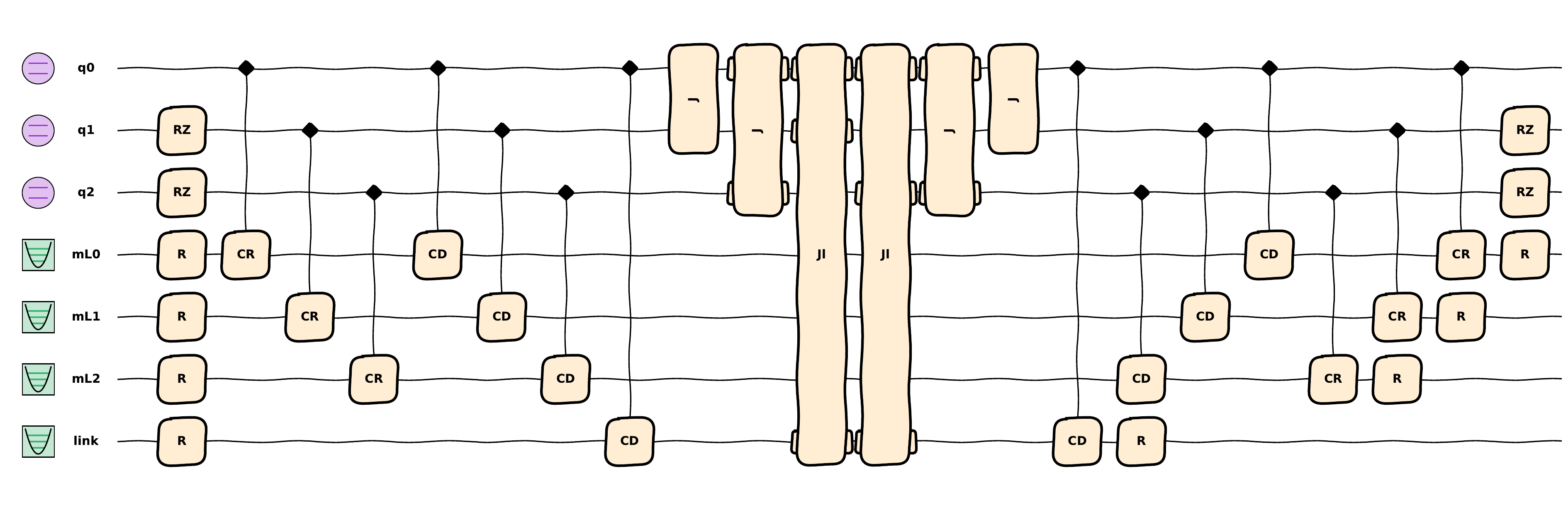}
  \caption{Single second-order Trotter step of the qubit reference on
  three qubits and four qumodes at the ISA level, $32$ operations.
  R denotes an unconditional mode rotation, RZ the virtual detuning
  phase, CR and CD the dispersive and conditional displacement gates,
  J the exchange and Jl the link-modulated exchange at the center.
  Table~\ref{tab:qubit32} gives the native realization of each class.}
  \label{fig:circuit_qubit}
\end{figure*}

\begin{figure*}[t]
  \centering
  \includegraphics[width=\textwidth]{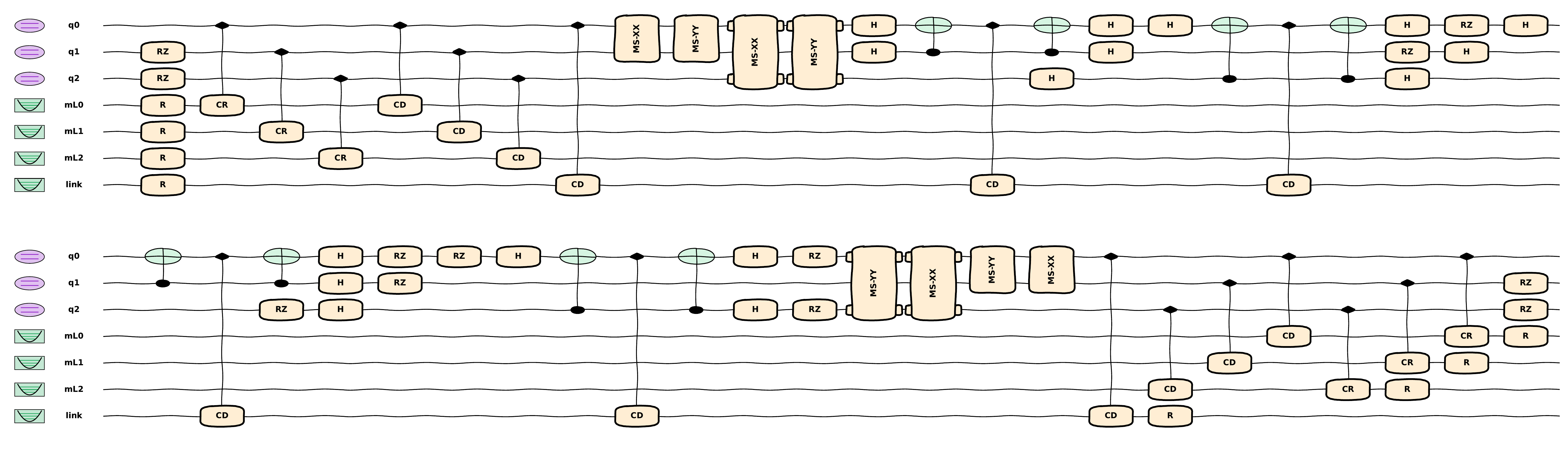}
  \caption{The same qubit step compiled to $58$ timed pulses and $12$
  virtual phases, split into two rows at the center of the symmetric
  step. Each exchange, J, becomes an MS-XX and an MS-YY window, and each
  link-modulated exchange, Jl, becomes a sandwich of Hadamard pulses,
  CNOTs, and conditional displacements on the link mode. The CD, CR, and
  R pulses map one to one onto the operations of
  Fig.~\ref{fig:circuit_qubit}.}
  \label{fig:circuit_qubit_58}
\end{figure*}

\begin{table}[h]
  \centering
  \begin{tabular}{lcc}
    \toprule
    Operation & Per step & Native realization \\
    \midrule
    Mode rotation $R$ & 8 & one pulse \\
    Dispersive $\chi\,\sigma_z\hat n$ & 6 & one pulse \\
    Conditional displacement & 8 & one pulse \\
    Detuning $r_z$ & 4 & virtual, $T=0$ \\
    Exchange $J$ & 4 & two MS windows \\
    Link exchange $J_l$ & 2 & synthesized, $14$ pulses \\
    \midrule
    Total logical & 32 & \\
    \bottomrule
  \end{tabular}
  \caption{Logical histogram of the qubit reference for one second-order
  Trotter step at $N_c=3$ and the native realization of each class.}
  \label{tab:qubit32}
\end{table}

\subsection{From $32$ logical to $58$ timed pulses}

Table~\ref{tab:qubit32} shows that $22$ of the $32$ logical operations are single-pulse operations, four are
virtual, and the remaining six exchange operations require further decomposition. The whole expansion
comes from the two exchange classes.

The exchange $J(\sigma_x\sigma_x+\sigma_y\sigma_y)/2$ is not one
M{\o}lmer--S{\o}rensen gate. A bichromatic M{\o}lmer--S{\o}rensen
interaction produces $XX$ or $YY$ for one phase setting, not their sum.
Since $[XX,YY]=0$ the sum splits exactly into two windows, one $XX$ window
and one $YY$ window, with no Trotter error. Four exchange operations
therefore become eight M{\o}lmer--S{\o}rensen windows.

The link-modulated exchange $J_l$ has no native two-ion primitive. It is
the same $XX+YY$ structure multiplied by the link-mode position, and we
build it from two sandwiches, one for $XX$ and one for $YY$, each wrapping
a conditional displacement on the link mode between basis changes on the
two qubits. Per $J_l$ this costs $8$ Hadamard pulses, $4$ CNOT, and $2$
conditional displacements, plus $4$ virtual $Z$ updates. The two $J_l$
operations therefore add $28$ timed pulses and $8$ virtual updates.

The sum is $8+6+8+8+28=58$ timed pulses and $12$ virtual updates, which
is the count in Table~\ref{tab:gatecounts} and is drawn in
Fig.~\ref{fig:circuit_qubit_58}. The qudit circuit under
the same rule gives $27$, with the two
central $C_{\Pi_x}D$ terms in one pulse. The
CNOT is priced as one primitive. On trapped ions it is itself one
M{\o}lmer--S{\o}rensen gate plus local rotations, so $58$ is a lower
bound. The runtime model of Appendix~\ref{app:runtime} prices each $J_l$
at one M{\o}lmer--S{\o}rensen window, the cost of a single native
entangling gate. The saved waveforms for CD and CR, and for every
qudit primitive, are in the repository. The M{\o}lmer--S{\o}rensen
window has no simulated waveform, its duration is the measured value of
Ref.~\cite{meth2025}.

\subsection{Executed circuit}

Both encodings were executed with finite shots on the extended HybridLane
backend at Fock cutoff $8$. The qudit circuit was run with $10^4$ shots at $41$ time points and
agrees with exact integration to $0.40$ to $0.42\%$ RMSE, with the
largest single deviation at $2.7$ standard deviations, and a $10^3$ shot
run gives three times the RMSE, as expected for shot noise.
The qubit reference was run with $10^4$ shots at $41$ time points and
agrees with exact integration to $0.35$ to $0.43\%$ RMSE, with the
largest single deviation at $3.0$ standard deviations. These runs validate
the circuit logic of both encodings. They do not validate pulse cost,
which is the subject of Appendices~\ref{app:pulses}
and~\ref{app:runtime}.

\section{Pulse-Level Parameters}
\label{app:pulses}

This appendix lists how each ISA primitive of Sec.~\ref{sec:isa} is
realized at the pulse level, the hardware parameters used, and the
fidelity reached. Most simulations use a Fock cutoff of $8$ and $200$ time
slots per pulse. The $C_{\Pi}R$ physical replay uses $400$ slots, the
two-transition $C_{\Pi_x}D$ run $20{,}000$, and the $X_E$ run is a
five-level model with no mode. Two kinds of fidelity appear. A self-consistency value
replays the final waveform through the control model the optimizer was
given, in a separate integration with $40$ substeps per slot, and only
asks whether the pulse reaches its own target. A physical replay value
propagates the same waveform through the real drive Hamiltonian with no
imposed generator, and is the number that matters. Two of the five gates, $C_{\Pi_z}D$ and its qubit twin CD,
were optimized against imposed generators, so their self-consistency
values are small by construction. The other three are flat pulses.
Physical replay values exist for $X_E$, $C_{\Pi_z}D$,
$C_{\Pi}R$, and $C_{\Pi_x}D$. For CD and CR no physical replay exists yet, and the
paper does not claim one. Optimizer self-reported fidelities are never
quoted, since they were found to run optimistic at long durations.

\subsection{Hardware parameters}

Table~\ref{tab:params} collects the parameters. Each is tagged as
measured, derived, or assumed. Measured values are taken from the
$^{40}$Ca$^+$ qudit processor of Ref.~\cite{meth2025}. Derived values
follow from those by a stated relation. Assumed values are operating
points chosen in this work and are the ones a referee should read as
choices rather than facts.

\begin{table}[h]
  \centering
  \begin{tabular}{llll}
    \toprule
    Parameter & Value & Tag & Source \\
    \midrule
    Lamb--Dicke factor $\eta$ & $0.041$ & measured & \cite{meth2025} \\
    Axial mode $\nu/2\pi$ & $0.77$~MHz & measured & \cite{meth2025} \\
    Sideband Rabi $\Omega_{\rm sb}/2\pi$ & $4$~kHz & measured & \cite{meth2025} \\
    Carrier Rabi $\Omega_c/2\pi$ & $97.56$~kHz & derived & $\Omega_{\rm sb}/\eta$ \\
    MS gate time $T_{\rm MS}$ & $200~\mu$s & measured & \cite{meth2025} \\
    Motional coherence $\tau_{\rm mot}$ & $27.4$~ms & measured & \cite{meth2025} \\
    Heating rate & $2.7$~phonons/s & measured & \cite{meth2025} \\
    Laser coherence $T_2$ & $92$~ms & measured & \cite{meth2025} \\
    $C_{\Pi}R$ detuning $\delta/2\pi$ & $40$~kHz & assumed & this work \\
    \bottomrule
  \end{tabular}
  \caption{Hardware parameters used in the pulse simulations and in the
  runtime model of Appendix~\ref{app:runtime}. The carrier Rabi frequency
  is not assumed at a round value but derived from the measured sideband
  Rabi frequency and Lamb--Dicke factor.}
  \label{tab:params}
\end{table}

\subsection{Method per gate}

Table~\ref{tab:pulsemethods} gives the pulse method, duration, and
replay fidelity for each primitive at the operating point of the
vibronic circuit. The five primitives are not all realized the same way.
$C_{\Pi_z}D$ is shaped by HyPulse~\cite{hakimi2026hypulse} at the test
angle, where the optimizer returns a flat pulse, one quadrature at the
bound. At the circuit angles all five are flat pulses by construction,
checked directly against the physical drive, and the flat pulse there
scores at least as well as an optimized one.

\begin{table*}[t]
  \centering
  \small
  \begin{tabular}{llllll}
    \toprule
    Gate & Physical drive & Pulse method & Duration & Self-consist. $1-F$ & Physical $1-F$ \\
    \midrule
    $Z_d$ & virtual phase & none & $0$ & $0$ & exact \\
    $X_E$ & carrier, two tones & flat, aggregate cap & $0.470~\mu$s & none & $7.8\times10^{-7}$ \\
    $C_{\Pi_z}D$ & $\sigma_z$ force & HyPulse & $160.4~\mu$s & $1.1\times10^{-5}$ & $1$ to $2\times10^{-3}$ \\
    CD (qubit ref.) & $\sigma_z$ force & HyPulse & $39.79~\mu$s & $4.0\times10^{-6}$ & none \\
    CR (qubit ref.) & detuned sideband & GRAPE & $24.19~\mu$s (priced as $C_{\Pi}R$) & $1.1\times10^{-5}$ & none \\
    $C_{\Pi}R$ & detuned sideband & flat & $25.000~\mu$s & $5\times10^{-14}$ & $2.4\times10^{-3}$ \\
    $C_{\Pi_x}D$ & bichromatic sideband & flat & $0.649~\mu$s & real drive & $3.5\times10^{-9}$ \\
    \bottomrule
  \end{tabular}
  \caption{Pulse realization of each primitive at the circuit operating
    point. Durations are those of the pulse runs at their test angles, the
  circuit durations used in Appendix~\ref{app:runtime} follow from the
  scaling rules given there. The
  $C_{\Pi}R$ duration is the physical dispersive time and is what the runtime
  model uses.
  Self-consistency values replay the waveform through the optimizer's own
  control model. Physical values use the real drive Hamiltonian with no
  imposed generator. $X_E$ is a flat two-tone block at the circuit
  angles. The $C_{\Pi_z}D$
  physical value uses the full Bessel drive with $5~\mu$s $\sin^2$ ramps.
    $C_{\Pi_x}D$ is optimized under the real drive, so it has no separate
  self-consistency value. The dispersive gate of the qubit reference, CR,
  is priced at the $C_{\Pi}R$ duration in Appendix~\ref{app:runtime}, its own
  optimized duration is shorter. All values use the same replay integrator with
  virtual $Z$ phases absorbed. The saved waveforms are in the repository.}
  \label{tab:pulsemethods}
\end{table*}

\paragraph{$Z_d$.} Virtual, implemented as a phase bookkeeping update on the
drives that address the level. It costs no time.

\paragraph{$X_E$.} A resonant carrier drive on the addressed optical
transition. A single transition is a flat carrier pulse with $\theta=\Omega t$,
the standard rule~\cite{schindler2013}. The multitone version that fuses
$A\!\leftrightarrow\!B$ and $A\!\leftrightarrow\!C$ is also a flat block,
two carrier tones driven at once, one resonant on each transition, with
amplitudes in the ratio $J_{AB}$ to $J_{AC}$ and the total field capped by
$\sqrt{\sum_k\Omega_k^2}\le\Omega_c$, since the two tones share one beam.
Because the drive is flat, its Hamiltonian is time independent and equal
to the summed generator, so the multitransition block is reached exactly
with no shaping. The residual error is off-resonant cross-talk, each tone
also driving the other transitions out of $A$. That cross-talk is set by
the Zeeman splittings of the $D_{5/2}$ sublevels, which
Ref.~\cite{meth2025} does not quote, so we assume a $2$ to $5$~MHz window
and take the amplitude that works across all of it. At $7\%$ of the cap,
$T=0.470~\mu$s, the block reaches $1-F=7.8\times10^{-7}$ at $2$~MHz and
$1.6\times10^{-6}$ at $5$~MHz, with auxiliary leakage below
$1.2\times10^{-6}$. This is the duration the runtime model of
Appendix~\ref{app:runtime} prices.

\paragraph{$C_{\Pi_z}D$ and CD.} Both use a $\sigma_z$ type state-dependent force
from a bichromatic drive detuned by $\pm\nu/2$ from the carrier, following
Ref.~\cite{bazavan2023}. The controls are the two quadrature amplitudes,
shaped by HyPulse~\cite{hakimi2026hypulse}. The $C_{\Pi_z}D$ waveform is then replayed against the full Bessel
expansion of the real drive, which is the physical value in
Table~\ref{tab:pulsemethods}. Unramped at $160.4~\mu$s it reaches $0.998$
to $0.999$, and with $5~\mu$s $\sin^2$ ramps at $165.4~\mu$s it reaches
$0.999$. Scaled to the circuit displacement $\alpha=0.008$ at $2.57~\mu$s
and replayed under the same real drive, the $C_{\Pi_z}D$ pulse reaches
$1-F=5\times10^{-6}$ with leakage $2\times10^{-4}$ after the virtual
$Z$ and mode frame correction, so the linear scaling used in
Appendix~\ref{app:runtime} is supported by a physical replay, and the
gate is cleaner at the circuit angle than at the test angle. The CD
waveform of the qubit reference has not been replayed
against the real drive, so only its self-consistency value is given. A residual $\sigma_y$ term
and the Stark shift on spectator levels are both absorbed into per-level
virtual $Z$ phases, the latter to $10^{-5}$.

\paragraph{$C_{\Pi}R$.} A single blue sideband tone detuned by $\delta$ from the
sideband. In the dispersive limit the drive produces the generator
$\Pi_\ell\otimes\hat n$ with rate $\chi=\Omega_{\rm sb}^2/4\delta$. We do
not impose this generator for the physical value. The pulse is a flat drive
on the real detuned sideband, and the generator is fitted from the
resulting propagator. The self-consistency value in
Table~\ref{tab:pulsemethods} comes from an imposed-generator run and only
confirms the compiler, it is not a physical fidelity. Over a
long window the fitted rate agrees with the Schrieffer--Wolff prediction to
$1.7\%$, $\chi_{\rm fit}/\chi_{\rm pred}=0.983$. The gate duration is one detuning period, $T=2\pi/\delta=25.000~\mu$s.
A population driven out of the addressed level mostly returns on that
period, so the leakage has a minimum there, and the compiled angle is set by the
sideband amplitude rather than by the duration. The three compiled
angles need $\Omega_{\rm sb}/2\pi=3.77$, $4.12$, and $4.38$~kHz, within
$10\%$ of the measured $4$~kHz, and each delivers its angle to $1.000$
with leakage $5.4\times10^{-4}$, $9.0\times10^{-4}$, and
$1.3\times10^{-3}$, and replay infidelity $1.7$, $2.4$, and
$3.0\times10^{-3}$. All spectator
level pairs stay exactly dark throughout. Mid-gate the off-diagonal
excursion into the auxiliary level reaches $0.26$ and closes again, so
endpoint and maximum over time are reported separately. The carrier light shift on the driven pair is
level-diagonal and independent of the mode occupation, so it is absorbed
as a virtual $Z_d$, and $C_{\Pi}R$ is priced as a
single pulse. The sign convention is $+\theta$ with the phase
of the drive as defined in the repository.

\paragraph{$C_{\Pi_x}D$.} Simultaneous red and blue sideband tones on the
addressed optical transition with symmetric bichromatic phase, giving the
generator $(\ket{\ell_i}\!\bra{\ell_j}+\mathrm{h.c.})\otimes\hat x$. The
simulation uses the real drive with time-dependent control operators,
$G(t)=\cos(\nu t)\,\ket{\ell_j}\!\bra{\ell_i}\otimes D(i\eta e^{i\nu t})$,
no drift term, and no rotating wave approximation. At the circuit angle
$\theta=0.0015$ a flat pulse of duration half a motional period,
$0.649~\mu$s, at $9.2\%$ of $\Omega_c$ reaches $1-F_{\rm indep}=3.5\times10^{-9}$
with leakage $6.5\times10^{-9}$. GRAPE started from this pulse changes it
by less than $10^{-11}$ in fidelity, so no shaping is needed at this angle.
The naive duration $\theta/\Omega_{\rm sb}=0.06~\mu$s does not work,
because below about half a motional period the counter-rotating terms have
not averaged and the $\hat x$ quadrature selection does not yet exist. At
large angles the co-driven carrier becomes significant. At $\theta=0.5$ a
flat pulse fails, and GRAPE under the hardware amplitude cap and a
bandwidth limit reaches $2.4\times10^{-3}$ with $2\times10^{-2}$ leakage.
The vibronic circuit uses only small angles, so this regime is not needed
here. Driving the $A\!\leftrightarrow\!B$ and $A\!\leftrightarrow\!C$
sideband pairs at the same time, as the summed central block of
Sec.~\ref{sec:methodA} requires, reaches $1-F=8\times10^{-9}$ against the
exact summed exponential with the spectator level exactly dark, at an
aggregate amplitude of $15\%$ of $\Omega_c$.

\subsection{Choice of the $C_{\Pi}R$ detuning}
\label{app:delta}

The $C_{\Pi}R$ detuning $\delta$ is an operating point chosen in this work. It is
bounded from below by the dispersive condition
$\Omega_{\rm sb}\ll\delta$ and from above by the requirement to stay on one
sideband, $\delta\ll\nu$ and below the motional mode spacing. Our repository
does not anchor the mode spectrum for the full crystal, so we use $\nu$ as
the stand-in upper bound. The value $\delta/2\pi=40$~kHz gives
$\Omega_{\rm sb}/\delta=0.1$ and $\delta/\nu=0.05$.
Table~\ref{tab:deltasweep} shows the gate across the window under the
amplitude calibration of Appendix~\ref{app:runtime}. With the duration
fixed at one detuning period, the leakage is the same at every detuning
and the choice of $\delta$ trades gate time against the sideband
amplitude the drive must supply. The operating point $40$~kHz sits
within $3\%$ of the measured amplitude.

\begin{table}[h]
  \centering
  \small
  \setlength{\tabcolsep}{4pt}
  \begin{tabular}{rrrrlc}
    \toprule
    $\delta/2\pi$ & $T$ & $\Omega_{\rm sb}/2\pi$ & $\Omega_{\rm sb}/\Omega_{\rm sb}^{\rm meas}$ & Leakage & Ratio \\
    (kHz) & ($\mu$s) & (kHz) & & & (seq.) \\
    \midrule
    20 & 50.00 & 2.06 & 0.51$\times$ & $1.3\times10^{-3}$ & 6.5 \\
    30 & 33.33 & 3.09 & 0.77$\times$ & $1.3\times10^{-3}$ & 8.7 \\
    40 & 25.00 & 4.12 & 1.03$\times$ & $1.3\times10^{-3}$ & 10.5 \\
    60 & 16.67 & 6.17 & 1.54$\times$ & $1.3\times10^{-3}$ & 13.4 \\
    80 & 12.50 & 8.23 & 2.06$\times$ & $1.3\times10^{-3}$ & 15.7 \\
    \bottomrule
  \end{tabular}
  \caption{$C_{\Pi}R$ across the detuning window under amplitude
  calibration. Every dispersive gate runs for one detuning period,
  $T=2\pi/\delta$, and the compiled angle is set by the sideband
  amplitude. In the dispersive approximation, $\theta=\chi T$ with
  $\chi=\Omega_{\rm sb}^2/4\delta$ gives
  $\Omega_{\rm sb}=\delta\sqrt{2\theta/\pi}$, which fixes the scaling.
  The amplitudes tabulated here are solved numerically under the full
  drive and run about $2\%$ above it.
  In the simulated drive model the endpoint leakage and the ratio
  $\Omega_{\rm sb}/\delta\approx0.10$ come out the same at every
  detuning, as expected when neither $\delta T$ nor
  $\Omega_{\rm sb}/\delta$ depends on $\delta$. Fixed hardware
  frequencies not in the model can break this. The
  detuning trades duration against laser power. The last column is the
  projected sequential runtime ratio at $N_c=3$ from
  Appendix~\ref{app:runtime} with all other gates unchanged. It rises
  with $\delta$. The ends of the window ask for $2.06$ and $8.23$~kHz
  against a measured $4.0$~kHz.}
  \label{tab:deltasweep}
\end{table}

\section{Projected Runtime}
\label{app:runtime}

This appendix gives the timing model behind the projected step times
quoted in the main text. Every number is read from the same table object
that produces the counts, so the two cannot drift apart.

\subsection{Rules}

Each gate instance is priced from its own compiled angle, and the step
time is the sum. The timing model uses a sequential schedule. The
electronic couplings form a star centered on ion $A$, so the two-ion
exchange gates share an ion and cannot be parallelized under a
disjoint-pair scheduling assumption. Unconditional mode rotations can in
principle be absorbed into oscillator frame tracking at no cost. We
nevertheless charge them, and removing that charge would raise the
sequential runtime ratio to $12.7$. Every exchange $J$ is two
M{\o}lmer--S{\o}rensen windows, since $XX$ and $YY$ do not fuse, and
each $J_l$ is priced at one window, the cost of a single native
entangling gate, with its synthesized form given in
Appendix~\ref{app:qubitport}.
The rules follow.
\begin{description}\setlength{\itemsep}{0pt}
  \item[\textbf{R1}] Unconditional mode rotation, $5.0~\mu$s, a scheduling allowance chosen in this work and charged on both sides. At $N_c=3$ it is $40~\mu$s of the qudit total and $40~\mu$s of the qubit total.
  \item[\textbf{R2}] $T=2\pi/\delta=25.000~\mu$s, one detuning period,
    a single flat pulse, for every dispersive gate on both sides. A population driven out of the addressed level mostly returns on that period, so the leakage has a minimum there. The compiled angle is delivered by calibrating the sideband amplitude at this fixed duration, $\Omega_{\rm sb}/2\pi=3.77$, $4.12$, and $4.38$~kHz for the three angles, each verified under the physical drive to deliver its angle to $1.000$ (Appendix~\ref{app:pulses}). The angle column is as compiled.
  \item[\textbf{R3}] $T=2|\alpha|/u$ with $u=\eta\Omega_c(J_1{+}J_3)(4\Omega_c/\nu)=6.23\times10^{-3}$~rad/$\mu$s. The CD of the qubit reference is the same physical gate and uses the same bound.
  \item[\textbf{R4}] Flat two-tone carrier block, $T=2|\theta|/(0.07\,\Omega_c)$ (Appendix~\ref{app:pulses}).
  \item[\textbf{R5}] Not angle based, half a motional period, $\pi/\nu=0.649~\mu$s, below which the quadrature selection does not exist.
  \item[\textbf{R6}] Virtual frame update, $T=0$.
  \item[\textbf{R7}] M{\o}lmer--S{\o}rensen window, $200~\mu$s, taken as angle independent following Ref.~\cite{meth2025}.
  \item[\textbf{R8}] Inside a $J_l$ sandwich. $J_l$ is priced at one M{\o}lmer--S{\o}rensen window, so these carry no separate charge.
\end{description}

\subsection{Results}

Both tables price the reference circuits, the $34$-operation qudit ISA circuit and the $32$-operation qubit ISA circuit, compiled to $36$ and $70$ tape operations. Neither table prices the multitone qudit schedule, that total is given below them.

\begin{table*}[t]
  \centering
  \footnotesize
  \setlength{\tabcolsep}{4pt}
  \begin{tabular}{r l r l r @{\hspace{1.4em}} r l r l r}
    \toprule
    \# & gate & angle & rule & $T$ ($\mu$s) & \# & gate & angle & rule & $T$ ($\mu$s) \\
    \midrule
    1 & $R$ & $+0.1237$ & R1 & 5.000 & 19 & $C_{\Pi_x}D$ & $+0.0020$ & R5 & 0.649 \\
    2 & $R$ & $+0.1245$ & R1 & 5.000 & 20 & $X_E$ & $+0.0067$ & R4 & 0.315 \\
    3 & $R$ & $+0.1230$ & R1 & 5.000 & 21 & $X_E$ & $+0.0075$ & R4 & 0.350 \\
    4 & $R$ & $+0.0150$ & R1 & 5.000 & 22 & $Z_d$ & $+0.0008$ & R6 & 0.000 \\
    5 & $C_{\Pi}R$ & $-0.0080$ & R2 & 25.000 & 23 & $Z_d$ & $+0.0007$ & R6 & 0.000 \\
    6 & $C_{\Pi}R$ & $-0.0090$ & R2 & 25.000 & 24 & $Z_d$ & $+0.0005$ & R6 & 0.000 \\
    7 & $C_{\Pi}R$ & $-0.0067$ & R2 & 25.000 & 25 & $Z_d$ & $+0.0006$ & R6 & 0.000 \\
    8 & $C_{\Pi_z}D$ & $+0.0082$ & R3 & 2.626 & 26 & $C_{\Pi_z}D$ & $+0.0034$ & R3 & 1.076 \\
    9 & $C_{\Pi_z}D$ & $+0.0073$ & R3 & 2.344 & 27 & $C_{\Pi_z}D$ & $+0.0090$ & R3 & 2.889 \\
    10 & $C_{\Pi_z}D$ & $+0.0090$ & R3 & 2.889 & 28 & $C_{\Pi_z}D$ & $+0.0073$ & R3 & 2.344 \\
    11 & $C_{\Pi_z}D$ & $+0.0034$ & R3 & 1.076 & 29 & $C_{\Pi_z}D$ & $+0.0082$ & R3 & 2.626 \\
    12 & $Z_d$ & $+0.0006$ & R6 & 0.000 & 30 & $C_{\Pi}R$ & $-0.0067$ & R2 & 25.000 \\
    13 & $Z_d$ & $+0.0005$ & R6 & 0.000 & 31 & $C_{\Pi}R$ & $-0.0090$ & R2 & 25.000 \\
    14 & $Z_d$ & $+0.0007$ & R6 & 0.000 & 32 & $C_{\Pi}R$ & $-0.0080$ & R2 & 25.000 \\
    15 & $Z_d$ & $+0.0008$ & R6 & 0.000 & 33 & $R$ & $+0.0150$ & R1 & 5.000 \\
    16 & $X_E$ & $+0.0075$ & R4 & 0.350 & 34 & $R$ & $+0.1230$ & R1 & 5.000 \\
    17 & $X_E$ & $+0.0067$ & R4 & 0.315 & 35 & $R$ & $+0.1245$ & R1 & 5.000 \\
    18 & $C_{\Pi_x}D$ & $-0.0015$ & R5 & 0.649 & 36 & $R$ & $+0.1237$ & R1 & 5.000 \\
    \midrule
    \multicolumn{10}{r}{\textbf{total\quad 210.5$~\mu$s}} \\
    \bottomrule
  \end{tabular}
  \caption{Per-instance timing of the qudit reference circuit, one
    second-order Trotter step at $N_c=3$. Angles are read from the
    compiled circuit. The $36$ rows are the $34$ reference ISA
    operations plus the two $Z_d$ that fuse, both virtual, so the
    total is unchanged.}
  \label{tab:perinstance_qudit}
\end{table*}

\begin{table*}[t]
  \centering
  \footnotesize
  \setlength{\tabcolsep}{4pt}
  \begin{tabular}{r l r l r @{\hspace{1.4em}} r l r l r}
    \toprule
    \# & gate & angle & rule & $T$ ($\mu$s) & \# & gate & angle & rule & $T$ ($\mu$s) \\
    \midrule
    1 & $R$ & $+0.1237$ & R1 & 5.000 & 36 & CNOT$^{\dagger}$ & -- & R8 & 0.000 \\
    2 & $R$ & $+0.1245$ & R1 & 5.000 & 37 & CD & $+0.0008$ & R8 & 0.000 \\
    3 & $R$ & $+0.1230$ & R1 & 5.000 & 38 & CNOT$^{\dagger}$ & -- & R8 & 0.000 \\
    4 & $R$ & $+0.0150$ & R1 & 5.000 & 39 & H$^{\dagger}$ & -- & R8 & 0.000 \\
    5 & $r_z$ & $+0.0005$ & R6 & 0.000 & 40 & H$^{\dagger}$ & -- & R8 & 0.000 \\
    6 & $r_z$ & $+0.0007$ & R6 & 0.000 & 41 & $r_z$ & $+1.5708$ & R6 & 0.000 \\
    7 & CR & $+0.0160$ & R2 & 25.000 & 42 & $r_z$ & $+1.5708$ & R6 & 0.000 \\
    8 & CR & $+0.0180$ & R2 & 25.000 & 43 & $r_z$ & $-1.5708$ & R6 & 0.000 \\
    9 & CR & $+0.0135$ & R2 & 25.000 & 44 & $r_z$ & $-1.5708$ & R6 & 0.000 \\
    10 & CD & $+0.0082$ & R3 & 2.626 & 45 & H$^{\dagger}$ & -- & R8 & 0.000 \\
    11 & CD & $+0.0073$ & R3 & 2.344 & 46 & H$^{\dagger}$ & -- & R8 & 0.000 \\
    12 & CD & $+0.0090$ & R3 & 2.889 & 47 & CNOT$^{\dagger}$ & -- & R8 & 0.000 \\
    13 & CD & $+0.0034$ & R3 & 1.076 & 48 & CD & $+0.0010$ & R8 & 0.000 \\
    14 & MS$_{XX}$ & $+0.0075$ & R7 & 200.000 & 49 & CNOT$^{\dagger}$ & -- & R8 & 0.000 \\
    15 & MS$_{YY}$ & $+0.0075$ & R7 & 200.000 & 50 & H$^{\dagger}$ & -- & R8 & 0.000 \\
    16 & MS$_{XX}$ & $+0.0067$ & R7 & 200.000 & 51 & H$^{\dagger}$ & -- & R8 & 0.000 \\
    17 & MS$_{YY}$ & $+0.0067$ & R7 & 200.000 & 52 & $r_z$ & $+1.5708$ & R6 & 0.000 \\
    18 & H$^{\dagger}$ & -- & R8 & 0.000 & 53 & $r_z$ & $+1.5708$ & R6 & 0.000 \\
    19 & H$^{\dagger}$ & -- & R8 & 0.000 & 54 & MS$_{YY}$ & $+0.0067$ & R7 & 200.000 \\
    20 & CNOT$^{\dagger}$ & -- & R8 & 0.000 & 55 & MS$_{XX}$ & $+0.0067$ & R7 & 200.000 \\
    21 & CD & $+0.0008$ & R8 & 0.000 & 56 & MS$_{YY}$ & $+0.0075$ & R7 & 200.000 \\
    22 & CNOT$^{\dagger}$ & -- & R8 & 0.000 & 57 & MS$_{XX}$ & $+0.0075$ & R7 & 200.000 \\
    23 & H$^{\dagger}$ & -- & R8 & 0.000 & 58 & CD & $+0.0034$ & R3 & 1.076 \\
    24 & H$^{\dagger}$ & -- & R8 & 0.000 & 59 & CD & $+0.0090$ & R3 & 2.889 \\
    25 & H$^{\dagger}$ & -- & R8 & 0.000 & 60 & CD & $+0.0073$ & R3 & 2.344 \\
    26 & H$^{\dagger}$ & -- & R8 & 0.000 & 61 & CD & $+0.0082$ & R3 & 2.626 \\
    27 & CNOT$^{\dagger}$ & -- & R8 & 0.000 & 62 & CR & $+0.0135$ & R2 & 25.000 \\
    28 & CD & $+0.0010$ & R8 & 0.000 & 63 & CR & $+0.0180$ & R2 & 25.000 \\
    29 & CNOT$^{\dagger}$ & -- & R8 & 0.000 & 64 & CR & $+0.0160$ & R2 & 25.000 \\
    30 & H$^{\dagger}$ & -- & R8 & 0.000 & 65 & $r_z$ & $+0.0007$ & R6 & 0.000 \\
    31 & H$^{\dagger}$ & -- & R8 & 0.000 & 66 & $r_z$ & $+0.0005$ & R6 & 0.000 \\
    32 & $r_z$ & $-1.5708$ & R6 & 0.000 & 67 & $R$ & $+0.0150$ & R1 & 5.000 \\
    33 & $r_z$ & $-1.5708$ & R6 & 0.000 & 68 & $R$ & $+0.1230$ & R1 & 5.000 \\
    34 & H$^{\dagger}$ & -- & R8 & 0.000 & 69 & $R$ & $+0.1245$ & R1 & 5.000 \\
    35 & H$^{\dagger}$ & -- & R8 & 0.000 & 70 & $R$ & $+0.1237$ & R1 & 5.000 \\
    \midrule
    \multicolumn{10}{l}{$J_l$, $2\times$ one MS window (rule R7), each window standing for $8$~H, $4$~CNOT and $2$~CD priced together as one MS gate \hfill $2\times200.0$} \\
    \multicolumn{10}{r}{\textbf{total\quad 2207.9$~\mu$s}} \\
    \bottomrule
  \end{tabular}
  \caption{Per-instance timing of the qubit reference circuit, one
    second-order Trotter step at $N_c=3$, compiled to $70$
    operations. The Hadamards, CNOTs and conditional displacements of
    the two $J_l$ sandwiches are listed for completeness but carry no
    separate charge. $J_l$ is priced at one M{\o}lmer--S{\o}rensen
    window each, added below the rows. $^{\dagger}$Inside a $J_l$
    sandwich, charged as part of that term's single $200~\mu$s
    M{\o}lmer--S{\o}rensen window, not separately.}
  \label{tab:perinstance_qubit}
\end{table*}

\noindent The projected ratio is $T_{\rm qubit}/T_{\rm qudit} = 2207.9/210.5 = 10.5$.
Reading the two tables side by side, every gate class except one is
priced identically in the two encodings, the dispersive gates, the
conditional displacements, the rotations and the virtual phases. The one
class that differs is the electronic exchange, which is a local carrier
pulse $X_E$ on the qudit and a two-ion M{\o}lmer--S{\o}rensen window on
the qubit. Those windows are $2000$ of the $2208~\mu$s, and that
difference accounts for the projected ratio.

\noindent Under the multitone schedule of Table~\ref{tab:gatecounts} the same step is $27$ control blocks. The two $X_E$ pairs are each driven as one flat block at the aggregate field, and the $C_{\Pi_x}D$ terms fuse into a single summed exponential. Only those two classes change. The qudit total becomes $209.5~\mu$s and the ratio $2207.9/209.5 = 10.5$.

\begin{table*}[t]
  \centering
  \begin{tabular}{ccccccc}
\toprule
$N_c$ & Qudit ops & Qubit ops & $T_{\rm qubit}$ ($\mu$s) &
MS windows & $T_{\rm qudit}$ ($\mu$s) & Ratio \\
\midrule
3  & 34  & 32  & 2208 & 10 & 210 & 10.5 \\
6  & 82  & 80  & 6045 & 28 & 453 & 13.3 \\
8  & 114 & 112 & 8603 & 40 & 615 & 14.0 \\
12 & 178 & 176 & 13718 & 64 & 939 & 14.6 \\
\bottomrule
\end{tabular}
  \caption{Projected time for one second-order Trotter step under the
  per-instance rules above, for the one-dimensional chain extension of
  Ref.~\cite{vu2025} with one high frequency mode per chromophore and
  one low frequency mode per interior chromophore, $2N_c-2$ modes in
  all. Operation counts are logical and grow linearly in $N_c$.
  At $N_c=3$, every entry is
  read from a compiled circuit. The larger $N_c$ are projected, charging
  each gate class its $N_c=3$ mean instance time at the instance count its
  closed form gives, since this work compiles the qudit circuit only at
  $N_c=3$. The ratio is $T_{\rm qubit}/T_{\rm qudit}$. With the multitone
  $27$-control-block qudit schedule at $N_c=3$, the qudit step is
  $209.5~\mu$s and the ratio $10.5$. All rows assume the manifold fits in
  one ion, $d_{\rm eff}\ge N_c+2$. Rows beyond $N_c=5$ are not
  $^{40}$Ca$^+$ predictions. The split case is given in
  Sec.~\ref{sec:multiion}.}
  \label{tab:runtime}
\end{table*}

Table~\ref{tab:runtime} gives the result. At $N_c=3$ the qudit step is
projected at $210~\mu$s against $2208~\mu$s for the sequential
qubit reference, a ratio of $10.5$.

The sequential runtime ratio increases from $10.5$ at $N_c=3$ to
$14.6$ at $N_c=12$ under the stated timing model. Rows beyond $N_c=5$
assume an ion with $d_{\rm eff}\ge N_c+2$ and are not $^{40}$Ca$^+$
predictions. When the manifold splits over several ions the ratio drops,
see Sec.~\ref{sec:multiion}.

The dispersive detuning behind $C_{\Pi}R$ is a chosen operating point, whose
choice and sweep are given in Appendix~\ref{app:delta}. Across that window the sequential
ratio at $N_c=3$ ranges from $6.5$ at $20$~kHz to $15.7$ at $80$~kHz, with
the high detuning end requiring about twice the measured sideband
amplitude.

\end{document}